\DeclareSymbolFont{rsfso}{U}{rsfso}{m}{n}
\DeclareSymbolFontAlphabet{\mathscr}{rsfso}
\newcommand{\revision}[1]{#1} 
\newtheorem{theorem}{Theorem}[section]
\newtheorem{corollary}[theorem]{Corollary}
\newtheorem{lemma}[theorem]{Lemma}
\newtheorem{definition}{Definition}[section]
\newtheorem{notation}{Notation}[section]
\newtheorem{remark}{Remark}[section]
\newcommand{\KK}{\mathbb{K}}
\newcommand{\RR}{\mathbb{R}}
\newcommand{\NN}{\mathbb{N}}
\newcommand{\II}{\mathbb{I}}
\newcommand{\BB}{\mathbb{B}}
\newcommand{\grad}{\nabla}
\newcommand{\Set}[0]{\mathbf{Set}}
\newcommand{\SubSet}[0]{\mathbf{SubSet}}
\newcommand{\QBS}[0]{\mathbf{QBS}}
\newcommand{\plots}[0]{\mathcal{P}}
\newcommand{\ter}[0]{t}
\newcommand{\sem}[1]{\llbracket #1\rrbracket}
\newcommand{\type}[0]{\tau}
\newcommand{\var}[0]{x}
\newcommand{\expect}[0]{\mathbb{E}}
\newcommand{\op}[0]{op}
\newcommand{\pop}[1]{\ensuremath{\partial_i}\op}
\newcommand{\ad}[1]{\ensuremath{\mathcal{D}\{#1\}}}
\newcommand{\loss}{\mathcal{L}}
\newcommand{\syntaxcolor}[0]{\color{blue!70!black}}
\newcommand{\constantcolor}[1]{\mathbf{\syntaxcolor{#1}}}
\newcommand{\colforsyntax}[1]{\mathbf{\syntaxcolor{#1}}}
\newcommand{\True}[0]{\colforsyntax{True}}
\newcommand{\False}[0]{\colforsyntax{False}}
\newcommand{\haskdo}[0]{\colforsyntax{do}}
\newcommand{\dhaskdo}[0]{\colforsyntax{do}_\mathcal{D}}
\newcommand{\llet}[0]{\colforsyntax{let}}
\newcommand{\iin}[0]{\colforsyntax{in}}
\newcommand{\iif}[0]{\colforsyntax{if}}
\newcommand{\then}[0]{\colforsyntax{then}}
\newcommand{\eelse}[0]{\colforsyntax{else}}
\newcommand{\iifthenelse}[3]{\iif~#1~\then~#2~\eelse~#3}
\newcommand{\return}[0]{\colforsyntax{return}}
\newcommand{\dreturn}[0]{\colforsyntax{return}_\mathcal{D}}
\newcommand{\vreturn}[0]{\return_\ver}
\newcommand{\vhaskdo}[0]{\haskdo_\ver}
\newcommand{\fst}[0]{\colforsyntax{fst}}
\newcommand{\snd}[0]{\colforsyntax{snd}}
\newcommand{\splits}[0]{\mathbf{split}}
\newcommand{\logs}{\texttt{log}}
\newcommand{\exps}{\texttt{exp}}
\newcommand{\sins}{\texttt{sin}}
\newcommand{\coss}{\texttt{cos}}
\newcommand{\pows}{\texttt{pow}}
\newcommand{\der}{\mathcal{D}}
\newcommand{\ver}{\mathcal{V}}
\newcommand{\sample}[0]{\constantcolor{uniform}}
\newcommand{\normalreparam}[0]{\constantcolor{normal_\texttt{REPARAM}}}
\newcommand{\normalreinforce}[0]{\constantcolor{normal_\texttt{REINFORCE}}}
\newcommand{\geometricreinforce}[0]{\constantcolor{geometric_\texttt{REINFORCE}}}
\newcommand{\flipreinforce}[0]{\constantcolor{flip_\texttt{REINFORCE}}}
\newcommand{\flipenum}[0]{\constantcolor{flip_\texttt{ENUM}}}
\newcommand{\baseline}[0]{\constantcolor{baseline}}
\newcommand{\addcost}[0]{\constantcolor{add\_cost}}
\newcommand{\poissonweak}[0]{\constantcolor{poisson_\texttt{WEAK\_DERIV}}}
\newcommand{\gammaimplicit}[0]{\constantcolor{gamma_\texttt{IMPLICIT}}}
\newcommand{\importance}[0]{\constantcolor{importance}}
\newcommand{\reinforce}[0]{\constantcolor{reinforce}}
\newcommand{\leaveoneout}[0]{\constantcolor{leave\_one\_out}}
\newcommand{\smc}[0]{\constantcolor{smc}}
\newcommand{\implicitdiff}[0]{\constantcolor{implicit\_differentiation}}
\newcommand{\reparamreject}[0]{\constantcolor{reparam\_rejection\_sampler}}
\newcommand{\forget}[1]{\lfloor #1 \rfloor}
\newcommand{\mbe}{\constantcolor{\mathbbb{E}}}
\newcommand{\eRR}{\widetilde{\RR}}
\newcommand{\deRR}{\eRR_\mathcal{D}}
\newcommand{\posreal}[0]{\RR_{> 0}}
\newcommand{\eplus}{+^{\eRR}}
\newcommand{\etimes}{\times^{\eRR}}
\newcommand{\eexp}{exp^{\eRR}}
\newcommand{\minibatch}{\constantcolor{minibatch}}
\newcommand{\exact}{\constantcolor{exact}}
\newcommand{\monadsymbol}{P}
\newcommand{\pmonad}{\monadsymbol\,}
\newcommand{\dpmonad}{\monadsymbol_\der\,}
\newcommand{\vpmonad}{\monadsymbol_\ver\,}
\newcommand{\rel}{\mathscr{R}}
\newcommand{\Gl}[0]{\mathbf{Gl}}
\def\PYG@reset{\let\PYG@it=\relax \let\PYG@bf=\relax%
    \let\PYG@ul=\relax \let\PYG@tc=\relax%
    \let\PYG@bc=\relax \let\PYG@ff=\relax}
\def\PYG@tok#1{\csname PYG@tok@#1\endcsname}
\def\PYG@toks#1+{\ifx\relax#1\empty\else%
    \PYG@tok{#1}\expandafter\PYG@toks\fi}
\def\PYG@do#1{\PYG@bc{\PYG@tc{\PYG@ul{%
    \PYG@it{\PYG@bf{\PYG@ff{#1}}}}}}}
\def\PYG#1#2{\PYG@reset\PYG@toks#1+\relax+\PYG@do{#2}}
\def\PYGdefault@reset{\let\PYGdefault@it=\relax \let\PYGdefault@bf=\relax%
    \let\PYGdefault@ul=\relax \let\PYGdefault@tc=\relax%
    \let\PYGdefault@bc=\relax \let\PYGdefault@ff=\relax}
\def\PYGdefault@tok#1{\csname PYGdefault@tok@#1\endcsname}
\def\PYGdefault@toks#1+{\ifx\relax#1\empty\else%
    \PYGdefault@tok{#1}\expandafter\PYGdefault@toks\fi}
\def\PYGdefault@do#1{\PYGdefault@bc{\PYGdefault@tc{\PYGdefault@ul{%
    \PYGdefault@it{\PYGdefault@bf{\PYGdefault@ff{#1}}}}}}}
\def\PYGdefault#1#2{\PYGdefault@reset\PYGdefault@toks#1+\relax+\PYGdefault@do{#2}}
\begin{document}

\title{ADEV: Sound Automatic Differentiation of Expected Values of Probabilistic Programs}

\author{Alexander K. Lew}\authornote{Equal contribution}
\email{alexlew@mit.edu} 
\affiliation{
  \institution{MIT}
  \city{Cambridge}
  \state{MA}
  \country{USA}                
}

\author{Mathieu Huot}\authornotemark[1]
\email{mathieu.huot@cs.ox.ac.uk}  
\affiliation{
  \institution{Oxford University} 
  \city{Oxford}
  \country{UK}          
}
\author{Sam Staton}
\email{sam.staton@cs.ox.ac.uk}  
\affiliation{
  \institution{Oxford University} 
  \city{Oxford}
  \country{UK}          
}
\author{Vikash K. Mansinghka}
\email{vkm@mit.edu}  
\affiliation{
  \institution{MIT} 
  \city{Cambridge}
  \state{MA}
  \country{USA}          
}
\begin{abstract}

Optimizing the expected values of probabilistic processes is a central problem in computer science and its applications, arising in fields ranging from artificial intelligence to operations research to statistical computing. Unfortunately, automatic differentiation techniques developed for deterministic programs do not in general compute the correct gradients needed for widely used solutions based on gradient-based optimization.

In this paper, we present ADEV, an extension to forward-mode AD that correctly differentiates the expectations of probabilistic processes  represented as programs that make random choices. Our algorithm is a source-to-source program transformation on an expressive, higher-order language for probabilistic computation, with both discrete and continuous probability distributions. The result of our transformation is a new probabilistic program, whose expected return value is the derivative of the original program's expectation. This output program can be run to generate unbiased Monte Carlo estimates of the desired gradient, which can then be used within the inner loop of stochastic gradient descent. We prove ADEV correct using logical relations over the denotations of the source and target probabilistic programs. Because it modularly extends forward-mode AD, our algorithm lends itself to a concise implementation strategy, which we exploit to develop a prototype in just a few dozen lines of Haskell (\href{https://github.com/probcomp/adev}{https://github.com/probcomp/adev}).


\end{abstract}

\begin{CCSXML}
<ccs2012>
<concept>
<concept_id>10002950.10003705.10003708</concept_id>
<concept_desc>Mathematics of computing~Statistical software</concept_desc>
<concept_significance>300</concept_significance>
</concept>
<concept>
<concept_id>10003752.10010124.10010131.10010133</concept_id>
<concept_desc>Theory of computation~Denotational semantics</concept_desc>
<concept_significance>300</concept_significance>
</concept>
<concept>
<concept_id>10010147.10010148</concept_id>
<concept_desc>Computing methodologies~Symbolic and algebraic manipulation</concept_desc>
<concept_significance>300</concept_significance>
</concept>
<concept>
<concept_id>10010147.10010257</concept_id>
<concept_desc>Computing methodologies~Machine learning</concept_desc>
<concept_significance>300</concept_significance>
</concept>
</ccs2012>
\end{CCSXML}

\ccsdesc[300]{Mathematics of computing~Statistical software}
\ccsdesc[300]{Theory of computation~Denotational semantics}
\ccsdesc[300]{Computing methodologies~Symbolic and algebraic manipulation}
\ccsdesc[300]{Computing methodologies~Machine learning}

\keywords{probabilistic programming, automatic differentiation, denotational semantics, logical relations, functional programming, correctness, machine learning theory}  

\maketitle

\section{Introduction}

Specifying and solving optimization problems has never been easier, thanks in large part to the maturation of programming languages and libraries that support \textit{automatic differentiation} (AD). With AD, users can specify objective functions as programs, then automate the construction of programs for computing their derivatives. These derivatives can be fed into optimization algorithms, such as gradient descent or ADAM, to find local minima or maxima of the original objective function. 

\noindent Unfortunately, there is an important class of functions that today's AD systems {\it cannot} differentiate correctly: those defined as \textit{expected values} of probabilistic processes. Consider, for example, the reinforcement learning problem of optimizing the parameters of a robot's algorithm, based on simulations of its behavior in random environments. The practitioner hopes maximize the \textit{expected} (i.e., average) reward across all possible runs of the simulator. But obtaining gradients of this objective is not straight-forward; naively applying AD to the stochastic reward simulator will in general give incorrect results. Instead, practitioners often resort to hand-derived gradient estimators that they must manually prove correct. And this dilemma is hardly unique to robotics: the optimization of expected values is a ubiquitous problem, as the diverse examples in
Table~\ref{tab:example-applications} 
attest.

\begin{figure}
    \centering
    \includegraphics[width=0.95\linewidth]{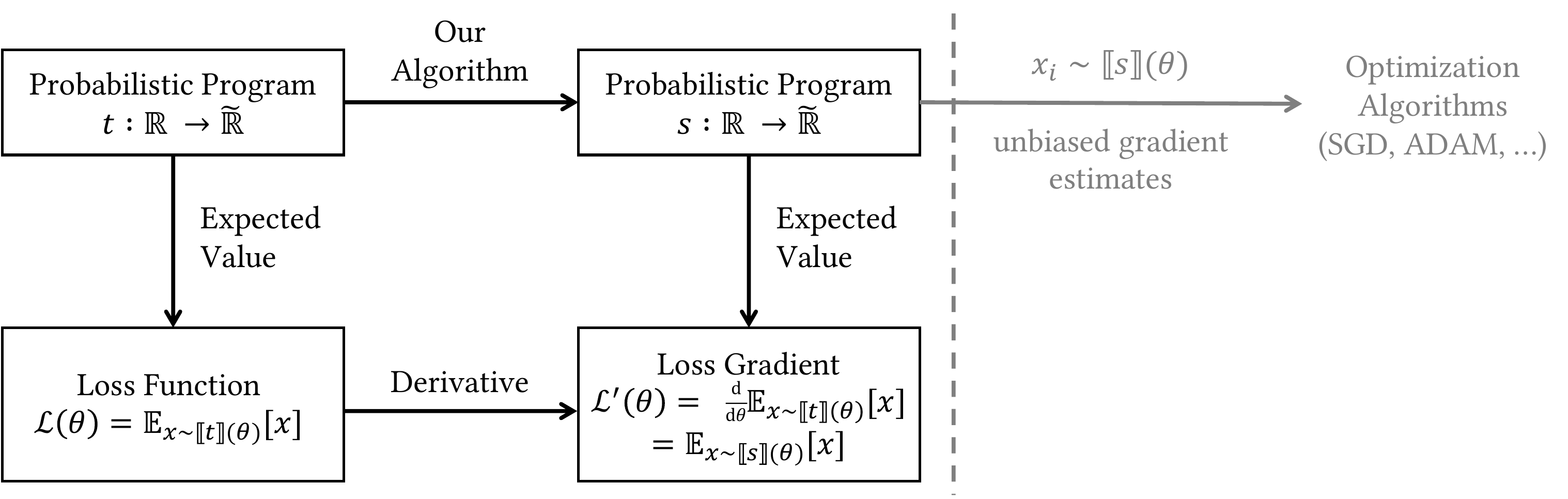}
        \vspace{-4mm}
    \caption{Our approach to differentiating loss functions defined as expected values. Our algorithm takes as input a probabilistic program $t$, which, given a parameter of type $\RR$ (or a subtype), outputs a value of type $\eRR$, which represents \textit{probabilistic estimators} of losses (Def.~\ref{def:eRRdef}). We translate $t$ to a new probabilistic program $s$, whose expected return value is the \textit{derivative} of $t$'s expected return value. Running $s$ yields provably unbiased estimates $x_i$ of the loss's derivative, which can be used to guide optimization.}\vspace{-2mm}
    \label{fig:ad-diagram}
\end{figure}

\begin{figure}
\centering
\begin{minipage}[t]{0.7\linewidth}
\footnotesize{
    \begin{tabular}{|l|l|l|}
         \multicolumn{1}{c}{\bf Input Loss as a} & 
         \multicolumn{1}{c}{\bf AD on deterministic} & 
         \multicolumn{1}{c}{\bf ADEV} \\
         \multicolumn{1}{c}{\bf Probabilistic Program} & 
         \multicolumn{1}{c}{\bf parts only (incorrect)} & 
         \multicolumn{1}{c}{\bf (correct derivative)} \\\hline
    \hspace{-1mm}\begin{tabular}{l}
    $\loss = \lambda \theta:\II.\,  \mbe (\haskdo~\{$\\
    $\quad b \gets \texttt{flip}\,\theta$\\
    $\quad \iif~b~\then$\\
    $\quad\quad \return~0$\\
    $\quad\eelse$\\
    $\quad\quad \return~-(\theta \div {2})$\\ 
    $\})$\\
     \\
     \\
    \end{tabular}
    & 
    \hspace{-1mm}\begin{tabular}{l}
    $\loss' = \lambda \theta:\II. \, \mbe(\haskdo~\{$\\
    $\quad b \gets \texttt{flip}\,\theta$\\
    $\quad \iif~b~\then$\\
    $\quad\quad \return~0$\\
    $\quad\eelse$\\
    $\quad\quad \return~-1 \div {2}$\\ 
    $\})$\\
    \\
    \\
    \end{tabular}
    & 
    \hspace{-1mm}\begin{tabular}{l}
    $\loss' = \lambda \theta:\II. \mbe(\haskdo~\{$\\
    $\quad b \gets \texttt{flip}\,\theta$\\
    $\quad \iif~b~\then$\\
    $\quad\quad \return~0$\\
    $\quad\eelse$\\
    $\quad\quad\llet~\delta p = 1 \div (\theta-1)$\\
    $\quad\quad\llet~\delta l = -1 \div 2$\\
    $\quad\quad\llet~l = -\theta \div 2$\\
    $\quad\quad \return~\delta l + l \times \delta p\})$
    \end{tabular}\\\hline
    \multicolumn{3}{c}{}   \vspace{-3mm}\\
    \multicolumn{1}{c}{$\loss(\theta) = \frac{\theta^2 - \theta}{2}$} &
    \multicolumn{1}{c}{$\loss_{\textit{naive}}'(\theta) = \frac{\theta-1}{2}$}
    &
    \multicolumn{1}{c}{$\loss'_{\textit{correct}}(\theta) = \theta - \frac{1}{2}$}
    \end{tabular}
      \vspace{-2mm}
    }
\end{minipage}%
\begin{minipage}{0.3\linewidth}
\includegraphics[width=0.9\linewidth]{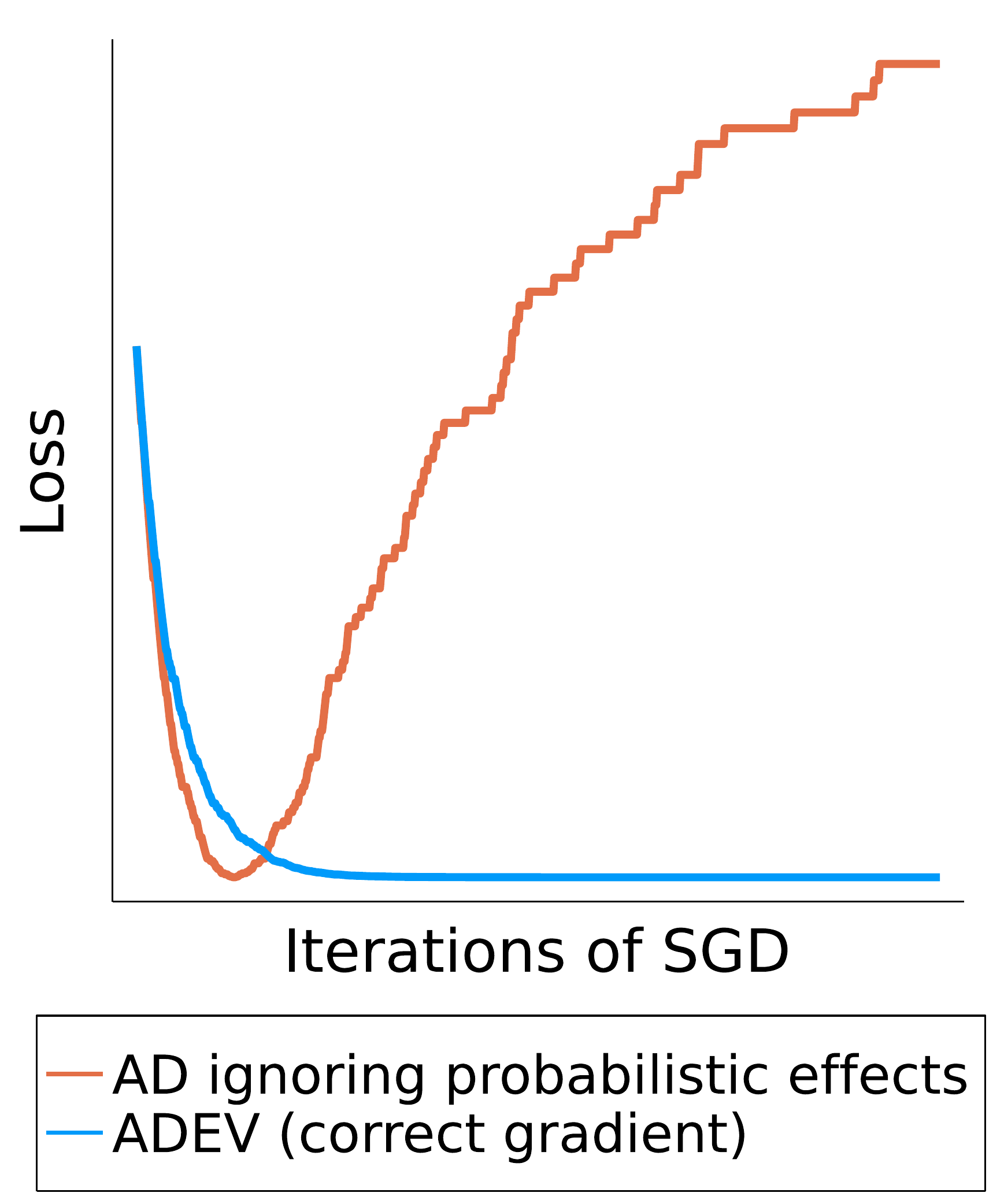}
\end{minipage}
    \vspace{-2mm}
\caption{If probabilistic constructs are ignored, AD may produce incorrect results. In this case, standard AD fails to account for $\theta$'s 
effect on the {\it probability} of entering each branch. ADEV, by contrast, correctly accounts for the probabilistic effects, generating similar code to what a practitioner might hand-derive. \textit{Right:} Correct gradients are often crucial for downstream applications, e.g. optimization via stochastic gradient descent.}\vspace{-6mm}
\label{fig:flip-example}
\end{figure} 

In this paper, we present ADEV, a new AD algorithm that {correctly} computes derivatives of the expected values of probabilistic programs. Our general approach is sketched in Figure~\ref{fig:ad-diagram}:
\begin{itemize}[leftmargin=*]
    \item The user provides a program $t$ encoding a probabilistic process dependent on a parameter $\theta$.
    
    \item The user's goal is to find $\theta^* = \text{argmin}_\theta\, \loss(\theta)$, where the \textit{loss function} $\loss$ maps a parameter value $\theta$ to the \textit{expected return value} of $t$, run on input $\theta$.
    
    \item Applying ADEV to $t$ yields a new probabilistic program $s$. Our algorithm is \textit{correct} in that the expected return value of $s$ at input $\theta$ is exactly the derivative $\loss'(\theta)$ of the loss.
    
    \item Even if $\loss'(\theta)$ cannot be evaluated exactly, users can {\it run} the probabilistic program $s$ to simulate provably unbiased estimates of $\loss'(\theta)$, which can be used for stochastic optimization.
\end{itemize}

Figure~\ref{fig:flip-example} illustrates our method on a toy example
. The loss function $\mathcal{L}$ is defined as the expectation of a program that flips a biased coin, with probability-of-heads $\theta$. Depending on the outcome, we receive either 0 loss (the `heads' case), or a \textit{negative} loss of $-\frac{\theta}{2}$ (indicating a positive reward). The problem is to find the $\theta$ that minimizes expected loss. Intuitively, the optimal strategy must trade off the \textit{benefits} of increasing $\theta$ (higher payoff in the `tails' case) with its \textit{drawbacks} (lower probability of entering the `tails' case in the first place). The expected loss $\loss(\theta) = \frac{\theta^2-\theta}{2}$ is minimized at $\theta = 0.5$. 

Applying AD to only the deterministic parts of $f$ fails to account for the effect of increasing $\theta$ on the \textit{probability} of entering the high-reward branch. The resulting (incorrect) gradient is negative for all $\theta \in (0, 1)$; optimizing with it significantly overshoots the optimal value of $0.5$. By contrast, ADEV automatically introduces additional terms to account for the dependence of $b$ on $\theta$, leading to a gradient that can be soundly used to optimize the loss. 
\begin{table} 
\footnotesize{
\caption{The need to differentiate expected values of probabilistic processes is ubiquitous in many fields, including machine learning, operations research, and finance~\citep{mohamed2020monte}.}\label{tab:example-applications}
\vspace{-3mm}
\begin{tabular}{p{0.2\linewidth}p{0.35\linewidth}p{0.15\linewidth}p{0.19\linewidth}}
\toprule
                                                    \textbf{Application}                                             & \textbf{Probabilistic Process}                                                                                                                                                        & \textbf{Expected Value}                             & \textbf{Use of Gradients}                                                                            \\\midrule
{Supervised learning}                                                                     & Evaluate loss on random minibatch                                                                                                                            & Loss on all data                      & Minimize total loss                                                                                        \\
{Reinforcement learning}                                                             & Measure reward in simulated environment                                                                                & Average reward                & Maximize reward                                                                                    \\
{Variational Bayes}                                                          & Sample variational family, estimate ELBO                                                                                                                               & ELBO objective                             & Minimize $KL(q || p)$                                                                                              \\
{Train on synthetic data}                                                            & Generate synthetic data and evaluate loss           & Expected loss under simulator & Minimize average loss                                                                                      \\
{Sensitivity analysis in computational finance} & Simulate future option prices, to assess investment risk                                                                                                            & Expected risk                            & Analyze risk assessment's sensitivity to pricing assumptions   \\
{Operations research} & Simulate efficiency of a customer queue                                                     & Average efficiency        & Maximize efficiency                                                                                        \\
{Bayesian optimization}                                                                   & Sample current belief distribution about a function's value at a candidate point, and evaluate whether the point would be a new `best parameter value' & Probability of improvement over current best parameter value                        &Choose next sample point to maximize probability of improvement\\\bottomrule
\end{tabular}
}
\vspace{-3mm}
\end{table}

Our translation of $\loss$ into $\loss'$ may appear complex and non-local, but in fact, we arrived at our algorithm by modularly extending a standard `dual-number' forward-mode AD macro (e.g., as presented by~\citet{huot2020correctness}) to handle probabilistic types and terms. As in standard forward-mode AD, our translation is mostly structure-preserving, with almost all the action happening in the translation of primitives, like $\texttt{flip}$ in this example. (The term we display for $\loss'$ in Figure~\ref{fig:flip-example} has been further simplified for clarity, via monad laws and $\beta$-reductions; see Figure~\ref{fig:04reduction-example}.)
\\

\noindent\textbf{\textit{Contributions.}} We present ADEV, a new AD algorithm for correctly automating the derivatives of the expectations of expressive probabilistic programs. It has the following desirable properties:

\begin{enumerate}[leftmargin=*]
    \item {\bf Provably correct:} It comes with guarantees relating the output program's expectation to the input program's expectation's derivative (Theorem~\ref{thm:full-correctness}).
    \item {\bf Modular:} ADEV is a modular extension to traditional forward-mode AD, and can be modularly extended to support new gradient estimators and probabilistic primitives (Table~\ref{tab:extensions}).
    \item {\bf Compositional:} ADEV's translation is local, in that all the action happens in the translation of primitives (as in standard forward-mode AD).
    \item {\bf Flexible:} ADEV provides levers for navigating trade-offs between the variance and computational cost of the output  program, viewed as an unbiased gradient estimator.
    \item {\bf Easy to implement:} It is easy to modify existing forward-mode implementations to support ADEV — our Haskell prototype is just a few dozen lines (Appx.~\ref{appx:impl}, \href{http://github.com/probcomp/adev}{github.com/probcomp/adev}).
\end{enumerate}

\noindent
\begin{minipage}{0.2\textwidth}
\footnotesize{
\vspace{0mm}
\begin{center}
    \textit{\quad Recipe for New\\\quad ADEV Modules}
\end{center}
\vspace{3mm}
\begin{tabular}{p{\textwidth}}
\framebox{\parbox{\dimexpr\linewidth-2\fboxsep-2\fboxrule}{%
  Add new types, constructs, or primitives}}\\[5mm]
\framebox{\parbox{\dimexpr\linewidth-2\fboxsep-2\fboxrule}{%
Extend macro $\ad{\cdot}$ to new constructs}}\\[5mm]
\framebox{\parbox{\dimexpr\linewidth-2\fboxsep-2\fboxrule}{%
For new types $\tau$, define specification $\rel_\tau$}}\\[5mm]
\framebox{\parbox{\dimexpr\linewidth-2\fboxsep-2\fboxrule}{%
Prove new constructs preserve correctness}}
\end{tabular}
}
\end{minipage}\hfill
\begin{minipage}{0.8\textwidth}
\vspace{-4mm}
\begin{table}[H]
\footnotesize{
\caption{ADEV is implemented modularly and admits modular extensions.}\label{tab:extensions}
\vspace{-3mm}
\begin{tabular}{p{0.75\textwidth}p{0.1\textwidth}}
\toprule
                                                    \textbf{Modular language extension}                                             & \textbf{Reference}                                                                                                    \\\midrule
 
Real-valued probabilistic primitives + combinators & Sec.~\ref{sec:combinator}\\
 Discrete prob. prog. + enumeration + REINFORCE~\citep{ranganath2014black} & Sec.~\ref{sec:discrete}\\
 Continuous prob. prog. + REPARAM~\citep{kingma2013auto} & Sec.~\ref{sec:continuous}\\
 Discontinuous operations (e.g. $\leq$) & Sec.~\ref{sec:general}\\
 Control variates (baselines) for variance reduction~\citep{mnih2014neural} & Appx.~\ref{sub:baseline}\\
Variance reduction via dependency tracking~\citep{schulman2015gradient} & Appx.~\ref{sub:scg}\\
 Storchastic~\citep{krieken2021storchastic} multi-sample estimators & Appx.~\ref{sub:storchastic}\\
 Higher-order primitive for differentiable particle filter~\citep{scibior2021differentiable} & Appx.~\ref{sub:particle-filter}\\
 Implicit reparameterization gradients~\citep{figurnov2018implicit} & Appx.~\ref{sub:implicit-diff}\\
 Weak or measure-valued derivatives~\citep{heidergott2000measure} & Appx.~\ref{sub:weak-deriv}\\
 Reparameterized rejection gradients~\citep{naesseth2017reparameterization} & Appx.~\ref{sub:reject}\\\bottomrule
\end{tabular}
}
\end{table}
\end{minipage}
\\\vspace{-2mm}

\noindent\textbf{\textit{Key challenges.}} To develop our algorithm, we had to overcome four key technical challenges:
\begin{enumerate}[leftmargin=*]
    \item \textbf{Challenge: Reasoning about correctness compositionally.} Our correctness criterion makes sense for the main program, but not necessarily for subterms, hindering compositional reasoning. 
    
    \textbf{Solution: Logical relations.} We adapt the \textit{logical relations} technique of~\citet{huot2020correctness} (Sec.~\ref{sec:background}) to define extended correctness criteria that apply to any type in our language.
    
    \item \textbf{Challenge: Compositional differentiation of probability kernels.} ML researchers often build gradient estimators for whole models~\citep{mohamed2020monte}. But to differentiate \textit{compositionally} we need a notion of `probability kernel derivative,' and rules for composition. 
    
    \textbf{Solution: Higher-order semantics of probabilistic programs and AD.} Recent PPL semantics view probability as a submonad of the continuation monad~\citep{vakar2019domain}. In this light, probabilistic primitives are really higher-order primitives, averaging a continuation's value over all possible sampled inputs. This gives a blueprint for a notion of derivative at probabilistic types, based on existing theory of higher-order AD~\citep{huot2020correctness} (Sec.~\ref{sec:discrete}). 
    
    \item \textbf{Challenge: Commuting limits.} Differentiating expectations requires swapping integrals and derivatives, which may not be sound. The dominated convergence theorem gives sufficient regularity conditions, but they are hard to formulate or enforce compositionally.
    
    \textbf{Solution: Lightweight static analysis to surface regularity conditions.} Our macro optionally outputs a \textit{verification condition} (presented to the user as syntax) making explicit every regularity assumption that ADEV makes while translating a program (Sec.~\ref{sec:continuous}). These regularity assumptions are often ignored (i.e., not even stated) in the ML literature on gradient estimation.
    
    \item \textbf{Challenge: Safely exposing non-differentiable primitives.} Probabilistic programs that use non-differentiable primitives, like $\leq$ or $ReLU$, may have differentiable expectations. But dominated convergence requires integrands to be continuously differentiable w.r.t. the parameter.
    
    \textbf{Solution: Static typing for fine-grained differentiability tracking.} To ensure we only swap integrals and derivatives when it is sound to do so, we use static typing to track the smoothness of deterministic subterms with respect to each of their free variables (Sec.~\ref{sec:general}).
\end{enumerate}

\section{Background: Forward-Mode AD for Deterministic Programs}
\label{sec:background}

In this section, we review standard forward-mode AD, a well-established technique for automating the derivatives of \textit{deterministic} programs~\citep{rall1981automatic,director1969automated,griewank2008evaluating}.
Our presentation is based on~\citet{huot2020correctness}'s formalization of forward-mode AD in a pure, higher-order functional language. 
The simplicity of the algorithm, and the modularity of~\citet{huot2020correctness}'s correctness argument via logical relations, makes it well-suited to extensions, like those we introduce in Sections~\ref{sec:combinator}-\ref{sec:general} and in Appendix~\ref{sec:extending} (see Table~\ref{tab:extensions}).


\subsection{Source Language for AD}
The grammar of our starting language is given in Figure~\ref{fig:02syntax}. 
Our types, terms, typing rules, and semantics are standard, but we recall them here to fix notation:

\textit{Types and terms} Our language includes numeric types\footnote{In our implementation, reals are represented by floating-point numbers, but we note that our correctness results do not account for any error introduced by floating-point approximations.} ($\RR$, $\RR_{>0}$, $\II = (0, 1)$, $\NN$), tuples $A \times B$, and function types $A \to B$. 
For terms, it features the standard constructs for building and accessing tuples, creating abstractions, and applying functions. We also provide  primitives for smooth numerical operations, like $\texttt{log} : \RR_{>0} \to \RR$.
Technically, we need multiple versions of each primitive ($+_\NN, +_\RR$), but we will suppress these subscripts when clear from context. 

\textit{Judgments.} A \textit{context} $\Gamma$ is a list associating variable names with their types (e.g., $\Gamma = x : \tau, y : \sigma$). 
The typing judgment $\Gamma \vdash t : \tau$ indicates that, in context $\Gamma$, $\ter$ is a well-typed term of type $\tau$. 
If $\vdash \ter : \tau$ (i.e., if $\ter$ is well-typed in an empty context), we call $t$ a closed term. 
The typing rules are standard.

\begin{figure}[t]
\fbox{
  \parbox{0.97\textwidth}{
    \centering
    \vspace{-2mm}
    \begin{align*}
    \text{Smooth base types }\KK ::=\,& 
     \RR \mid \posreal \mid \II \\
        \text{Types }\type ::=\,& 
          \mathbf{1} \mid \NN \mid
         \KK \mid
          \type_1 \times \type_2 \mid
          \type_1 \to \type_2 \\
        \text{Terms } \ter ::=\,&
          () \mid
          \revision{r~(\in\KK)} \mid
          c \mid
          \colorbox{gray!15}{$c_\mathcal{D}$} \mid
          \var \mid
          (\ter_1, \ter_2) \mid
          \lambda \var:\type. \ter \mid
          \llet~\var = \ter_1 \,\texttt{in}\, \ter_2 \\
          & \mid
          \fst\,\ter \mid
          \snd\,\ter \mid
          \ter_1\,\ter_2 \\
        \text{Primitives } c ::=\,&
          + \mid - \mid \times \mid \div \mid
          \exps \mid \logs \mid \sins \mid \coss \mid \pows
    \end{align*}
    \revision{We write $\llet ~(x,y)=\ter_1~\iin~\ter_2$ as sugar for $\llet~x=\fst~\ter_1~\iin~ \llet~y=\snd ~\ter_1~\iin~\ter_2 $}
    }}
    \vspace{-3mm}
    \caption{Syntax of the \textit{deterministic} simply-typed $\lambda$-calculus for Sec.~\ref{sec:background}. $r$ ranges over real numeric constants, and $c$ over source-language primitive functions, each of which has an associated target-language dual-number derivative $c_\mathcal{D}$ (Fig.~\ref{fig:pure_constant_translation}).
    Gray highlights indicate syntax only present in the target language of the AD macro.
    }
    \label{fig:02syntax}
\vspace{-3mm}
\end{figure}

\textit{Semantics.} To each type $\tau$ we assign a set of values $\sem{\tau}$. To numeric types, we assign the corresponding sets of numbers. We interpret product and function types as products and functions on the interpretations of their arguments: $\sem{A \times B} = \sem{A} \times \sem{B}$, and $\sem{A \to B} = \sem{A} \to \sem{B}$. Then, for any term in context $\Gamma \vdash t : \tau$, we assign a meaning $\sem{\Gamma \vdash t : \tau} \in \sem{\Gamma} \to \sem{\tau}$, where $\sem{\Gamma}$ is the space of \textit{environments} mapping the variable names in $\Gamma$ to values of their corresponding types. For example, the meaning of a variable is the function that looks up that variable in the environment: $\sem{\Gamma \vdash x : \tau}(\rho) = \rho[x]$. When the context or the type is clear, we may omit them, writing $\sem{t}$ or $\sem{t : \tau}$. Using this shorthand, we give some more examples of term interpretations: $$\sem{t_1\,t_2}(\rho) = \sem{t_1}(\rho)(\sem{t_2}(\rho)) \,\,\, \sem{(t_1, t_2)}(\rho) = (\sem{t_1}(\rho), \sem{t_2}(\rho)) \,\,\, 
\sem{\lambda x. t}(\rho) = \lambda v. \sem{t}(\rho[x \mapsto v])$$

\begin{notation}
For closed terms $t$, we write $\sem{\ter}$ instead of $\sem{\ter}(\rho)$, where $\rho$ is the empty environment.
\end{notation}

\subsection{Forward-mode AD} We assume the user has written a program $\vdash t : \RR \to \RR$, and wishes to automate the
\revision{construction}
of a program $\vdash s : \RR \to \RR$ computing its (denotation's) derivative $\theta \mapsto \sem{t}'(\theta)$. Forward-mode AD translates the source program into a program representing the derivative in two steps:

\begin{itemize}[leftmargin=*]
    \item First, we apply a macro, \ad{\cdot}, to the user's program, yielding a new program $\vdash \ad{t} : \RR \times \RR \to \RR \times \RR$. This new program operates on \textit{dual numbers}, pairs of numbers representing the \textit{value} and \textit{derivative} of a computation. For any differentiable $h$, applying $\sem{\ad{t}}$ to the dual number $(h(\theta), h'(\theta))$ should yield $((\sem{t} \circ h)(\theta), (\sem{t} \circ h)'(\theta))$.
    \item Second, we output the program $s = \lambda \theta:\RR. \snd (\ad{t} \, (\theta, 1))$. Since $(\theta, 1) = (id(\theta), id'(\theta))$, we know $\sem{\ad{t}}(\theta, 1)$ returns a dual number representing $((\sem{t} \circ id)(\theta), (\sem{t} \circ id)'(\theta))  = (\sem{t}(\theta), \sem{t}'(\theta))$, whose second component we extract to return $\sem{t}$'s derivative. 
\end{itemize}
If the first step is done correctly, the correctness of the second step should be clear. Therefore, the content of the forward-mode AD algorithm mostly lives in the definition of the $\ad{\cdot}$ macro, and in the proof of its correctness. To emphasize this, we restate the property we need $\sem{\ad{t}}$ to satisfy if we want the second step above to follow:

\begin{definition}[correct dual-number derivative at $\RR$]
\label{def:correct-at-r}
Let $f : \RR \to \RR$ be a differentiable function. Then $f_D : \RR \times \RR \to \RR \times \RR$ is a correct dual-number derivative of $f$ if for all differentiable $h : \RR \to \RR$, $f_D(h(\theta), h'(\theta)) = ((f \circ h)(\theta), (f \circ h)'(\theta))$.
\end{definition}

Then the AD macro is correct if it computes these dual-number derivatives:

\begin{definition}[correctness of $\ad{\cdot}$]
\label{def:correct-macro}
The AD macro $\ad{\cdot}$ is correct if, for all closed terms $\vdash t : \RR \to \RR$, $\sem{\vdash \ad{t} : \RR \times \RR \to \RR \times \RR}$ is a correct dual-number derivative of $\sem{\vdash t : \RR \to \RR}$.
\end{definition}

\noindent{\textbf{Defining the AD macro.}} The AD macro $\ad{\cdot}$ itself is given in Figure~\ref{fig:02ad}. In every place that real numbers (of type $\RR$) appeared in the original program, they are now replaced by dual numbers (of type $\RR \times \RR$).
This affects the type of every term in the program, and the assumed types of any free variables in the context; we write $\ad{\tau}$ for the type that terms of type $\tau$ have after translation to use dual-numbers. Since reals are replaced by pairs of reals, we have $\ad{\RR} = \RR \times \RR$. Because functions into $\NN$ have no derivative information to track, $\ad{\NN} = \NN$. The derivative of a function into $\RR_{>0}$ or $\mathbb{I}$ may still be negative, so we set $\ad{\mathbb{K}} = \mathbb{K} \times \RR$. On product and function types, $\ad{\cdot}$ is defined recursively: $\ad{A \times B} = \ad{A} \times \ad{B}$ and $\ad{A \to B} = \ad{A} \to \ad{B}$.

\begin{figure}[tb]
\fbox{
  \parbox{0.97\textwidth}{
  

  
  AD on contexts:\quad  
     \begin{tabular}{ll}
          $\ad{\bullet}$ &= $\bullet$
     \end{tabular}
     \quad
    \begin{tabular}{ll}
        $\ad{\Gamma,\var:\type}$ &= $\ad{\Gamma},\var:\ad{\type}$
    \end{tabular}
\vspace{.1cm}

AD on types:\quad
\begin{tabular}{cc}
\begin{tabular}{ll}
  \ad{\mathbb{K}} &= $\mathbb{K}\times \RR$ \\
   \ad{\NN} &= $\NN$ 
\end{tabular}
&
\begin{tabular}{ll}
    \ad{\mathbf{1}} &= $\mathbf{1}$ \\
        \ad{\type_1\times\type_2} &= $\ad{\type_1}\times \ad{\type_2}$ \\
        \ad{\type_1\to\type_2} &= $\ad{\type_1}\to \ad{\type_2}$ 
    \end{tabular}

\end{tabular}

AD on pure expressions: 
 \vspace{-.1cm}
\begin{center}
\begin{tabular}{c|c|c}
\hspace{-.5cm}
\begin{tabular}{ll}
    \ad{\lambda\var:\type.\ter} &= $\lambda \var:\ad{\type}.\ad{\ter}$ \\
    \ad{\ter_1\ter_2} &= \ad{\ter_1}\ad{\ter_2} \\
    \ad{\llet ~\var=\ter_1~\iin~\ter_2} &= $\llet~ \var=\ad{\ter_1}~\iin~\ad{\ter_2}$ \\
    \ad{(\ter_1,\ter_2)} &= $(\ad{\ter_1},\ad{\ter_2})$\hspace{-2mm} \\
\end{tabular}
 \hspace{-.3cm}
       &
       \hspace{-.3cm}
      \begin{tabular}{ll}   
    \ad{\fst~\ter} &= $\fst~\ad{\ter}$ \\
      \ad{\snd~\ter} &= $\snd~\ad{\ter}$ \\ 
          \ad{r:\mathbb{K}} &= $(r,0)$ \\
     \ad{r:\NN} &= $r$ 
\end{tabular}
&
\hspace{-.3cm}
\begin{tabular}{ll}
    \ad{\var} &= $\var$ \\
    \ad{()} &= () \\
    \ad{c} &= $c_\mathcal{D}$ \\
    &
\end{tabular}
\end{tabular}
\end{center}
}}
    \vspace{-3mm}
\caption{The standard forward-mode AD  translation as a whole program transformation.
Note that the types of variables $x:\type$ (both free and bound) are changed to $x:\ad{\type}$.
For every primitive $c:\type$ of the source language, $c_\mathcal{D}:\ad{\type}$ is its built-in derivative.
$\ad{-}$ is a typed-translation: if $\Gamma\vdash \ter:\type$, then $\ad{\Gamma}\vdash \ad{\ter}:\ad{\type}$.
}
\label{fig:02ad}
\end{figure}

When applied to a term $x_1 : \tau_1, \dots, x_n : \tau_n \vdash t : \tau$, AD produces a new term $x_1 : \ad{\tau_1}, \dots, x_n :  \ad{\tau_n} \vdash \ad{t} : \ad{\tau}$. The new term is mostly the same as the old term---only two things change:
\begin{itemize}[leftmargin=*]
    \item Constant real numbers $r$ are replaced with constant \textit{dual} numbers $(r, 0)$ with $0$ derivative.
    
    \item Primitives $c : \tau \to \sigma$ are translated into new, target-language primitives $c_D : \ad{\tau} \to \ad{\sigma}$. For each $c$, $c_D$ is a built-in dual-number derivative for the primitive $c$ (though we have yet to make this precise, except when $\tau = \sigma = \RR$).
\end{itemize}

\begin{notation}
In examples, we use the variable naming convention $dx = (x, \delta x)$ for dual numbers. 
\end{notation}

\noindent The semantics of the new primitives $c_D$ are given in Figure~\ref{fig:pure_constant_translation}. 
When $c : \RR \to \RR$, $\sem{c_D}$ has the form $$\sem{c_D} = \lambda (x, 
\delta x). (\sem{c}(x), \sem{c}'(x) \cdot \delta x),$$ which ensures that when run on a pair $(h(\theta), h'(\theta))$, it computes $(\sem{c}(h(\theta)), \sem{c}'(h(\theta)) \cdot h'(\theta))$, the second component of which uses the familiar chain rule from calculus to compute $(\sem{c} \circ h)'(\theta)$.

\begin{figure}[t]
\fbox{
  \parbox{.97\textwidth}{
  \hspace{-.3cm}
  \begin{tabular}{c|c}
         \begin{tabular}{ll}
    $\sem{\exps_\der}(x,\delta x)$ &= $\llet~y=\exp~x$\\
    &$\quad\iin~(y, y\times \delta x)$ \\
    $\sem{\pows_\der}(dx,0)$ &= $(0,0)$
    \end{tabular}
       & 
     \begin{tabular}{ll}
        $\sem{(+_\RR)_\der}((x,\delta x),(y,\delta y))$ &=  $(x+y,\delta x+\delta y)$ \\
    $\sem{(\times_\RR)_\der}((x,\delta x),(y,\delta y))$ &= $(x\times y,\delta x\times y+x\times \delta y)$ \\
    $\sem{\pows_\der}((x,\delta x),n+1)$ &= $\llet~y= \pows(x,n)~\iin$\\
    &\quad$(x\times y,(n+1)\times y\times \delta x)$
     \end{tabular}
  \end{tabular}
  }}
      \vspace{-3mm}
\caption{Dual number interpretation of deterministic primitives. On the right we use syntax for simplicity, but it should be understood as metalanguage syntax. \revision{We write $dx=(x,\delta x)$ for dual numbers. Note that each primitive implements the chain rule, multiplying the derivative with respect to $x$ by $\delta x$.}}
\label{fig:pure_constant_translation}
\end{figure}

\subsection{Proof Technique: Reasoning about Correctness with Logical Relations}
We now review a powerful proof technique for reasoning about AD and showing that it is correct, based on logical relations~\citep{ahmed2006step,katsumata2013relating,krawiec2022provably,barthe2020versatility,huot2020correctness}. Although it may seem like overkill for such a simple algorithm, the technique will really shine when we try to make sense of highly non-standard extensions to forward-mode AD in Sections~\ref{sec:combinator}-\ref{sec:general}.
\\

\noindent \textbf{The challenge with simple proof by induction.} We might hope we could establish AD's correctness with a simple proof by induction: if AD is correct for each subterm in a program, it is correct for the whole program.\footnote{Technically, the induction is usually over the typing derivation of the term, and what we call ``subterms'' are really subtrees of the typing derivations corresponding to the premises of the bottom-most inference rule in the typing derivation.}  The challenge is that the notion of \textit{correctness} we gave in Definition~\ref{def:correct-at-r} applies only to translations of closed $\RR \to \RR$ programs. The meaning of an open subterm, $\sem{\Gamma \vdash t : \tau}$, will in general be a function from environments to values of type $\tau$ (which may not be $\RR$). The meaning of its translation, $\sem{\ad{\Gamma} \vdash \ad{t} : \ad{\tau}}$, will also be a function, from dual-number environments to values of type $\ad{\tau}$. Our simple correctness criterion about differentiable $\RR \to \RR$ functions cannot be applied here, and so it is unclear what inductive hypothesis a proof by induction would use. 
\\

\noindent \textbf{The logical relations approach.} The logical relations proof technique circumvents this issue by defining a {\it different} inductive hypothesis for each type. In proofs about AD, what this means is that we ultimately define a different notion of \textit{correct dual-number derivative} for functions between {\it any} two types in our language. Once we've done this, we can then proceed with an ordinary proof by induction: if the translation of each subterm $\Gamma \vdash t : \tau$ yields a correct dual-number derivative of the $\sem{\Gamma} \to \sem{\tau}$ function that $t$ denotes, then the translation of the enclosing term is also correct.

But how can we define correct dual-number derivatives for functions $f : \sem{\tau_1} \to \sem{\tau_2}$ between arbitrary types? Looking more closely at Definition~\ref{def:correct-at-r}, we can see that it phrases correctness for $f : \RR \to \RR$ functions in a slightly non-standard way: $f_D$ is a correct derivative if, when composed with a function $h$'s derivative, it \textit{preserves the relationship} that $h$ and $h'$ enjoyed, of ``being a derivative.'' This motivates a more general approach to defining correctness based on the idea of preserving the relationship between a function and its derivative. We proceed in two stages: 
\begin{itemize}[leftmargin=*]
    \item First, for each type $\tau$, we define a notion of derivative for $\RR \to \sem{\tau}$ functions: a relation between an $\RR \to \sem{\tau}$ function and an $\RR \to \sem{\ad{\tau}}$ function encoding what it means to be a derivative.
    
    \item Then, for arbitrary functions $f : \sem{\tau_1} \to \sem{\tau_2}$, we define correctness as the \textit{preservation} of this relationship: we look at what happens when $f$ (and its translation) are composed with $\RR \to \sem{\tau_1}$ functions (and their $\RR \to \sem{\ad{\tau_1}}$ derivatives), and check that what we get out are related $\RR \to \sem{\tau_2}$ and $\RR \to \sem{\ad{\tau_2}}$ functions. 
\end{itemize}

More precisely, in step 1, we define for each type $\tau$ a \textit{dual-number relation} $\rel_\tau$ encoding what it means to be a derivative of an $\RR \to \sem{\tau}$ function:
\begin{definition}[dual-number relation]
For a type $\tau$, a dual-number relation for $\tau$ is a relation $\rel_\tau$ over the sets $\RR \to \sem{\tau}$ and $\RR \to \sem{\ad{\tau}}$, that is, a subset $\rel_\tau \subseteq (\RR \to \sem{\tau}) \times (\RR \to \sem{\ad{\tau}})$. 
\end{definition}

For each $\tau$, we choose $\rel_\tau$ so that it relates continuously-parameterized $\sem{\tau}$ values (i.e., curves $\RR \to \sem{\tau}$) with continuously-parameterized dual-number values (curves $\RR \to \sem{\ad{\tau}}$) that use their dual-number storage to correctly track a local linear approximation to how the $\sem{\tau}$ value is changing when the real-valued parameter changes. This allows us to define correct dual-number derivatives at any type automatically:

\begin{definition}[correct dual-number derivative (general)]
\label{def:correct-dual-number}
Suppose that dual-number relations $\rel_\tau$ have been chosen for every $\tau$, and let $f : \sem{\tau_1} \to \sem{\tau_2}$. We say $f_D : \sem{\ad{\tau_1}} \to \sem{\ad{\tau_2}}$ is a correct dual-number derivative of $f$ if, for all $(g, g') \in \rel_{\tau_1}$, the functions $(f \circ g, f_D \circ g') \in \rel_{\tau_2}$. 
\end{definition}

This last definition is the one we will use as our inductive hypothesis in proving the AD macro correct overall. Although the proof by induction ultimately needs to cover all terms, the power of the technique is that the inductive steps are largely covered by existing, well-studied machinery, and so most of the AD-specific action happens at base types and primitives.

\subsection{Proof of Correctness for Forward-mode AD}

We now apply the logical relations technique from the last section to prove our AD macro correct. 
To do so, we define dual-number relations $\rel_\tau$ which encode a notion of \textit{derivative} at each type (Figure~\ref{fig:02log_rel}). For the reals (and continuous subsets of the reals), this notion coincides with the usual derivative: the relation $\rel_{\RR}$, for example, relates a differentiable function $f$ with the function $g(\theta) = (f(\theta), f'(\theta))$. 
For discrete types, such as $\NN$ , because the derivative will necessarily be zero, we can avoid storing a dual number.

\begin{figure}[tb]
\fbox{
  \parbox{.97\textwidth}{
  \begin{center}
      Dual-Number Logical Relations $\rel_\tau$ for the Deterministic Language (\S\ref{sec:background})
  \end{center}
  \small{
  \begin{align*}
      \rel_\RR &= \big\{(f : \RR \to \RR, g : \RR \to \RR \times \RR) \mid f \text{ differentiable} \wedge \forall \theta \in \RR. g(\theta) = (f(\theta), f'(\theta))\big\} \\
      \rel_{\posreal} &= \big\{(f : \RR \to \RR_{>0}, g : \RR \to \posreal \times \RR) \mid (\iota_{\posreal} \circ f, \langle \iota_{\posreal}, id\rangle \circ g) \in \rel_\RR\big\} \\
      \rel_{\II} &= \big\{(f : \RR \to \II, g : \RR \to \II \times \RR) \mid (\iota_\II \circ f, \langle \iota_\II, id\rangle \circ g) \in \rel_\RR\big\} \\
      \rel_\NN &= \big\{(f : \RR \to \NN, g : \RR \to \NN) \mid f \text{ is constant } \wedge f = g\big\} \\
      \rel_{\tau_1 \times \tau_2} &= \big\{(f : \RR \to \sem{\tau_1 \times \tau_2}, g : \RR \to \sem{\ad{\tau_1} \times \ad{\tau_2}}) \mid \\
      &\qquad\qquad\qquad (\pi_1 \circ f, \pi_1 \circ g) \in \rel_{\tau_1} \wedge (\pi_2 \circ f, \pi_2 \circ g) \in \rel_{\tau_2}\big\} \\
      \rel_{\tau_1 \to \tau_2} &= \big\{(f : \RR \to \sem{\tau_1 \to \tau_2}, g : \RR \to \sem{\ad{\tau_1 \to \tau_2}}) \mid \\
      &\qquad\qquad\qquad\forall (j, k) \in \rel_{\tau_1}. (\lambda r. f(r)(j(r)), \lambda r. g(r)(k(r))) \in \rel_{\tau_2}\big\}
      \end{align*}
          \vspace{-4mm}
  }}}
      \vspace{-3mm}
\caption{Definition of the dual-number logical relation at each type. \revision{ $\iota_\KK:\KK\to\RR$ is the canonical injection, for every smooth base type $\mathbb{K}$.}}
\label{fig:02log_rel}
\end{figure}


We then need to prove what is often called the {\it fundamental lemma} of a logical relations argument:
\begin{lemma}[Fundamental lemma]
\label{lem:fund1}
For every term-in-context $\Gamma \vdash t : \tau$, $\sem{\ad{t}}$ is a correct dual-number derivative of $\sem{t}$, with respect to the relations $\rel_\tau$ given in Figure~\ref{fig:02log_rel}. 
\end{lemma}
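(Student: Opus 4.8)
The statement is the \emph{fundamental lemma} of the logical relations argument, so the plan is a single induction on the typing derivation of $\Gamma \vdash t : \tau$, using Definition~\ref{def:correct-dual-number} as the inductive invariant. The first step is to make precise what ``correct dual-number derivative'' means for an \emph{open} term: I would extend the family $\rel_\tau$ to contexts by treating $\Gamma = x_1 : \sigma_1, \dots, x_n : \sigma_n$ as a product, setting $\rel_\Gamma \subseteq (\RR \to \sem{\Gamma}) \times (\RR \to \sem{\ad{\Gamma}})$ to be the relation that holds of $(\gamma, \gamma')$ exactly when $(\pi_i \circ \gamma, \pi_i \circ \gamma') \in \rel_{\sigma_i}$ for each variable. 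With this, the goal for a term $\Gamma \vdash t : \tau$ becomes: for all $(\gamma, \gamma') \in \rel_\Gamma$, the pair $(\sem{t} \circ \gamma, \sem{\ad{t}} \circ \gamma')$ lies in $\rel_\tau$, which is exactly the instance of Definition~\ref{def:correct-dual-number} with input ``type'' $\Gamma$ and output type $\tau$.

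\textbf{Structural cases.} The bulk of the cases --- variables, pairing, projections, $\lambda$-abstraction, application, and $\llet$ --- I expect to go through essentially mechanically, because the defining clauses of $\rel_\tau$ in Figure~\ref{fig:02log_rel} are chosen precisely to make the corresponding syntactic rule preserve relatedness (the so-called compatibility, or ``semantic typing,'' lemmas). For a variable $x : \sigma_i$, relatedness is immediate from the definition of $\rel_\Gamma$ since $\sem{\ad{x}} = \sem{x}$ is a projection. For pairs and projections, I would unfold the product clause of $\rel_{\tau_1 \times \tau_2}$ and apply the induction hypotheses componentwise. The only slightly delicate structural cases are abstraction and application: here I would use the fact that $\rel_{\tau_1 \to \tau_2}$ is defined exactly as ``sends $\rel_{\tau_1}$-related curves to $\rel_{\tau_2}$-related curves,'' so that the application rule is discharged by feeding the induction hypothesis for the argument into the induction hypothesis for the function, and abstraction is discharged by extending the environment $(\gamma, \gamma')$ with a fresh $\rel_{\tau_1}$-related pair $(j, k)$ and invoking the induction hypothesis for the body in the enlarged context.

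\textbf{Base and primitive cases (the real content).} The AD-specific work lives at constants and primitives. For a numeric constant $r : \KK$, I must check that the constant curve paired with $\sem{\ad{r}} = (r, 0)$ lies in $\rel_\KK$; this reduces to $\rel_\RR$ via the injection $\iota_\KK$ and holds because a constant is differentiable with derivative $0$. For a constant $r : \NN$ the constant clause of $\rel_\NN$ applies directly. The crucial case is a primitive $c : \tau \to \sigma$: I must show $\sem{c_\der}$ is a correct dual-number derivative of $\sem{c}$. For a unary $c : \RR \to \RR$, given $(f, g) \in \rel_\RR$ so that $g(\theta) = (f(\theta), f'(\theta))$, I would compute, using the displayed form $\sem{c_\der} = \lambda(x, \delta x).\, (\sem{c}(x), \sem{c}'(x)\cdot \delta x)$, that $\sem{c_\der}(g(\theta)) = (\sem{c}(f(\theta)), \sem{c}'(f(\theta))\cdot f'(\theta))$, which by the chain rule equals $((\sem{c}\circ f)(\theta), (\sem{c}\circ f)'(\theta))$, witnessing membership in $\rel_\RR$; this requires only that $\sem{c}$ is differentiable on its domain, which holds as each primitive is smooth there. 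For the multi-argument primitives ($+$, $\times$, $\div$) the same calculation uses the sum/product/quotient rules together with the product clause of the relation, and for the subtyped primitives (e.g.\ $\logs : \posreal \to \RR$) I would pass through the injection clauses defining $\rel_\posreal$ and $\rel_\II$. The $\pows$ primitive, whose second argument has type $\NN$, is handled by case analysis on that (constant) natural number, matching the two semantic clauses given for $\sem{\pows_\der}$.

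\textbf{Main obstacle.} The structural cases are ``free'' from the design of the relations, so the real obstacle is the primitive case: one must verify, once and for all, that every built-in dual-number derivative $c_\der$ genuinely implements the chain rule and that the requisite differentiability holds on the (possibly restricted) domain of $c$, while correctly threading the subtype injections $\iota_\KK$ and the discrete $\NN$-argument conventions. This is routine calculus per primitive, but it is where all the mathematical content of correctness resides; everything else is bookkeeping enabled by the logical-relations formulation.
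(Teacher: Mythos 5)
Your proposal is correct and follows essentially the same route as the paper: an induction on the typing derivation in which the structural cases (variables, pairs, projections, abstraction, application, let) are discharged by the standard compatibility lemmas built into the definitions of $\rel_{\tau_1\times\tau_2}$ and $\rel_{\tau_1\to\tau_2}$, and all the substantive work is pushed into verifying that each primitive's built-in derivative $c_\der$ implements the chain rule on its (sub)domain. The paper states this only as a one-paragraph sketch citing the standard machinery of \citet{huot2020correctness}; your write-up fills in the same skeleton in more detail (context relation $\rel_\Gamma$, subtype injections $\iota_\KK$, the $\NN$-indexed $\pows$ case) without deviating from it.
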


This is proved by induction on the typing derivation of $t$, and as our definitions of $\rel_\tau$ for product types and function types are completely standard, the inductive cases can be handled by standard logical relations machinery~\citep{huot2020correctness}. 
The only interesting cases are the base cases, where we must show that the interpretation of every primitive function $c$ has a correct dual-number derivative given by the interpretation of its translation $c_D$ (Definition~\ref{def:correct-at-r}).


The last step is to use our proof of the fundamental lemma to establish the more basic correctness criterion we outlined in Definition~\ref{def:correct-macro}:

\begin{theorem}[correctness of forward-mode AD~\citep{huot2020correctness}]
\label{thm:pure-correctness}
For all closed terms $\vdash t : \RR \to \RR$, $\sem{\lambda \theta : \RR. \snd (\ad{t}\,(\theta, 1))}$ is the derivative of $\sem{t}$. 
\end{theorem}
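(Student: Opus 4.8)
The plan is to obtain Theorem~\ref{thm:pure-correctness} as a direct corollary of the Fundamental Lemma (Lemma~\ref{lem:fund1}), by specializing it to the closed term $t : \RR \to \RR$ and feeding the resulting logical-relations statement the ``identity curve'' $\theta \mapsto \theta$ whose dual-number derivative is the constant $1$. This mirrors the informal remark already made in the text: once the macro is correct in the sense captured by $\rel_\tau$, the second step of the algorithm should follow by unfolding definitions.

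First I would instantiate Lemma~\ref{lem:fund1} at $\vdash t : \RR \to \RR$. Because the context is empty, the assertion that $\sem{\ad{t}}$ is a correct dual-number derivative of $\sem{t}$ unwinds (through the trivial extension of the dual-number relations to the empty context $\bullet$) to the concrete claim that the constant curves $(\lambda r.\,\sem{t},\ \lambda r.\,\sem{\ad{t}})$ lie in $\rel_{\RR \to \RR}$. Next I would unfold the definition of $\rel_{\RR\to\RR}$ from Figure~\ref{fig:02log_rel}: membership means that for every $(j,k) \in \rel_\RR$, the pair $(\lambda r.\,\sem{t}(j(r)),\ \lambda r.\,\sem{\ad{t}}(k(r)))$ again lies in $\rel_\RR$.

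Then I would supply the specific witness $j = \mathrm{id}_\RR$ and $k = \lambda\theta.\,(\theta,1)$. Checking $(j,k) \in \rel_\RR$ is immediate from the definition of $\rel_\RR$, since $\mathrm{id}$ is differentiable and $(\theta,1) = (\mathrm{id}(\theta),\mathrm{id}'(\theta))$. Plugging this in yields $(\lambda\theta.\,\sem{t}(\theta),\ \lambda\theta.\,\sem{\ad{t}}(\theta,1)) \in \rel_\RR$, which, again by the definition of $\rel_\RR$, says precisely that $\sem{t}$ is differentiable and that $\sem{\ad{t}}(\theta,1) = (\sem{t}(\theta),\sem{t}'(\theta))$ for all $\theta$. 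Taking second projections and using $\sem{\lambda\theta:\RR.\,\snd(\ad{t}\,(\theta,1))}(\theta) = \snd(\sem{\ad{t}}(\theta,1))$ gives $\sem{\lambda\theta:\RR.\,\snd(\ad{t}\,(\theta,1))}(\theta) = \sem{t}'(\theta)$, which is the desired conclusion.

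I expect the main obstacle to be the bridging step rather than any analysis: making precise how the closed-term instance of the Fundamental Lemma, phrased via ``correct dual-number derivative'' (Definition~\ref{def:correct-dual-number}, stated for maps $\sem{\tau_1}\to\sem{\tau_2}$), specializes to membership of the constant curves in $\rel_{\RR\to\RR}$. This requires being explicit about the standard extension of the relations $\rel_\tau$ to contexts and verifying that the empty context $\bullet$ contributes only a trivial constraint. Once that bookkeeping is settled, the rest is purely definitional, with all differentiability guaranteed already by $\rel_\RR$; there is no additional analytic content to establish.
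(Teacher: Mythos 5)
Your proposal is correct and follows essentially the same route as the paper's proof: invoke the Fundamental Lemma for the closed term $t$, instantiate the resulting $\rel_{\RR\to\RR}$ membership with the witness $(\mathrm{id},\ \lambda\theta.(\theta,1))\in\rel_\RR$, and read off the conclusion from the definition of $\rel_\RR$ by projecting the second component. The only difference is that you flag the context/closed-term bookkeeping explicitly, which the paper elides.
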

\revision{
\begin{proof}
By the fundamental lemma (\ref{lem:fund1}), $\sem{\ad{t}}\in \rel_{\RR \to \RR}$, so for functions $(f, g) \in \rel_\RR$, we have $(\sem{t} \circ f, \sem{\ad{t}} \circ g) \in \rel_\RR$. Take $f=id$ and  $g=\lambda \theta. (\theta, 1)$, and note that  $(f,g)\in\rel_\RR$, because $g(\theta) = (f(\theta), f'(\theta))$. 
Then $(\sem{t} \circ f,\sem{\ad{t}} \circ g)=(\sem{t},\lambda \theta. \sem{\ad{t}}(\theta, 1))\in \rel_\RR$, and so by the definition of $\rel_\RR$, for all $\theta\in\RR$, $\sem{\ad{t}}(\theta, 1) = (\sem{t}(\theta), \sem{t}'(\theta))$. 
Applying $\pi_2$ to extract just the second component, we have $\pi_2 \sem{\ad{\ter}}(\theta, 1)= \sem{\ter}'(\theta)$. 
As a function of $\theta$, the left-hand side is precisely $\sem{\lambda \theta:\RR.\snd (\ad{\ter}(\theta,1))}$, and the right-hand side is the derivative of $\sem{\ter}$.
\end{proof}
}


\section{Warm-up: differentiating a probabilistic combinator DSL}
\label{sec:combinator}

Now that we have set the stage, we can begin to introduce our main characters: new types and terms for probabilistic programming. In this section, we tackle only a small warm-up extension: we study the most basic setting where probability arises, a simple and restrictive DSL for composing probability distributions with combinators. Unlike general probabilistic programming languages, which we study in Sections~\ref{sec:discrete}-\ref{sec:general}, the DSL in this section does not allow for arbitrary sequencing of probabilistic computations. Despite the simplicity of this setting, our development here provides an important foundation for the fancier extensions we will add next. 


    
\subsection{The Type $\eRR$ of Random Real Numbers}
\label{sec:eRR_intro}

Typically, in differentiable programming languages, users aim to construct a closed expression $\vdash t : \RR \to \RR$,
implementing a differentiable function whose derivative they wish to compute.
But for the rest of this paper, we consider a different workflow: instead of constructing
a program of type $\RR \to \RR$, the user constructs a program of type $\RR \to \eRR$, where $\eRR$ is a new type of \textit{random} real numbers, whose {\it expected values} are the quantities of interest. We call the values of $\eRR$ \textit{unbiased real-valued estimators}: sampling them yields unbiased estimates of the true values we care about.

\begin{definition}[real-valued estimator]
\label{def:eRRdef}
We denote by $\eRR$ the set of \textit{unbiased real-valued estimators}: probability measures $\mu$ on the measurable space $(\RR, \mathcal{B}(\RR))$. If $\mathbb{E}_{x \sim \mu}[x] = \int x \mu(dx)$ exists, i.e. is finite and equal to some number $r \in \RR$, we say $\mu$ estimates (or is an unbiased estimator of) $r$.
\end{definition}

\begin{remark}
A distribution $\mu \in \eRR$ need not have a density function, and may be supported on a finite, countable, or uncountable set of reals.
For example, the Dirac distribution, $\delta_{r}$, which assigns all its mass to the number $r$, is an unbiased estimator of $r$, as is the Gaussian distribution $\mathcal{N}(r, 1)$.
\end{remark}

Although the user's program $\vdash \widetilde{t} : \RR \to \eRR$ denotes a map into the space of probability distributions, the function they wish to differentiate is the loss function $\loss : \RR \to \RR = \lambda \theta. \mathbb{E}_{x \sim \sem{\widetilde{t}}(\theta)}[x]$, if $\loss$ is well-defined (i.e., if the expectation always exists). As illustrated in Figure~\ref{fig:ad-diagram}, applying ADEV to the program $\widetilde{t}$, we get a new program $\vdash \widetilde{s} : \RR \to \eRR$ that estimates $\loss'$, the derivative of the loss. It will be useful to have a word for the relationship between $\sem{\widetilde{t}}$ and $\sem{\widetilde{s}}$; we coin \textit{unbiased derivative}:

\begin{definition}[unbiased derivative]
\label{def:unbiased-deriv}
Given a function \revision{$\widetilde{f} : U \to \eRR$, for $U \subseteq \RR$}, suppose $\loss : \revision{U} \to \RR = \lambda \theta.  \mathbb{E}_{x \sim \widetilde{f}(\theta)}[x]$ is well-defined and differentiable. We say that a function $\widetilde{g} : \revision{U} \to \eRR$ is an unbiased derivative of $\widetilde{f}$ if for all $\theta \in \revision{U}$, $\widetilde{g}(\theta)$ estimates $\loss'(\theta)$, that is,  
$$\mathbb{E}_{x \sim \widetilde{g}(\theta)}[x] = \loss'(\theta) = \frac{\text{d}}{\text{d}\theta}\mathbb{E}_{x \sim \widetilde{f}(\theta)}[x].$$
Note that estimator-valued functions may have many unbiased derivatives. 
\end{definition}


\subsection{Syntax and Semantics of the Combinator DSL}

We now present our combinator language, a toy DSL for constructing values of type $\eRR$. The syntax is given in Fig.~\ref{fig:03syntax}, and the semantics in Fig.~\ref{fig:03someprimitives} (some primitives deferred to Fig.~\ref{fig:03primitives}).


\begin{figure}[t]
\fbox{
  \parbox{.97\textwidth}{
    \centering
        \vspace{-2mm}
    \begin{align*}
        \text{Types }\type ::=\,& \ldots 
            \mid \eRR
            \mid \colorbox{gray!15}{$\deRR$}
            \\
        \text{Primitives } c ::= \,& \ldots 
        \mid \minibatch : \NN\to\NN\to(\NN\to\RR)\to\eRR
        \mid \colorbox{gray!15}{$\fst_*, \snd_*:\eRR_\der\to\eRR$} 
        \\
        &\mid \eplus, \etimes : \eRR\times\eRR\to\eRR 
        \mid \eexp : \eRR\to\eRR
        \mid \exact:\RR \to \eRR
    \end{align*}
        \vspace{-4mm}
    }}
        \vspace{-3mm}
    \caption{Syntax for the Probabilistic Combinator DSL (\S\ref{sec:combinator}), as an extension to Fig.~\ref{fig:02syntax}. \revision{ Gray highlights indicate syntax only present in the target language of the AD macro.}
    }
    \label{fig:03syntax}
\vspace{-3mm}
\end{figure}

\begin{figure}[t]
\fbox{
  \parbox{.97\textwidth}{
  \begin{center}
    \vspace{-1mm}
\begin{tabular}{c|c}
    Semantics of types:  & 
     \hspace{-5mm}
    Example primitive and its built-in derivative: \\
      \begin{tabular}{ll}
      $\sem{\eRR}$ &= $\{\mu~|~\mu$ a probability measure\\
      &\quad  on $(\RR,\mathcal{B}(\RR))\}$ \\
      $\sem{\eRR_\der}$ &= $\{\mu~|~\mu$
      a probability measure \\
      &\quad on $(\RR\times\RR,\mathcal{B}(\RR\times \RR))\}$
  \end{tabular}
  &
   \hspace{-3mm}
  \begin{tabular}{ll}
  \begin{minipage}{.25\linewidth}
  \begin{algorithm}[H]
\DontPrintSemicolon
\SetKwProg{Fn}{}{:}{end}
\Fn{$\etimes(\widetilde{x}:\eRR,\widetilde{y}:\eRR)$}{
$r\sim\widetilde{x}$\;
$s\sim\widetilde{y},$\; 
\Return $r\times s$
}
\end{algorithm}  
  \end{minipage}
       & 
       \hspace{-7mm}
       \begin{minipage}{.31\linewidth}
       \begin{algorithm}[H]
\DontPrintSemicolon
\SetKwProg{Fn}{}{:}{end}
\Fn{$\etimes_\der(\widetilde{dx}:\deRR,\widetilde{dy}:\deRR)$}{
$dr\sim \widetilde{dx}$\;
$ds\sim \widetilde{dy},$\; 
\Return $dr\times_\der ds$
}
\end{algorithm}  
  \end{minipage}
      \vspace{-2mm}
  \end{tabular}
\end{tabular}      
   \end{center} 
 }}
     \vspace{-3mm}
\caption{Semantics of the new types for the Combinator DSL and an example of a new primitive. }
\label{fig:03someprimitives}
\end{figure}

Beyond the new base type $\eRR$, our extended source language exposes a small collection of combinators for implementing stochastic loss functions. 
The $\minibatch$ primitive constructs a probabilistic estimator of a large sum $\sum_{i=1}^M f(i)$, that works by subsampling $m \ll M$ indices $(i_1, \dots, i_m)$ uniformly at random, and evaluating $f$ only at those indices, \revision{returning $\frac{M}{m} \sum_{j=1}^m f(i_j)$. The $\exact$ primitive constructs the trivial deterministic estimator of a real value that returns the value with probability 1.} Our other new primitives \textit{transform} existing estimators, creating a new estimator with expected value equal to some function (e.g., a sum, product, or exponentiation) of the inputs' expected values. We give one example ($\etimes$) in Figure~\ref{fig:03someprimitives}; the full semantics can be found in Appendix~\ref{appx:figures}, Figure~\ref{fig:03primitives}.

\subsection{ADEV for the Combinator DSL: Differentiating through $\eRR$}
\label{sec:estimator_ad}
\begin{figure}[tb]
\fbox{
  \parbox{.97\textwidth}{
  \centering
\begin{tabular}{c|c|c}
     \begin{tabular}{ll}
        \ad{\eRR} &= $\eRR_\mathcal{D}$ \\
        \ad{exp^{\eRR}} &= $exp^{\eRR}_\mathcal{D}$
 \end{tabular} 
     & 
     \begin{tabular}{ll}
    \ad{\minibatch} &= $\minibatch_\mathcal{D}$ \\
    \ad{\exact} &= $\exact_\mathcal{D}$
    \end{tabular}
    &
    \begin{tabular}{ll}
         \ad{+^{\eRR}} &= $+^{\eRR}_\mathcal{D}$ \\
    \ad{\times^{\eRR}} &= $\times^{\eRR}_\mathcal{D}$
    \end{tabular}
\end{tabular}
}}
    \vspace{-3mm}
\caption{ADEV macro for the Combinator DSL (\S\ref{sec:combinator}), extending Fig.~\ref{fig:02ad}
}
\label{fig:03ad}
\end{figure}

Suppose a user has written a program $\vdash \widetilde{t} : \RR \to \eRR$, representing a stochastic estimator $\sem{\widetilde{t}}$ of a loss function $\loss(\theta) = \mathbb{E}_{x \sim \sem{\widetilde{t}}(\theta)}[x]$. Our algorithm, ADEV, differentiates $\loss$ by constructing a program $\vdash \widetilde{s} : \RR \to \eRR$ that implements an \textit{unbiased derivative} of $\sem{\widetilde{t}}$ (Definition~\ref{def:unbiased-deriv}), in two steps:

\begin{itemize}[leftmargin=*]
    \item First, we will apply an extended version of the AD macro $\ad{\cdot}$ to the user's program, yielding a new program $\ad{\widetilde{t}} : \RR \times \RR \to \deRR$. It accepts a \textit{dual number} as input, and instead of estimating a single real value, estimates a dual number value: the type $\deRR$ denotes the set of probability distributions over {\it pairs} of reals. The key correctness property in this extended setting is that if we are given as an input dual number $(h(\theta), h'(\theta))$ for some differentiable function $h$, then $\sem{\ad{\widetilde{t}}}$ should send it to an estimator of the dual number $((\loss \circ h)(\theta), (\loss \circ h)'(\theta))$. 
    
    \item Second, we output the term $\widetilde{s} = \lambda \theta:\RR. \snd_* (\ad{\widetilde{t}}(\theta, 1))$. As before, since $(\theta, 1) = (id(\theta), id'(\theta))$, we know ${\pi_2}_* \circ \sem{\ad{\widetilde{t}}}$ returns a new estimator that estimates $(\loss \circ id)'(\theta)$, as desired. 
\end{itemize}

As in the standard AD algorithm from Section~\ref{sec:background}, the correctness of the second step follows directly if we can prove the first step works correctly, so we now turn to extending $\ad{\cdot}$.
\\

\noindent\textbf{Defining the ADEV macro at the type level.} Our AD macro from Section~\ref{sec:background} had one key job: replacing every real number flowing through the program with a dual number, and all real number operations with dual number operations. In doing so, it translated terms of type $\tau$ to terms of type $\ad{\tau}$\textemdash the dual-number version of the type $\tau$. We now have a type of \textit{estimated} real numbers, $\eRR$, and
the dual-number version of an estimator should be an \textit{estimator} of a dual number:

\begin{definition}[unbiased dual-number estimator]
\label{def:unbiased_estimator}
We denote by $\deRR$ the set of unbiased dual-number estimators: probability measures $\mu$ on $(\RR \times \RR, \mathcal{B}(\RR \times \RR))$. If $\mathbb{E}_{(x, \delta x) \sim \mu}[x] = r$ and $\mathbb{E}_{(x, \delta x) \sim \mu}[\delta x] = \delta r$ for finite real numbers $r$ and $\delta r$, we say that $\mu$ estimates the dual number $(r, \delta r)$. 
\end{definition}
This type, which appears in the target language in Figure~\ref{fig:03syntax} but not in our source language, represents random processes for estimating a dual number. Note that it allows for  the two components of the estimate to depend on the same random choices: it is a distribution over pairs, not a pair of distributions. We set $\ad{\eRR} := \deRR$.
\\

\noindent\textbf{Defining the ADEV macro at the term level.} To extend the macro $\ad{\cdot}$ from Section~\ref{sec:background} to handle our extended language, we need to say what it does on each new term. But the only new terms we have added to our source language are the new primitives, like $\minibatch$ and $exp^{\eRR}$. Thus, the only new behavior we need to specify is how to translate each primitive\textemdash in other words, we need to attach to each new primitive $c : \tau \to \sigma$ a custom built-in derivative $c_\mathcal{D}$. We give one example in Figure~\ref{fig:03someprimitives}, with the full list in Appendix~\ref{appx:figures}, Figure~\ref{fig:03primitives}.

If a primitive $c : \tau \to \eRR$ builds an estimator of some loss, the goal of $c_\mathcal{D} : \ad{\tau} \to \deRR$ is to estimate both the loss and the derivative of the loss. In many cases, this is quite straightforward. For example, the primitive $\times^{\eRR}$ in Fig.~\ref{fig:03someprimitives} estimates the product of the two numbers $x$ and $y$ that its input arguments $\widetilde{x}$ and $\widetilde{y}$ estimate. Its built-in derivative $\times^{\eRR}_\der$ does the same but with dual numbers: it independently generates estimates $dr = (r, \delta r)$ of $(x, \delta x)$ and $ds = (s, \delta s)$ of $(y, \delta y)$, then returns $(r, \delta r) \times_\der (s, \delta s) = (rs, s\delta r + r\delta s)$. Because $r$ and $s$ are independent random variables, their product's expectation is the product of their expectations, $\mathbb{E}[rs] = xy$. And by linearity of expectation, $\mathbb{E}[s\delta r + r \delta s] = \mathbb{E}[s\delta r] + \mathbb{E}[r \delta s]$; exploiting again the fact that $s$ and $\delta r$ are independent (and likewise for $r$ and $\delta s$), we obtain the desired result $y\delta x + x \delta y$. 
With such built-in derivatives for all the primitives (Figure~\ref{fig:03primitives}), the ADEV macro now covers the new source language. 
\\

\subsection{Correctness Criterion for ADEV}
Before trying to prove ADEV correct, let's formulate a definition of correctness for the programs it produces (an updated version of Definition~\ref{def:correct-at-r}):

\begin{definition}[correct dual-number derivative at $\eRR$]
\label{def:correctness-at-err}
Let $\widetilde{f} : \RR \to \eRR$ be an estimator-valued function, and suppose that the map $\loss : \RR \to \RR$ that sends $\theta$ to $\mathbb{E}_{x \sim \widetilde{f}(\theta)}[x]$ is well-defined (i.e., the expectation exists for all $\theta$) and differentiable. Then $\widetilde{f_D} : \RR \times \RR \to \deRR$ is a correct dual-number derivative of $\widetilde{f}$ if for all differentiable $h : \RR \to \RR$, the dual number estimator $\widetilde{f_D}(h(\theta), h'(\theta))$ estimates the dual number $((\mathcal{L} \circ h)(\theta), (\loss \circ h)'(\theta))$.
\end{definition}

Then our macro should compute these correct dual-number derivatives:

\begin{definition}[correctness of $\ad{\cdot}$ (ADEV)]
\label{def:correctness-macro-extended}
The ADEV macro $\ad{\cdot}$ is correct if for all closed terms $\vdash \widetilde{t} : \RR \to \eRR$, $\sem{\vdash \ad{\widetilde{t}} : \RR \times \RR \to \deRR}$ is a correct dual-number derivative of $\sem{\vdash \widetilde{t} : \RR \to \eRR}$. 
\end{definition}


This notion of correctness is \textit{intensional}~\citep{lee2020correctness}, in that there is more than one correct dual-number derivative of a function $\widetilde{f} : \RR \to \eRR$. 
In practice, which derivative a user gets will depend on which primitives a user's program invokes, intuitively because ``all the action'' in forward-mode AD happens at the primitives (each primitive is equipped with a built-in derivative, and these are composed to implement a program's derivative). By providing users with a library of primitives, some of which have the same meaning but different built-in derivatives, we give users a compositional way to explore the space of gradient estimation strategies \revision{(see Section~\ref{sub:syntax_disc_ppl}).}

\subsection{Proving the ADEV Algorithm Correct} To extend Section~\ref{sec:background}'s proof to cover the ADEV algorithm, we need to:
\begin{enumerate}[leftmargin=*]
    \item Define a dual-number relation $\rel_{\eRR} \subseteq (\RR \to \eRR) \times (\RR \to \deRR)$, characterizing when a function that estimates dual numbers is a correct derivative of a function that estimates reals. \textit{\textbf{Intuition:}} this step is about defining a notion of derivative for $\RR \to \eRR$ functions; we will base our choice on our earlier Definition~\ref{def:correctness-at-err}.
    
    \item Prove an updated version of the fundamental lemma (Lemma~\ref{lem:fund1}) for the extended language, with respect to all the old relations $\rel_\tau$, but also the new relation $\rel_{\eRR}$. \textit{\textbf{ Intuition:}} this step updates an inductive proof now that we have more base cases (new primitives). Luckily, the inductive steps don't change at all, and it suffices to check the base cases. Concretely, this means showing that each of our new primitives has a correct built-in derivative, using the definition of `correctness' arising from our choice in step (1) together with Definition~\ref{def:correct-dual-number}. 
    
    \item Prove an updated version of Theorem~\ref{thm:pure-correctness}, to show how correctness of ADEV follows from the updated fundamental lemma. \textbf{\textit{Intuition:}} This step shows that if our program estimates dual numbers correctly (implied by step 2), then our ``wrapper'' that extracts the second component of the dual number to return as the derivative is correct. As in Section~\ref{sec:background}, this step is straightforward.
\end{enumerate}

\begin{figure}[tb]
\fbox{
  \parbox{.97\textwidth}{
      \vspace{-2mm}
      
      \begin{align*}
      \hspace{-4mm}\rel_{\eRR} &= \{(\widetilde{f} : \RR \to \eRR, \widetilde{g} : \RR \to \deRR) \mid \loss := \theta \mapsto \mathbb{E}_{x \sim \widetilde{f}(\theta)}[x] \text{ is well-defined and differentiable}\\ 
      \hspace{-4mm}&\quad\wedge\text{ for all } \theta\in\RR,~  \widetilde{g}(\theta) \text{ estimates } (\loss(\theta), \loss'(\theta)) \text{ (Def.~\ref{def:unbiased_estimator})}\}
        \end{align*}
            \vspace{-4mm}
  }}
      \vspace{-2mm}
\caption{Logical relation for the Probabilistic Combinator DSL}
\label{fig:03logrel}
\end{figure}

\noindent For the first step, we extend our definitions of dual-number relations $\rel_\tau$ from Section~\ref{sec:background} to cover our new type, $\eRR$. The new relation $\rel_{\eRR}$ is presented in Figure~\ref{fig:03logrel}, and captures what it means to be a correct dual-number derivative estimator. Since our logical relations have changed, we need to reprove the fundamental lemma:

\begin{lemma}[fundamental lemma (revised with $\eRR$)]
\label{lem:fund2}
For every term $\Gamma \vdash t : \tau$, $\sem{\ad{t}}$ is a correct dual-number derivative of $\sem{t}$, with respect to the relations $\rel_\tau$ defined at each type (incl. $\eRR$). 
\end{lemma}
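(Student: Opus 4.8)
The plan is to prove Lemma~\ref{lem:fund2} by induction on the typing derivation of $t$, exactly as for Lemma~\ref{lem:fund1}, exploiting the fact that enlarging the family of logical relations by the single new entry $\rel_{\eRR}$ leaves the structural reasoning untouched. Concretely, every inductive case (abstraction, application, pairing, projection, $\llet$) is closed under Definition~\ref{def:correct-dual-number} using only the \emph{logical} shape of the relations at product and function types, which we have not modified; so those cases carry over verbatim from the proof of Lemma~\ref{lem:fund1}, as do the base cases for the old deterministic primitives. Since the extension of Section~\ref{sec:combinator} adds no new term formers---only new constant primitives---the entire content of the revised lemma reduces to checking the new base cases: for each new primitive $c : \tau \to \sigma$ of Figure~\ref{fig:03syntax}, I must show that $\sem{c_\der}$ is a correct dual-number derivative of $\sem{c}$ in the sense of Definition~\ref{def:correct-dual-number}, where the codomain relation is now $\rel_{\eRR}$ (or a product thereof).

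For a primitive $c$ whose image lies in $\eRR$, unfolding Definition~\ref{def:correct-dual-number} against the definition of $\rel_{\eRR}$ (Figure~\ref{fig:03logrel}) turns the obligation into two concrete sub-goals, to be discharged for every family of inputs related at the domain type: (i) the induced loss $\theta \mapsto \mathbb{E}_{x \sim \sem{c}(\cdots)(\theta)}[x]$ is well-defined and differentiable, and (ii) the dual-number estimator produced by $\sem{c_\der}$ is an unbiased estimator of the pair $(\loss(\theta), \loss'(\theta))$ in the sense of Definition~\ref{def:unbiased_estimator}. I would treat the algebraic combinators first, since they set the pattern. For $\etimes$ (Figure~\ref{fig:03someprimitives}), given inputs $(\widetilde{f_i}, \widetilde{g_i}) \in \rel_{\eRR}$ with losses $\loss_i$, the output value estimator samples $r,s$ \emph{independently}, so its mean is $\loss_1 \loss_2$, which is differentiable; and $\sem{\etimes_\der}$ returns $(rs,\, s\,\delta r + r\,\delta s)$, whose two components have expectations $\loss_1\loss_2$ and $\loss_2\loss_1' + \loss_1\loss_2'$ by independence and linearity of expectation. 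The second is exactly $(\loss_1\loss_2)'$ by the product rule, giving (ii). The cases of $\eplus$ and $\exact$ are analogous but simpler, and $\minibatch$ follows the same template once one observes that its finite sum commutes with differentiation and that uniform subsampling is unbiased for the value and derivative components simultaneously.

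The main obstacle will be the nonlinear transform $\eexp$, and more generally any primitive whose built-in derivative must estimate the derivative of a \emph{nonlinear} functional of the input's expectation. Here the naive identity $\mathbb{E}[\exp(X)] = \exp(\mathbb{E}[X])$ fails by Jensen's inequality, so verifying sub-goal (ii) requires reasoning about the specific unbiased construction of $\sem{\eexp}$ in Figure~\ref{fig:03primitives} rather than a one-line independence/linearity argument. The delicate points are establishing that the induced loss is well-defined and differentiable on the relevant domain, and confirming that $\sem{\eexp_\der}$ tracks both the value and its derivative without bias---which, unlike the purely algebraic combinators, may hinge on an interchange of expectation and differentiation that must be justified for that construction. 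Once each new primitive is verified in isolation, the induction closes and the revised fundamental lemma follows; the subsequent second-component extraction wrapper (the analogue of Theorem~\ref{thm:pure-correctness}) is then routine, precisely as anticipated in step~(3) of the proof outline above.
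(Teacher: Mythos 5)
Your proposal matches the paper's proof: the paper likewise reuses the inductive argument of Lemma~\ref{lem:fund1} verbatim for all structural cases and reduces the lemma to checking that each new primitive's built-in derivative satisfies the correctness condition induced by $\rel_{\eRR}$, listing exactly the obligations you state for $\exact$, the $n$-ary operations $\eplus$, $\etimes$, $\eexp$, and $\minibatch$. You in fact go somewhat further than the paper's sketch by discharging the $\etimes$ case explicitly and correctly flagging $\eexp$ (whose Poisson-truncated estimator is where the real unbiasedness work lies) as the delicate base case.
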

\begin{proof}
The proof is the same inductive proof we used for Lemma~\ref{lem:fund1}, except that there are now new base cases: we must show that the interpretation of every new primitive function $c$ has a correct dual-number derivative given by the interpretation of its translation $c_D$.
\begin{itemize}[leftmargin=*]
    \item For $\exact : \RR \to \eRR$, we must check that $\exact_\mathcal{D} (h(\theta), h'(\theta))$ estimates $(h(\theta), h'(\theta))$ for differentiable $h$ (which it clearly does: it returns its input dual number exactly).
    
    
    \item For $exp^{\eRR}$, $\times^{\eRR}$ and $+^{\eRR}$, implementing $n$-ary operations $op$ on estimators, we must check that for all $n$-tuples differentiable functions $(r_1,\dots,r_n) : \RR \to \RR$, if $\widetilde{dr}_i$ estimates $(r_i(\theta), r'_i(\theta))$, then $op(\widetilde{dr}_1, \dots, \widetilde{dr}_n)$ estimates $(op(r_1(\theta), \dots, r_n(\theta)), \frac{d}{d\theta} op(r_1(\theta), \dots, r_n(\theta)))$. 
    
    \item For the primitive $\minibatch$, we must check that if $df : \NN \to \RR \times \RR$ maps each natural number $i$ to the dual number $(f_i(\theta), f_i'(\theta))$ for some differentiable function $f_i$, then $\minibatch_\mathcal{D}\,M\,m\,df$ estimates the dual number $(\sum_{i=1}^M f_i(\theta), \sum_{i=1}^M f'_i(\theta))$.

\end{itemize}
Once we check all these primitives (given in Figure~\ref{fig:03primitives}), the proof is done.
\end{proof}

Finally, we can conclude correctness of the ADEV algorithm on the extended language:
\begin{theorem}[correctness of ADEV on the combinator DSL]
\label{thm:err-correctness}
For all closed terms $\vdash \widetilde{t} : \RR \to \eRR$, $\sem{\lambda \theta : \RR. \snd_* (\ad{\widetilde{t}}\,(\theta, 1))}$ is an unbiased derivative of $\sem{\widetilde{t}}$. 
\end{theorem}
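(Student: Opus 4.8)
The plan is to mirror the proof of Theorem~\ref{thm:pure-correctness} exactly, now using the extended fundamental lemma (Lemma~\ref{lem:fund2}) and the new relation $\rel_{\eRR}$ in the role played there by $\rel_\RR$. All the genuine work — verifying that each new primitive's built-in derivative is correct — has already been absorbed into Lemma~\ref{lem:fund2}, so this statement should follow as a short ``wrapper'' argument: unpack the top-level logical relation, then extract the derivative component. I do not expect any substantive obstacle.

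First I would invoke Lemma~\ref{lem:fund2} on the closed term $\widetilde{t} : \RR \to \eRR$. Since $\ad{\RR \to \eRR} = (\RR \times \RR) \to \deRR$, the lemma says $\sem{\ad{\widetilde{t}}}$ is a correct dual-number derivative of $\sem{\widetilde{t}}$ in the sense of Definition~\ref{def:correct-dual-number}: for every $(g, g') \in \rel_\RR$, the pair $(\sem{\widetilde{t}} \circ g,\, \sem{\ad{\widetilde{t}}} \circ g') \in \rel_{\eRR}$. I would then instantiate this with $g = id$ and $g' = \lambda \theta.\,(\theta, 1)$, noting $(g, g') \in \rel_\RR$ because $g'(\theta) = (g(\theta), g'(\theta))$ and $id$ is differentiable. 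This yields $(\sem{\widetilde{t}},\, \lambda \theta.\, \sem{\ad{\widetilde{t}}}(\theta, 1)) \in \rel_{\eRR}$.

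Next I would unfold the definition of $\rel_{\eRR}$ (Figure~\ref{fig:03logrel}) on this pair. Membership gives two facts simultaneously: that $\loss := \theta \mapsto \mathbb{E}_{x \sim \sem{\widetilde{t}}(\theta)}[x]$ is well-defined and differentiable (which is exactly the hypothesis needed to even speak of an unbiased derivative in Definition~\ref{def:unbiased-deriv}), and that for every $\theta$ the dual-number estimator $\sem{\ad{\widetilde{t}}}(\theta, 1)$ estimates $(\loss(\theta), \loss'(\theta))$. By Definition~\ref{def:unbiased_estimator}, the latter unpacks to $\mathbb{E}_{(x, \delta x) \sim \sem{\ad{\widetilde{t}}}(\theta, 1)}[\delta x] = \loss'(\theta)$.

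Finally I would analyze the output term $\widetilde{s} = \lambda \theta : \RR.\, \snd_*(\ad{\widetilde{t}}(\theta, 1))$. As $\snd_*$ denotes the pushforward of a measure on $\RR \times \RR$ along the second projection, the change-of-variables formula gives $\mathbb{E}_{y \sim \sem{\widetilde{s}}(\theta)}[y] = \mathbb{E}_{(x, \delta x) \sim \sem{\ad{\widetilde{t}}}(\theta, 1)}[\delta x] = \loss'(\theta)$ for every $\theta$, which is precisely the condition in Definition~\ref{def:unbiased-deriv} for $\sem{\widetilde{s}}$ to be an unbiased derivative of $\sem{\widetilde{t}}$. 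The only point deserving care is this last step: one must confirm that marginalizing along $\snd$ transports the expectation correctly and that the finiteness of $\mathbb{E}[\delta x]$ supplied by the previous paragraph justifies treating $\sem{\widetilde{s}}(\theta)$ as a genuine element of $\eRR$ estimating $\loss'(\theta)$. This is the exact analogue of the $\pi_2$-extraction step in Theorem~\ref{thm:pure-correctness}, so I expect it to be routine.
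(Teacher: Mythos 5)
Your proposal is correct and follows exactly the route the paper intends: it is the verbatim analogue of the proof of Theorem~\ref{thm:pure-correctness}, instantiating the fundamental lemma (Lemma~\ref{lem:fund2}) at $(id, \lambda\theta.(\theta,1)) \in \rel_\RR$, unfolding $\rel_{\eRR}$ to obtain that $\sem{\ad{\widetilde{t}}}(\theta,1)$ estimates $(\loss(\theta),\loss'(\theta))$, and then using that $\snd_*$ is pushforward along the second projection to transport the expectation. The paper leaves this step implicit precisely because it is the "straightforward wrapper" described in item (3) of Section~3.5, and your final remark about the change-of-variables/finiteness check for the pushforward is the only point of substance, handled correctly.
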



\section{Differentiating Expected Values of Discrete Probabilistic Programs}
\label{sec:discrete}

In this section, we develop one of the most important ideas in the paper: how to differentiate \textit{expressive} probabilistic programs, with sequencing and branching, compositionally. 
For now we make the simplifying assumption that primitive probability distributions have finite support (e.g., a coin flip, which can take only two possible values). But this is only to simplify the proofs; when we add continuous distributions in Section~\ref{sec:continuous}, the ADEV algorithm itself won't change, only the theory. 

 

\subsection{Syntax and Semantics of the Discrete Probabilistic Programming Language}
\label{sub:syntax_disc_ppl}

Figure~\ref{fig:04syntax} gives the syntax of this section's language. It is an extension of the language from Section~\ref{sec:combinator} with two new features: \textit{sequencing} of probabilistic computations, and \textit{branching}. This greatly increases the expressiveness of the language; e.g., even without the advances of Sections~\ref{sec:continuous} and~\ref{sec:general}, we can already express and differentiate the motivating example program in Figure~\ref{fig:flip-example}.
\begin{figure}[t]
\fbox{
  \parbox{.97\textwidth}{
      \vspace{-2mm}
    \centering
    \begin{align*}
        \text{Types }\type ::=\,& \ldots
        \mid \BB
        \mid \pmonad\type
        \mid \colorbox{gray!15}{$\dpmonad\type $}
        \\
        \text{Terms } \ter ::=\,& \ldots 
        \mid \True \mid \False
        \mid \iifthenelse{\ter}{\ter_1}{\ter_2}
        \mid 
          \haskdo\, \{ \, m \, \}
          \mid  \colorbox{gray!15}{$\dhaskdo\, \{ \, m \, \} $}
          \\
          & \revision{\mid \return\, \ter \mid
        \colorbox{gray!15}{$\dreturn\, \ter $}}\\
        \text{Do notation } m ::=\,&
        \revision{\ter \mid\,}
          \var \gets \ter; \, m\\
        \text{Primitives } c ::=\,& \ldots 
        \mid  \flipreinforce, \flipenum: \II\to \pmonad\BB 
        \mid \mbe : \pmonad\RR\to \eRR
    \end{align*}
    \begin{tabular}{c}
   $\Gamma\vdash \ter:\type$ \\ \hline
   $\Gamma \vdash \return~\ter:\pmonad\type$
\end{tabular}
\quad
\begin{tabular}{c}
    $\Gamma\vdash \ter :\pmonad\type$  \\  \hline
     $\Gamma\vdash \haskdo \{\ter\}:\pmonad\type$
\end{tabular}
\quad
\begin{tabular}{c}
     $\Gamma \vdash\ter:\pmonad\type_1$ \quad 
     $\Gamma,\var:\type_1\vdash \haskdo\{m\} :\pmonad\type$  \\  \hline
     $\Gamma\vdash \haskdo \{\var\gets \ter;m\}:\pmonad\type$
\end{tabular}
\vspace{.1cm}

 \begin{tabular}{c}
   $\Gamma\vdash \ter:\type$ \\ \hline
   $\Gamma \vdash \dreturn~\ter:\dpmonad\type$
\end{tabular}
\quad
\begin{tabular}{c}
    $\Gamma\vdash \ter :\dpmonad\type$  \\  \hline
     $\Gamma\vdash \dhaskdo \{\ter\}:\dpmonad\type$
\end{tabular}
\quad
\begin{tabular}{c}
     $\Gamma \vdash\ter:\dpmonad\type_1$ \quad 
     $\Gamma,\var:\type_1\vdash \dhaskdo\{m\} :\dpmonad\type$  \\  \hline
     $\Gamma\vdash \dhaskdo \{\var\gets \ter;m\}:\dpmonad\type$
\end{tabular}
\vspace{.1cm}

\revision{
\begin{tabular}{c}
     $\Gamma \vdash \ter:\BB$ 
     \quad 
     $\Gamma \vdash \ter_1:\type$  
     \quad 
     $\Gamma \vdash \ter_2:\type$ \\ 
     \hline
     $\Gamma\vdash \iifthenelse{\ter}{\ter_1}{\ter_2}:\type$ 
\end{tabular}}
    \vspace{-1mm}
    \[\llet~\var =\ter;m \text{ is sugar for }\var\gets \return~\ter;m\text{ and }
    \ter;m\text{ for }\_ \gets \ter;m\]
    \vspace{-4mm}
    }}
        \vspace{-2mm}
    \caption{Syntax of the discrete probabilistic language, as an extension to Figs.~\ref{fig:02syntax} and~\ref{fig:03syntax}. \revision{ Gray highlights indicate syntax only present in the target language of the AD macro.}
    }
    \label{fig:04syntax}
\end{figure}

\begin{figure}[tb]
\fbox{
  \parbox{.97\textwidth}{
  \centering
      
  Semantics of types:
  
  \vspace{1mm}
  \begin{tabular}{ll}
      $\sem{\pmonad \type}$ &= $\{\mu : \sem{\tau} \to [0, \infty) \mid \mu$ a probability distribution on $\sem{\tau}$ with finite support $\}$ \\
      $\sem{\dpmonad \type}$ &=  $(\sem{\ad{\type}}\to \eRR_\der)\to \eRR_\der$ \quad\quad
      $\sem{\mathbb{B}}$ = $\{\textbf{True}, \textbf{False}\}$
  \end{tabular}
  \vspace{.1cm}

Semantics of terms:

\vspace{1mm}
\begin{tabular}{lll}
      $\sem{\return}(x)(y) = [x==y]$ 
      & $\sem{\return_\der}(dr) = \lambda dl. dl(dr)$ 
      & $\sem{\mbe}(\mu) = \mu$  \\
     $\sem{\flipenum}(\theta)(b) = b \, ?\,  \theta \, :\,  (1 - \theta)$ 
     & $\sem{\mbe_\der}(f) = f(\sem{\exact_\der})$ 
     & $\sem{\flipreinforce} = \sem{\flipenum}$
\end{tabular}
\vspace{-1mm}
\begin{tabular}{l}
    $\sem{\iif~t~\then~t_1\eelse~t_2}(\rho) = \sem{t}(\rho) \, ? \, \sem{t_1}(\rho)\, : \sem{t_2}(\rho)$ \\
    $ \sem{\haskdo~\{\var\gets\ter;m\}}(\rho)(y) = \sum_{z\in \text{supp}(\sem{\ter}(\rho))}\left(\sem{\ter}(\rho)(z)\times \sem{m}(\rho,z)(y)\right)$ \\
    $ \sem{\haskdo_\der~\{\var\gets\ter;m\}}(\rho)(dl) = \sem{\ter}(\rho)(\lambda v. \sem{m}(\rho[x := v])(dl))$
\end{tabular}
\vspace{1mm}
  }}
      \vspace{-2mm}
\caption{Semantics of the discrete probabilistic language}
\label{fig:disc_semantics}
\end{figure}

\textit{Types:} We introduce a type $\mathbb{B}$ of Booleans, and for each  type $\tau$, a new monadic type $\pmonad \tau$, of (finitely supported) probability distributions over $\sem{\tau}$. In our semantics, we need to fix a way of representing these distributions, and we choose $\sem{\pmonad \tau}$ to be the set of probability mass functions $\sem{\tau} \to [0, \infty)$ with finitely many non-zero values, which form a monad over \textbf{Set}. For $\mu \in \sem{\pmonad\tau}$, we write $\text{supp}(\mu) \subseteq \sem{\tau}$ for the finite subset of inputs at which it is non-zero.\footnote{We emphasize that the choice to represent probability distributions as mass functions in our semantics does \textit{not} mean that the \textit{operational} meaning of a $\pmonad\tau$ term is a mass function evaluator: we think of $\pmonad\tau$ terms as probabilistic programs, which are tractable to \textit{run} (i.e., to draw samples from), but for which it may be extremely expensive to evaluate probabilities.}

\textit{Terms:} Since we now have Booleans, we can introduce an $\iif$ statement. We will not introduce discontinuous comparators like $\leq$ until Section~\ref{sec:general}, so for now the $\iif$ statement is primarily useful for branching on the outcomes of random coin flips. The new term $\texttt{flip} : \mathbb{I} \to \pmonad\mathbb{B}$ (which comes in two flavors, $\flipenum$ and $\flipreinforce$, for reasons we defer to Section~\ref{sec:ad-discrete}) is the key primitive probability distribution. It is parameterized by a number $\theta \in (0, 1)$, and returns $\True$ with probability $\theta$ and $\False$ with probability $1-\theta$. More complex probability distributions can be constructed using the Haskell-inspired $\haskdo\,  \{ x \gets t; m\}$ syntax: it builds a new probabilistic program that first samples $x$ from $\sem{t}$, then runs $\sem{m}$ in an environment extended with the sampled $x$. This allows us to, for example, sequence two coin flips, where the second flip's probability depends on the outcome of the first:
$\haskdo \, \{ b_1 \gets \flipreinforce \, 0.5; \, b_2 \gets \flipreinforce (\iif~b_1~\then~0.2~\eelse~0.4);\, \return~(b_1 \wedge b_2)\} : \pmonad\mathbb{B}.$ 
\\

\noindent\revision{\textbf{The expectation operator.} We now have two types denoting probability distributions over reals:
\begin{itemize}[leftmargin=*]
    \item $\eRR$, the type of \textit{estimators}\textemdash arbitrary probability distributions over $\RR$, that can be composed using the combinator DSL from Section~\ref{sec:combinator}.
    
    \item $\pmonad \RR$, the type of \textit{monadic probabilistic programs returning reals}, which may be composed arbitrarily with downstream probabilistic computation. 
\end{itemize}
The \textit{expectation operator} $\mbe : \pmonad \RR \to \eRR$ casts a probabilistic program returning random real numbers into an \textit{unbiased estimator} of the original program's \textit{expectation}. 
ADEV can then be applied to the resulting esitmator to construct an estimator of its expectation's derivative.} 
Returning to the example from Fig.~\ref{fig:flip-example}, the term $\loss$ has type $\RR \to \eRR$, and is thus a suitable `main function' for ADEV to differentiate\textemdash but it is \textit{constructed} by applying $\mbe$ to a term of type $\pmonad \RR$. Note that because our language has primitives that transform and combine $\eRR$ terms, the user's main function need not be a simple expectation of a probabilistic program\textemdash it can also be the $exp^{\eRR}$ of an expectation, for example, or the $+^{\eRR}$ of two expectations.


\subsection{Differentiating the Probabilistic Language: Three False Starts}

We now face the challenge of extending our ADEV macro $\ad{\cdot}$ to handle this much more expressive probabilistic language. The first step
is to define the macro's action on each new \textit{type} in our language. The Booleans are simple enough ($\ad{\mathbb{B}} = \mathbb{B}$, since they do not track derivative information), but the monadic types $\pmonad\tau$ pose a real hurdle. If a source language term has type $\pmonad\tau$, what type should its translation have? In this section, we first explore three superficially appealing but ultimately problematic answers, before introducing our solution in Section~\ref{sec:ad-discrete}. 
\\

\noindent \textbf{False Start 1: Probabilistic dual number programs.} In Section~\ref{sec:combinator}, we saw how for simple probabilistic computations over $\RR$, it sufficed to translate them to simple probabilistic computations over $\RR \times \RR$. 
Naively, we might wonder whether this approach works at all types: can we define $\ad{\pmonad\tau} = \pmonad(\ad{\tau})$? 
Unfortunately, this simple, structure-preserving choice doesn't work. 
Values of type $\ad{\tau}$ must track both a primal value of type $\tau$, and the way that value depends continuously on an external parameter. At $\RR$, for instance, this is done explicitly using a pair of reals.
But now consider a program of type $P\,\mathbb{B}$, for example $\texttt{flip} \, \theta$. Even though $\mathbb{B}$ is discrete, probability distributions over Booleans \textit{may} depend continuously on parameters, and so $\ad{\pmonad\mathbb{B}}$ values must somehow track both the primal value (a distribution over $\mathbb{B}$) and a dual value (how that distribution changes when $\theta$ changes). But if we choose $\ad{\pmonad\tau} := \pmonad \ad{\tau}$, then we get that $\ad{\pmonad\mathbb{B}} = \pmonad\mathbb{B}$, which can only track the primal value. This loss of information is one of the key reasons why Standard AD can fail when naively applied to probabilistic programs, as depicted in Figure~\ref{fig:flip-example}.
\\

\noindent \textbf{False Start 2: Differentiating the mass function semantics.} Our semantics interprets a term of type $\pmonad\tau$ as a mass function, mapping values of $\sem{\tau}$ to non-negative real probability values.
Viewed in this light, a primitive like $\texttt{flip}$ is actually a real-valued function, in this case from $\mathbb{I} \to \mathbb{B} \to [0, \infty)$.
We already know how to make AD work compositionally with functions of this type; would it work to set $\ad{\pmonad\tau} := \ad{\tau \to \RR} = \ad{\tau} \to \RR \times \RR$? The idea would be that applying AD to a probabilistic program $\vdash p : \RR \to \pmonad \RR$ would give us the derivative of its mass function, $\lambda (\theta, x). \frac{d}{d\theta} \sem{p}(\theta)(x)$. To get derivatives of an expectation, $\frac{d}{d\theta} \sum_{x \in \text{supp}(\sem{p}(\theta))} \sem{p}(x, \theta) \cdot x$, we would then differentiate term-by-term, using the automatically computed derivative. Unfortunately, it is not clear how to handle the fact that $\sem{p}$'s \textit{support} can depend on $\theta$, and relatedly, that the mass functions of probabilistic programs are not always differentiable. Consider, for example, the program $t = \lambda \theta : \RR. \haskdo \,\{b \gets \texttt{flip}\,0.3; \iif~b~\then~\return\,\theta~\eelse~\return~(2\theta)\}$, whose support $\{\theta, 2\theta\}$ depends on $\theta$ and whose mass function, $\sem{t}(\theta)(r) = 0.3[\theta=r] + 0.7[2\theta = r]$, is not differentiable with respect to $\theta$. 
\\

\noindent \textbf{False Start 3: Differentiating the expectation directly.} Ultimately, we only need to differentiate terms of type $\pmonad\tau$ because we care about how they affect the \textit{expectation} of the program they are used within.
This suggests that when we translate a term $p$ of type $\pmonad\tau$, we 
might wish to produce a term that tells us not how $\sem{p}$ itself depends on a parameter $\theta$, but how \textit{expectations} with respect to the distribution $\sem{p}$ depend on the parameter $\theta$. That is, 
can we differentiate the expectation $\mathbb{E}_{x \sim \sem{p}}[f(x)] = \sum_{x \in \text{supp}(\sem{p})} \sem{p}(x) \cdot f(x)$, for a formal expectand $f : \tau \to \RR$?

One way to make good on this intuition is to set $\ad{\pmonad\tau} := \ad{(\tau \to \RR) \to \RR}$. Here, we understand a probability distribution $\mu$ to be a \textit{higher-order function}, taking in an expectand $f : \tau \to \RR$, and outputting the expectation $\sum_{x \in \text{supp}(\mu)} \mu(x) \cdot f(x)$. If we know how to differentiate $\mu$ \textit{as an expectation operator}, then we will know how to differentiate expectations with respect to $\mu$. 

What would this look like in practice? For the primitive $\texttt{flip}$, we would need to implement a built-in derivative $\texttt{flip}_\der$, of type $\mathbb{I} \times \RR \to (\mathbb{B} \to \RR \times \RR) \to \RR \times \RR$. Intuitively, it takes in a dual number $(\theta, \delta \theta) : \mathbb{I} \times \RR$ representing the probability of heads, and dual-number \textit{expectand} $df : \mathbb{B} \to \RR \times \RR$, and returns a dual number with the value and derivative of $$\mathbb{E}_{x \sim \texttt{flip}(\theta)}[\pi_1(df(x))] = \theta \cdot \pi_1(df(\True)) + (1-\theta) \cdot \pi_1(df(\False)).$$ This looks reasonable, and is not hard to implement in practice, using the dual number operators $\times_\der$ and $+_\der$. 
Indeed, this choice turns out to be quite nice. Expectation operators of probability distributions form a submonad of the continuation monad~\citep{vakar2019domain}, so we would be translating one term of monadic type ($t : \pmonad \tau$) to a new term of monadic type ($\ad{t} : (\ad{\tau} \to \ad{\RR}) \to \ad{\RR}$, whose type is equivalent to $\textbf{Cont}_{\ad{\RR}}\, (\ad{\tau})$). 
\revision{Furthermore, if we translate $\haskdo \, \{x \gets t; m\}$ into $\haskdo_{\textbf{Cont}}\, \{x \gets \ad{t}; \ad{m}\}$ and $\return~t$ to $\return_{\textbf{Cont}}~\ad{t}$, we obtain correct (exact) derivatives of compound probabilistic programs' expectations. This is nice in that $\ad{\cdot}$ still preserves even a monadic program's structure, with ``all the action'' happening at the primitives.}

But there is one fatal flaw with this otherwise appealing approach: it computes \textit{exact} derivatives of expectations, by summing over all possible random paths through a program, and in practice this will generally be completely intractable. 

\subsection{ADEV for the Probabilistic Language, Correctly}
\label{sec:ad-discrete}
\begin{figure}[tb]
\fbox{
  \parbox{.97\textwidth}{
  \small{
  \centering
\begin{tabular}{c|c}
    \hspace{-4mm}
     \begin{tabular}{ll}
\ad{\BB} &= $\BB$ \\
        \ad{\pmonad\type} &= $\dpmonad\ad{\type}$\\
        &= $(\ad{\tau} \to \deRR) \to \deRR$
 \end{tabular} 
     & \hspace{-2mm}
     \begin{tabular}{ll}
    
    \ad{\iif~\ter~\then~\ter_1~\eelse~\ter_2} &= $\iif~\ad{\ter}~\then~\ad{\ter_1}~\eelse~\ad{\ter_2}$ \\
    
    \ad{\return~\ter} &= $\dreturn~\ad{\ter}$ \\
    
    \ad{\haskdo\{m\}} &= $\dhaskdo~\{\ad{m}\}$ \\
       
    \ad{\var\gets \ter;m} &= $\var\gets \ad{\ter};\ad{m}$ 
     
    \end{tabular}
\end{tabular}

+ new primitives for the built-in derivatives of $\flipenum$, $\flipreinforce$, and $\mbe$.
}}}
    \vspace{-2mm}
\caption{Extended ADEV macro for the Discrete Probabilistic Programming Language (\S\ref{sec:discrete})
}
\label{fig:04ad}
\end{figure}


\begin{figure}[t]
\fbox{
\parbox{.97\textwidth}{
  \footnotesize{
   \begin{tabular}{ccc}
   \hspace{-.7cm}
\begin{minipage}{.45\linewidth}
     \begin{algorithm}[H]
\DontPrintSemicolon
\SetKwProg{Fn}{}{:}{end}
\Fn{$\flipenum_\der(dp : \II\times \RR, \widetilde{dl}: \BB\to\deRR)$}{
$dl_1\sim \widetilde{dl}~\True$\;
$dl_2\sim \widetilde{dl}~\False$\;
$dr1 \gets (dp ~\times_\der ~dl_1)$ \;
$dr2 \gets ((1,0)-_\der dp)~ \times_\der~ dl_2)$\;
\Return $dr1 +_\der dr2$
}
\end{algorithm}  
\end{minipage}
&
\hspace{-2.1cm}
\begin{minipage}{.3\linewidth}
\begin{algorithm}[H]
\DontPrintSemicolon
\SetKwProg{Fn}{}{:}{end}
\Fn{$\mbe_\der(\widetilde{dl} : \dpmonad (\RR \times \RR))$}{
$\widetilde{dr} \sim \widetilde{dl} (\sem{\exact_\der})$\;
\Return $\widetilde{dr}$
}
\end{algorithm}
\end{minipage}
     & 
     \hspace{-1.8cm}
\begin{minipage}{.55\linewidth}
\begin{algorithm}[H]
\DontPrintSemicolon
\SetKwProg{Fn}{}{:}{end}
\Fn{$\flipreinforce_\der(dp : \II\times \RR, \widetilde{dl}: \BB\to\deRR)$}{
$b\sim \text{Bernouilli}(\fst~dp)$\;
$(l_1,l_2)\sim \widetilde{dl}~b$ \;
$dlp \gets$ \textbf{if} $b$ \textbf{then} $\logs_\der ~dp$ \textbf{else} $\logs_\der ~((1,0)-_\der dp)$\;
$\delta logpdf \gets \snd~dlp$\;
\Return $(l_1,l_2+l_1\times \delta logpdf)$
}
\end{algorithm}
\end{minipage}
\end{tabular}
\vspace{-2mm}
}}}
    \vspace{-2mm}
\caption{Built-in derivatives for our new probabilistic primitives.}
\label{fig:discrete_builtins}
\end{figure}

We present our approach to extending the ADEV macro in Figures~\ref{fig:04ad}-\ref{fig:discrete_builtins}. Our strategy reaps all the benefits of False Start 3 from the previous section, but avoids the fatal flaw:
everywhere that False Start 3 must compute exact expectations of type $\RR$, we permit estimated expectations of type $\eRR$. For example, instead of requiring each primitive to compute intractable exact derivatives of expectations of arbitrary expectands, we allow primitives $p$ to expose procedures for \textit{estimating} the derivatives of expectations $\mathbb{E}_{x \sim p}[f(x)]$, given as input a procedure $\widetilde{df} : \sem{\ad{\type}} \to \deRR$ for \textit{estimating} the value and derivative of the expectand $f$. 
We describe the intuition behind the translation:

\begin{itemize}[leftmargin=*]
    \item \textbf{Understanding the macro at the type level:} Our macro translates terms of probabilistic program type $\pmonad\,\tau$ into terms of \textit{dual-number expectation estimator} type $$\dpmonad\,\ad{\tau} := \textbf{Cont}_{\deRR}\,\ad{\tau} = (\ad{\tau} \to \deRR) \to \deRR.$$
    Given a probabilistic program $\vdash p : \pmonad\tau$, the translation {produces} an {\it algorithm} $\vdash \ad{p} : (\ad{\tau} \to \deRR) \to \deRR$ for estimating the value and derivative of a $\sem{p}$-\textit{expectation}. 
    The generated procedure $\ad{p}$ takes as input a function $\widetilde{df} : \sem{\ad{\tau}} \to \deRR$, which, on input $(x, \delta x)$, estimates some true (dual-number) expectand $(f(x), \delta f(x, \delta x)) : \RR \times \RR$. The goal of the procedure $\sem{\ad{p}}$ is then to estimate $\mathbb{E}_{x \sim \sem{p}}[f(x)]$ and its tangent value.

    \item \textbf{Understanding the macro on $\return$:} One of the simplest probabilistic programs is $\return\, x : \pmonad\tau$, which implements the Dirac delta distribution that returns $x$ with probability 1. The expectation of a function $f$ with respect to this distribution is just $f(x)$. Indeed, our macro translates this term to $\dreturn \, dx : \dpmonad \ad{\tau}$, which, unfolding the definitions, is equivalent to $\lambda \widetilde{df}:\ad{\tau} \to \deRR. \widetilde{df}(dx) : (\ad{\tau} \to \deRR) \to \deRR$. Intuitively, if we know how to estimate the dual number $df(dx)$ for any $dx$, we can also estimate its expectation under the Dirac delta\textemdash just plug in $dx$.
    
    \item \textbf{Understanding the macro on \texttt{flip}:} For the primitive distribution $\texttt{flip}$, we must attach a built-in derivative $\texttt{flip}_\der$, capable of estimating expectations with respect to the Bernoulli distribution, as well as their derivatives. It turns out there are multiple sensible choices, which strike different trade-offs between computational cost and variance. To afford the user maximum flexibility in navigating these trade-offs, we expose two \textit{versions} of \texttt{flip}, $\flipenum$ and $\flipreinforce$, which have the same semantics, but different built-in derivatives. Our implementations of these built-in derivatives are given in Figure~\ref{fig:discrete_builtins}. The estimator $\flipenum_\der$ is the costlier but lower-variance option: to estimate an expected loss, it estimates the expectand on both possible sample values, $\True$ and $\False$, and computes a (dual-number) weighted average. By contrast, $\flipreinforce_\der$ \textit{samples} a value $b$, and only estimates the expectand for that sample value. It then uses the REINFORCE or score-function estimator to estimate the derivative of the expectation. In both cases, we emphasize how the logic of a particular derivative estimation strategy is encapsulated inside a procedure attached to the $\flipenum$ or $\flipreinforce$ primitive. This modular design supports future extensions with new gradient estimation strategies, or with new primitive distributions.
    
    \item \textbf{Understanding the macro on $\haskdo$:} To translate a term that \textit{sequences} probabilistic computations, $\haskdo \, \{x \gets t; m\}$, the macro outputs $\dhaskdo \, \{dx \gets \ad{t}; \ad{m}\}$. This is sugar for the continuation monad; desugaring, if $t : \pmonad\sigma$ and $x : \sigma \vdash \haskdo \{m\} : \tau$, we get that the produced term is equivalent to $\lambda \widetilde{df} : (\ad{\tau} \to \deRR). \, \ad{t}(\lambda dx : \ad{\sigma}. \haskdo_\der \{\ad{m}\}(\widetilde{df}))$. How should we understand this term? In order to estimate an expectation with respect to the \textit{sequence} of computations, we apply the law of iterated expectation ($\mathbb{E}_{(x, y) \sim p}[f(y)]=\mathbb{E}_{x\sim p}[\mathbb{E}_{y \sim p(\cdot \mid x)}[f(y)]]$): we estimate an \textit{expected} [expectation with respect to $\sem{\haskdo \{m\}}$] with respect to $\sem{t}$. The inner expectation is estimated using the translation of $m$, and the outer one is estimated using the translation of $t$. 
    
    \item \textbf{Understanding the macro on $\mbe$:} Once the user has constructed a term $t : \pmonad \RR$, they can construct a term $\mbe \, t : \eRR$, of estimator type. We think of $\mbe \, t$ as an estimator of $\sem{t}$'s expectation, i.e., the expectation of the identity function under the distribution $\sem{t}$. When our macro is applied to $\mbe \, t$,
    we get the term $\mbe_\der \ad{t} : \deRR$, which, as can be seen from Fig.~\ref{fig:discrete_builtins}, is equivalent to $\ad{t}(\lambda dr:\RR\times \RR. \exact_\der dr) : \deRR$. The idea is that $\ad{t}$ is a procedure for estimating expectations (and their derivatives) of \textit{any} dual-number function with respect to $\sem{t}$; we want the expectation of the identity, so we pass in $\lambda dr. \exact_\der \, dr$, (a zero-variance estimator of) the dual-number $id$ function.
\end{itemize}

\begin{figure}
\tikzstyle{mybox} = [rectangle, 
minimum width=2cm, minimum height=1cm,
text centered, 
draw=black, dashed
]

\tikzstyle{arrow} = [thick,->,>=stealth]

\resizebox{.9\textwidth}{!}{
\begin{tikzpicture}[node distance=2cm]
\node (start1) [mybox] {
\small{
\begin{tabular}{l} 
    $\loss = \lambda \theta:\RR.\,  \mbe (\haskdo~\{$\\
    $\quad b \gets \flipreinforce\,\theta$\\
    $\quad \iif~b~\then$\\
    $\quad\quad \return~0$\\
    $\quad\eelse$\\
    $\quad\quad \return~(\theta \div {-2})\})$
\end{tabular}}
};
\node (start2) [mybox, right of=start1, xshift=3cm] {\small{
\begin{tabular}{l}
    $d\loss = \lambda\, d\theta:\RR\times\RR.\,  \mbe_\der (\dhaskdo~\{$\\
    $\quad b \gets \flipreinforce_\der\,d\theta$\\
    $\quad \iif~b~\then$\\
    $\quad\quad \return_\der~(0,0)$\\
    $\quad\eelse$\\
    $\quad\quad \return_\der~(d\theta \div_\der ({-2}, 0))\})$
\end{tabular}}
};
\draw [arrow] (start1) -- node[anchor=south] {
\small{
\begin{tabular}{c}
   $\ad{\cdot}$
\end{tabular}}
} (start2);

\node (start3) [mybox, right of=start2, xshift=3.7cm] {\small{
\begin{tabular}{l} 
     $d\loss = \lambda\, d\theta:\RR\times\RR.\,  \mbe_\der (\lambda \widetilde{dl}.$\\   $\quad\flipreinforce_\der\,d\theta\,(\lambda b.$\\
    $\quad\quad\iif~b~\then$\\
    $\quad\quad\quad \widetilde{dl}(0,0)$\\
    $\quad\quad\eelse$\\
    $\quad\quad\quad \widetilde{dl}(d\theta \div_\der ({-2}, 0))))$
\end{tabular}}
};
\draw [arrow] (start2) -- node[anchor=south] {\small{
\begin{tabular}{c}
    desugar
    \\ 
    $\dhaskdo$
\end{tabular}}
} (start3);

\node (start4) [mybox, below of=start3, xshift=-0.5cm, yshift=-1.5cm] {\small{
\begin{tabular}{l} 
    $d\loss = \lambda\, d\theta:\RR\times \RR.\,$\\
    $\quad\flipreinforce_\der\,d\theta\,(\lambda b.$\\
    $\quad\quad\iif~b~\then$\\
    $\quad\quad\quad \exact_\der(0,0)$\\
    $\quad\quad\eelse$\\
    $\quad\quad\quad \exact_\der(d\theta \div_\der ({-2}, 0)))$
\end{tabular}}
};

\draw [arrow] (start3) -- node[anchor=east] {\small{apply $\mbe_\der$}} (start4.70);

\node (start5) [mybox, left of=start4, xshift=-7.5cm,yshift=-1.7cm] {\small{
\begin{tabular}{l} 
    $d\loss = \lambda\, d\theta:\RR\times \RR.\,$\\
    $\quad(\lambda (\theta, \delta \theta). \lambda \widetilde{dl}. $\\
    $\quad\quad\mbe(\haskdo \{$ \\
    $\quad\quad\quad b \gets \flipreinforce \theta$\\
    $\quad\quad\quad (l, \delta l) \gets \widetilde{dl}$\\
    $\quad\quad\quad \llet~\delta logpdf = \iif~b~\then~$\\
    $\quad\quad\quad\quad\quad \delta \theta \div \theta~$\\
    $\quad\quad\quad\quad\eelse$\\
    $\quad\quad\quad\quad\quad \delta\theta \div (\theta - 1)$\\
    $\quad\quad\quad \return (l, \delta l + l \times \delta logpdf)$\\
    $\quad\quad\}))(d\theta)(\lambda b.$\\
    $\quad\quad\quad\iif~b~\then$\\
    $\quad\quad\quad\quad \exact_\der(0,0)$\\
    $\quad\quad\quad\eelse$\\
    $\quad\quad\quad\quad \exact_\der(d\theta \div_\der ({-2}, 0)))$
\end{tabular}}
};

\draw [arrow] (start4.180) -- node[anchor=south] {\small{
\begin{tabular}{c}
     inline  \\
     $\flipreinforce_\der$
\end{tabular}}
} (start5.34);


\node (start7) [mybox, below of=start4, yshift=-1.5cm, xshift=-0.2cm] {\small{
\begin{tabular}{l} 
     $d\loss = \lambda\, (\theta, \delta \theta):\RR\times\RR.\,\mbe(\haskdo\{$\\
    $\quad b \gets \flipreinforce \theta$\\
    $\quad \iif~b~\then$\\
    $\quad\quad\return~(0,0)$\\
    $\quad\eelse$\\
    $\quad\quad\llet~(l, \delta l) =(\theta, \delta\theta) \div_\der ({-2}, 0)$\\
    $\quad\quad \llet~\delta logpdf = \delta\theta \div (\theta - 1)$\\
    $\quad\quad \return (l, \delta l + l \times \delta logpdf)\})$
\end{tabular}}
};

\draw [arrow] (start5.-35) -- node[anchor=south] {\small{
$\beta$-reduce}
} (start7.180);


\node (start8) [mybox, below of=start5, yshift=-3cm] {\small{
\begin{tabular}{l} 
      $\loss' = \lambda\, \theta:\RR.\,\mbe(\haskdo\{$\\
    $\quad b \gets \flipreinforce \theta$\\
    $\quad \iif~b~\then$\\
    $\quad\quad\return~0$\\
    $\quad\eelse$\\
    $\quad\quad\llet~(l, \delta l) = (\theta, 1) \div_\der ({-2}, 0)$\\
    $\quad\quad \llet~\delta logpdf = 1 \div (\theta - 1)$\\
    $\quad\quad \return (\delta l + l \times \delta logpdf)\})$
\end{tabular}}
};

\draw [arrow] (start7.211) -- node[anchor=north] {\small{
\begin{tabular}{c}
   apply to $(\theta, 1)$, \\
   extract dual component
\end{tabular}}
} (start8.33);

\node (start9) [mybox, right of=start8, xshift=7.6cm,yshift=-1cm] {\small{
\begin{tabular}{l} 
      $\loss' = \lambda\, \theta:\RR.\,\mbe(\haskdo\{$\\
    $\quad b \gets \flipreinforce \theta$\\
    $\quad \iif~b~\then$\\
    $\quad\quad\return~0$\\
    $\quad\eelse$\\
    $\quad\quad \llet~l = -\theta \div 2$\\
    $\quad\quad \llet~\delta l = -1 \div 2$\\
    $\quad\quad \llet~\delta logpdf = 1 \div (\theta - 1)$\\
    $\quad\quad \return (\delta l + l \times \delta logpdf)\})$
\end{tabular}}
};

\draw [arrow] (start8.-22) -- node[anchor=north] {\small{
\begin{tabular}{c}
    perform \\
    dual-number arithmetic
\end{tabular}}
} (start9);

\end{tikzpicture}}
    \vspace{-3mm}
\caption{How ADEV, applied to the example program from Fig.~\ref{fig:flip-example}, derives the term on the bottom right.
The ADEV macro $\ad{\cdot}$ is itself very simple, changing only
constants and primitives, just as in forward-mode AD.
After applying it, we partially evaluate the resulting term
for clarity, but these are \textit{not} new transformations.
(NB: We overload $\mbe : \pmonad\RR \to \eRR$ to also work on inputs of $\pmonad (\RR \times \RR)$ type, yielding output of type $\deRR$.)}

\label{fig:04reduction-example}
\end{figure}

\subsection{Correctness of ADEV on the Probabilistic Language} 
The overall ADEV workflow, and the statement of the overall correctness theorem for ADEV, will not change from Section~\ref{sec:combinator}: the user still ultimately constructs a program of type $\RR \to \eRR$, and 
it is still our job to differentiate the expectation of that program. The novelty in this section is
that now, the user has more tools for \textit{constructing} the final $\RR \to \eRR$ function, most notably
the ability to construct probabilistic programs and pass them to $\mbe$.
To establish correctness, we need to define new dual-number logical relations, defining appropriate notions of correct derivative for each new type. Then, we will need to reprove the fundamental lemma, by adding cases to our inductive proof for every new term constructor we added in this section.

\noindent\textbf{Defining the new logical relations.} The new dual number logical relations for $\mathbb{B}$ and $\pmonad~\tau$ are given in Fig.~\ref{fig:04logrel}. The relation for $\mathbb{B}$ is essentially the same one we had for $\NN$, another discrete type, reflecting that the only differentiable functions from $\RR$ into the Booleans are the constant functions. 

The relation at the probabilistic program type $\pmonad\tau$ is more interesting.
Its goal is to relate a parameterized probabilistic program $f : \RR \to \sem{\pmonad \tau}$
to the algorithm $\widetilde{g} : \RR \to (\sem{\ad{\tau}} \to \deRR) \to \deRR$ for 
estimating derivatives of expectations with respect to it.
As an intermediate step for understanding the correctness relationship that must
hold between $f$ and $\widetilde{g}$, let's first consider a simpler algorithm than $\widetilde{g}$,
that simply estimates expectations of $f$, not their derivatives. Such
an algorithm $\widetilde{a}$ would have a similar type to $\widetilde{g}$, but would have no need for dual numbers:
the program $\widetilde{a} : \RR \to (\sem{\tau} \to \eRR) \to \eRR$ would take as input a parameter 
$\theta$ and an estimated expectand $\widetilde{l} : \sem{\tau} \to \eRR$, and return an estimator 
of $\mathbb{E}_{x \sim f(\theta)}[l(x)]$, where $l(x) = \mathbb{E}_{y \sim \widetilde{l}(x)}[y]$.
One way of implementing such an $\widetilde{a}$ would be to have it sample $x \sim f(\theta)$, 
then sample $y \sim \widetilde{l}(x)$, then return $y$. That is, $\widetilde{a}$ is just the
\textit{monadic bind}, in the underlying semantic space of probability measures, of $f(\theta)$ (a probability
distribution over $\sem{\tau}$) with the continuation $\widetilde{l}$ (a probability kernel from $\sem{\tau}$ to $\RR$). 
Mathematically, we can write 
$$\widetilde{a} = \lambda \theta. \lambda \widetilde{l}. \sqint \widetilde{l}(x) f(\theta, \text{d}x),$$
where we have borrowed \textit{Kock integral} notation $\sqint \nu(y) \mu(x, dy)$ for binding a kernel $\mu : X \to \pmonad Y$ to a continuation kernel $\nu : Y \to \pmonad Z$ from synthetic measure theory~\citep{kock2011commutative,scibior2017denotational}.

\begin{figure}[tb]
\fbox{
  \parbox{.97\textwidth}{
    \vspace{-2mm}
      \begin{align*}
      \rel_\BB &= \{(f : \RR \to \BB, g : \RR \to \BB) \mid f \text{ is constant } \wedge f = g\} \\
        \rel_{\pmonad ~\type} &= \{(f:\RR\to \pmonad~\type,\widetilde{g}:\RR\to (\ad{\type}\to \deRR)\to \deRR) \mid \\
        &\quad (\lambda \theta.\lambda \widetilde{l}:\tau\to\eRR.\sqint \widetilde{l}(x)f(\theta)(dx), \widetilde{g})\in \rel_{(\tau\to\eRR)\to\eRR}\}
        \end{align*}
        \vspace{-.3cm}
  }}
      \vspace{-2mm}
\caption{Definition of the dual-number logical relation  for our Discrete Probabilistic language}
\label{fig:04logrel}
\end{figure}

Now, what we want from the algorithm $\widetilde{g}$, which estimates \textit{dual-number derivatives} 
of expectations, is that it be a correct dual-number derivative of \textit{the expectation
estimation algorithm $\widetilde{a}$}. This is exactly what our logical relation $\rel_{\pmonad\tau}$
says (Figure~\ref{fig:04logrel}, inlining the definition of $\widetilde{a}$ we gave above).
\\

\noindent\textbf{Correctness of primitives.} Using this definition, we can work out a precise statement of the specification that a custom built-in
derivative for a primitive $\RR \to \pmonad \tau$ (such as $\flipenum$ and $\flipreinforce$) must meet.
It arises as a special case of Definition~\ref{def:correct-dual-number}, for the type $\RR \to \pmonad\tau$:

\begin{definition}[correct dual-number expectation estimator]
\label{def:correct-exp-est}
Let $p \in \RR \to \sem{\pmonad\tau}$ be a probability kernel from $\RR$ to $\sem{\tau}$. Then $p_D : \sem{\ad{\RR}} \to  \sem{\dpmonad \,\ad{\tau}}$ is a correct dual-number expectation estimator for $p$ if for all $(\widetilde{f} : \RR \to \sem{\tau} \to \eRR, \widetilde{g} : \RR \to \sem{\ad{\tau}} \to \deRR) \in \rel_{\tau \to \eRR}$, and all differentiable functions $h : \RR \to \RR$, $p_D(h(\theta), h'(\theta))(\widetilde{g}(\theta))$ estimates the dual number $(\mathbb{E}_{x \sim p(h(\theta))}[\mathbb{E}_{y \sim \widetilde{f}(\theta)(x)}[y]], \frac{d}{d\theta} \mathbb{E}_{x \sim p(h(\theta))}[\mathbb{E}_{y \sim \widetilde{f}(\theta)(x)}[y]])$. 
\end{definition}

The idea is that a built-in derivative for a probabilistic primitive (e.g. $\flipreinforce_\der$ for the primitive $\flipreinforce$) receives two inputs: (1) a dual-number parameter, $(h(\theta), h'(\theta))$, that is already tracking its own derivative with respect to some underlying parameter $\theta$, and 
(2) the expectand-estimator $\widetilde{g}_\theta : \sem{\ad{\tau}} \to \deRR$, which is in general a \textit{closure} that may have captured the underlying parameter $\theta$. When returning an estimated expectation and derivative of the expectation, $\flipreinforce_\der$ must account for both the way that $\theta$ influences the sampling distribution of $x \sim \flipreinforce(\theta)$, and also how it influences the closure whose expectation
is being estimated. It is instructive to go through the exercise of showing \textit{why} (for example) $\flipreinforce$'s built-in derivative satisfies this specification: the argument combines the standard REINFORCE estimator with the use of dual numbers to propagate derivatives.

\begin{lemma}
\label{lem:flipreinforce-correct}
$\sem{\flipreinforce_\der}$ is a correct dual-number expectation estimator for $\sem{\flipreinforce}$.
\end{lemma}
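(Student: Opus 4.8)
The plan is to specialize Definition~\ref{def:correct-exp-est} to $p = \sem{\flipreinforce}$ and $\tau = \BB$, and to verify the resulting condition directly by computing the two coordinate expectations of the output estimator. Fix a pair $(\widetilde{f} : \RR \to \sem{\BB} \to \eRR,\ \widetilde{g} : \RR \to \sem{\ad{\BB}} \to \deRR) \in \rel_{\BB \to \eRR}$ and a differentiable $h : \RR \to \RR$, and write $p := h(\theta)$, $\delta p := h'(\theta)$. I must show that $\sem{\flipreinforce_\der}(p, \delta p)(\widetilde{g}(\theta))$ estimates the dual number $(L(\theta), L'(\theta))$, where $L(\theta) = \mathbb{E}_{x \sim \text{Bernoulli}(p)}[\mathbb{E}_{y \sim \widetilde{f}(\theta)(x)}[y]]$. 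The first move is to extract usable information from the hypothesis. Since $\ad{\BB} = \BB$ and $\rel_\BB$ contains exactly the two constant maps $\lambda r.\True$ and $\lambda r.\False$, testing $(\widetilde{f},\widetilde{g}) \in \rel_{\BB \to \eRR}$ against each of these (per the definition of $\rel_{\tau_1 \to \tau_2}$) and unfolding $\rel_{\eRR}$ yields, for each $b \in \{\True,\False\}$, that $l_b(\theta) := \mathbb{E}_{y \sim \widetilde{f}(\theta)(b)}[y]$ is well-defined and differentiable and that $\widetilde{g}(\theta)(b)$ estimates $(l_b(\theta), l_b'(\theta))$. In particular $L(\theta) = p\, l_\True(\theta) + (1-p)\, l_\False(\theta)$.

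Next I would read off the sampling procedure of $\flipreinforce_\der$ (Fig.~\ref{fig:discrete_builtins}) and compute the expectation of each returned coordinate by the tower property, conditioning on the sampled Boolean $b \sim \text{Bernoulli}(p)$. Given $b$, the pair $(l_1, l_2)$ is drawn from $\widetilde{g}(\theta)(b)$, so $\mathbb{E}[l_1 \mid b] = l_b(\theta)$ and $\mathbb{E}[l_2 \mid b] = l_b'(\theta)$ by the previous paragraph; moreover $\delta logpdf$ is a deterministic function of $b$, equal to $\delta p / p$ when $b = \True$ (from $\logs_\der\,dp$) and to $-\delta p/(1-p)$ when $b = \False$ (from $\logs_\der((1,0) -_\der dp)$). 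The primal coordinate is then immediate: $\mathbb{E}[l_1] = \mathbb{E}_b[l_b(\theta)] = p\,l_\True(\theta) + (1-p)\,l_\False(\theta) = L(\theta)$. All expectations here are finite sums over the two-point set $\sem{\BB}$ of quantities the hypothesis guarantees to be finite, so the output is a genuine element of $\deRR$.

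The crux is the dual coordinate, $\mathbb{E}[l_2 + l_1 \cdot \delta logpdf]$, which splits by linearity into $\mathbb{E}[l_2] + \mathbb{E}[l_1 \cdot \delta logpdf]$. The first summand is $\mathbb{E}_b[l_b'(\theta)] = p\,l_\True'(\theta) + (1-p)\,l_\False'(\theta)$. For the second — the \emph{REINFORCE} / score-function term — the key observation is that conditioning on $b$ makes $\delta logpdf$ constant, hence independent of the internal randomness of $\widetilde{g}(\theta)(b)$, so $\mathbb{E}[l_1 \cdot \delta logpdf \mid b] = \delta logpdf(b)\cdot l_b(\theta)$; summing against $\text{Bernoulli}(p)$, the weights $p$ and $1-p$ cancel the denominators of $\delta logpdf$ and leave $\delta p\,(l_\True(\theta) - l_\False(\theta))$. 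Adding the two summands gives
\[
p\,l_\True'(\theta) + (1-p)\,l_\False'(\theta) + h'(\theta)\big(l_\True(\theta) - l_\False(\theta)\big),
\]
which is exactly the product-rule expansion of $\frac{d}{d\theta}\big[h(\theta) l_\True(\theta) + (1 - h(\theta)) l_\False(\theta)\big] = L'(\theta)$. This recognition step — that the extra term produced by the score function is precisely the contribution of differentiating the sampling weights — is where all the content of the lemma sits. I expect the main obstacle to be stating the conditioning argument for the score term carefully enough that the independence of $\delta logpdf$ from the estimator's internal randomness given $b$ is rigorous; there is no genuine analytic difficulty here, since the finiteness of $\sem{\BB}$ trivializes every integrability and limit-swapping concern.
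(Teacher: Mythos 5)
Your proof is correct and is essentially the paper's argument read in the opposite direction: the paper starts from $\frac{d}{d\theta}\mathbb{E}_{x\sim Bern(h(\theta))}[\mathbb{E}_{y\sim\widetilde{f}(\theta)(x)}[y]]$ and rewrites it via the log-derivative trick and the hypothesis $(\widetilde{f},\widetilde{g})\in\rel_{\BB\to\eRR}$ into the expectation of what $\flipreinforce_\der$ returns, whereas you compute the expectation of the algorithm's output and match it against the product-rule expansion of $L'(\theta)$. Your explicit extraction of per-branch differentiability by testing the relation against the constant maps $\lambda r.\True$ and $\lambda r.\False$ makes rigorous a step the paper leaves implicit, but the decomposition into a pathwise term plus a score-function term, and the use of the hypothesis, are the same.
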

\begin{proof}
\allowdisplaybreaks
Let $h : \RR \to \RR$ differentiable, and $(\widetilde{f} : \RR \to \mathbb{B} \to \eRR, \widetilde{g} : \RR \to \mathbb{B} \to \deRR) \in \rel_{\BB \to \eRR}$. Then by the definition of correct dual-number derivative, $\sem{\flipreinforce_\der}$ should estimate
\begin{align}
    \frac{d}{d\theta} &\mathbb{E}_{x \sim Bern(h(\theta))}[\mathbb{E}_{y \sim \widetilde{f}(\theta)(x)}[y]]\\
    & \textit{(standard REINFORCE estimator, based on log derivative trick)}\notag\\
    &= \mathbb{E}_{x \sim Bern(h(\theta))}\left[\left(\frac{d}{d\theta} \log Bern(x; h(\theta))\right)\mathbb{E}_{y \sim \widetilde{f}(\theta)(x)}[y] + \frac{d}{d\theta}\mathbb{E}_{y \sim \widetilde{f}(\theta)(x)}[y]\right]\\
    & \textit{(use the fact that $(\widetilde{f}, \widetilde{g}) \in \rel_{\tau \to \eRR}$ to rewrite both terms)}\notag\\
    &= \mathbb{E}_{x \sim Bern(h(\theta))}\left[\left(\frac{d}{d\theta} \log Bern(x; h(\theta))\right)\mathbb{E}_{(y, \delta y) \sim \widetilde{g}(\theta)(x)}[y] + \mathbb{E}_{(y, \delta y) \sim \widetilde{g}(\theta)(x)}[\delta y]\right]\\
    & \textit{(push log density term inside expectation, then combine expectations)}\notag\\
    &= \mathbb{E}_{x \sim Bern(h(\theta))}\left[\mathbb{E}_{(y, \delta y) \sim \widetilde{g}(\theta)(x)}\left[y\cdot\left(\frac{d}{d\theta} \log Bern(x; h(\theta))\right) + \delta y\right]\right]\\
    & \textit{(evaluating the derivative)}\notag\\
    &= \mathbb{E}_{x \sim Bern(h(\theta))}\left[\mathbb{E}_{(y, \delta y) \sim \widetilde{g}(\theta)(x)}\left[y\cdot\left(\frac{-1^{1-x} \cdot h'(\theta)}{Bern(x; h(\theta))}\right) + \delta y\right]\right].
\end{align}
This final expression can be estimated using just the dual number $(h(\theta), h'(\theta))$, and the function $\widetilde{g}$, by generating $x \sim Bern(x; h(\theta))$, then $(y, \delta y) \sim \widetilde{g}(\theta)(x)$, and then returning the value from line (5) above. This is exactly what $\sem{\flipreinforce_\der}$ does, to compute the tangent value it returns.
\end{proof}

Proving every primitive correct in a similar manner, following Section~\ref{sec:ad-discrete}'s logic, we can derive:

\begin{lemma}[Fundamental lemma (revised with $\mathbb{B}$ and $\pmonad\tau$)]
For every term $\Gamma \vdash t : \tau$, $\sem{\ad{t}}$ is a correct dual-number derivative of $\sem{t}$, w.r.t. the relations $\rel_\tau$ defined at each type (incl. $\mathbb{B}$ and $\pmonad\tau$).
\end{lemma}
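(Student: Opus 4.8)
The plan is to proceed exactly as in the proofs of Lemmas~\ref{lem:fund1} and~\ref{lem:fund2}, by induction on the typing derivation of $t$. Every typing rule inherited from Sections~\ref{sec:background} and~\ref{sec:combinator} (variables, tuples, abstraction, application, $\llet$, and the previously treated primitives) is handled verbatim by the standard logical-relations machinery, since the relations $\rel_\tau$ at product and function types are the usual ones and are unchanged. Consequently all the genuinely new work lives in the cases introduced in this section: the base cases for $\True$, $\False$, and the primitives $\flipenum$, $\flipreinforce$, $\mbe$; and the inductive cases for $\iif$, $\return$, and the do-notation $x \gets t;\, m$.

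First I would dispatch the easy cases. For $\True$ and $\False$, the relation $\rel_\BB$ only relates a constant function to itself, and since $\ad{\True} = \True$ and $\ad{\False} = \False$ denote the same constant, membership is immediate. For $\iif~t~\then~t_1~\eelse~t_2$, any pair in $\rel_\BB$ arising from the guard forces its denotation to be a constant Boolean, so the branch selected is independent of the parameter; the case then collapses to the inductive hypothesis on whichever of $t_1, t_2$ is taken. For the probabilistic primitives, $\flipreinforce$ is exactly Lemma~\ref{lem:flipreinforce-correct}; $\flipenum$ is the analogous (simpler) computation, where enumerating both branches and forming a dual-number weighted average manifestly estimates the expectation and, via the product and sum rules encoded in $\times_\der$ and $+_\der$, its derivative; and $\mbe$ reduces to feeding the zero-variance dual-number identity estimator $\sem{\exact_\der}$ into the continuation, which by the definition of $\rel_{\pmonad\RR}$ yields an estimator of the expectation and its derivative.

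The crux is the monadic structure, $\return$ and sequencing. Here I would exploit the way $\rel_{\pmonad\tau}$ has been engineered: it declares $(f, \widetilde{g})$ related precisely when the \emph{expectation operator} $\widetilde{a}_f := \lambda\theta.\lambda\widetilde{l}.\,\mathbb{E}_{x\sim f(\theta)}[\mathbb{E}_{y\sim\widetilde{l}(x)}[y]]$ associated to the kernel $f$ is related to $\widetilde{g}$ by $\rel_{(\tau\to\eRR)\to\eRR}$. Because $\ad{\pmonad\tau}$ is literally $\ad{(\tau\to\eRR)\to\eRR}$, and the macro sends $\return$ and $\haskdo$ to their continuation-monad counterparts, the whole monadic layer is already a construction over function types and $\eRR$, all of whose logical relations are settled by the inductive machinery. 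Concretely, for $\return~t$ I would check that $\widetilde{a}_{\sem{\return~t}} = \lambda\theta.\lambda\widetilde{l}.\widetilde{l}(\sem{t}(\theta))$ is the continuation-monad unit and that $\sem{\dreturn~\ad{t}}$ is its dual; relatedness then follows from the inductive hypothesis on $t$ together with the already-proven function-type case. For $x \gets t;\, m$ I would invoke the law of iterated expectation to rewrite $\widetilde{a}$ of the bind as the continuation-monad bind of $\widetilde{a}$-of-$t$ with $\widetilde{a}$-of-$m$, so that the case reduces to the standard correctness of bind in the continuation monad, applied to the inductive hypotheses for $t$ and $m$.

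I expect the sequencing case to be the main obstacle, for two reasons. First, one must verify carefully that the semantic embedding $f \mapsto \widetilde{a}_f$ is a monad morphism from the probability monad into the continuation monad---that it genuinely commutes with bind, which is where iterated expectation does the real work---so that the target-language $\dhaskdo$ lines up exactly with the embedded source bind. Second, the relations $\rel_{\eRR}$ and $\rel_{\pmonad\tau}$ carry side conditions (the induced loss is well-defined and differentiable), and these must be shown to propagate through $\return$ and bind. In the present finite-support setting this is benign---expectations are finite sums, so existence is automatic and differentiability of the induced loss follows from differentiability of each summand together with the support being locally constant in $\theta$---but it is precisely this bookkeeping that becomes the central difficulty once continuous distributions are admitted in Section~\ref{sec:continuous}.
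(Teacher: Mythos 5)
Your proposal follows essentially the same route as the paper: induction on typing derivations with the inherited cases discharged by the standard logical-relations machinery, new base cases for $\True$, $\False$, $\iif$, $\flipenum$, $\flipreinforce$, and $\mbe$ (with $\flipreinforce$ being exactly Lemma~\ref{lem:flipreinforce-correct}), and the monadic constructs $\return$ and $\haskdo$ reduced to the function-type cases via the continuation-monad embedding built into the definition of $\rel_{\pmonad\tau}$, with the law of iterated expectation doing the work for sequencing. The paper leaves these steps as a remark ("proving every primitive correct in a similar manner"), and your write-up fills them in faithfully, including the correct observation that the well-definedness side conditions are automatic in the finite-support setting.
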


Having reproved the fundamental lemma, the proof of Theorem~\ref{thm:err-correctness} goes through unchanged:

\begin{theorem}[correctness of ADEV for the discrete probabilistic language]
\label{thm:correctness_discrete}
For all closed terms $\vdash t : \RR \to \eRR$, $\sem{\lambda \theta : \RR. \snd_* (\ad{t} \, (\theta, 1))}$ is an unbiased derivative of $\sem{t}$.
\end{theorem}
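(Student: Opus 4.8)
The plan is to reuse, essentially verbatim, the wrapper argument already established for Theorem~\ref{thm:pure-correctness} and Theorem~\ref{thm:err-correctness}. The crucial observation is that although the source language is now far richer, the \emph{type} of the user's main program is unchanged: $t$ still has type $\RR \to \eRR$, so $\ad{t}$ still has type $\RR \times \RR \to \deRR$, exactly as in Section~\ref{sec:combinator}. All the new expressiveness (probabilistic sequencing, branching, and the $\mbe$ operator) is absorbed into the \emph{interior} of $t$ and $\ad{t}$; the only logical relation the wrapper must invoke at the outer level is still $\rel_{\RR \to \eRR}$, built from $\rel_\RR$ and $\rel_{\eRR}$ (Figure~\ref{fig:03logrel}). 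Consequently the single new ingredient is the updated fundamental lemma for the extended language (revised with $\BB$ and $\pmonad\tau$), which I will assume.

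Concretely, I would proceed in four steps. First, apply the fundamental lemma to the closed term $t$: it yields that $\sem{\ad{t}}$ is a correct dual-number derivative of $\sem{t}$ in the sense of Definition~\ref{def:correct-dual-number}, i.e.\ for every $(f,g) \in \rel_\RR$ we have $(\sem{t} \circ f, \sem{\ad{t}} \circ g) \in \rel_{\eRR}$. Second, instantiate $f = id$ and $g = \lambda\theta.(\theta,1)$; since $g(\theta) = (f(\theta), f'(\theta))$, the pair $(f,g)$ lies in $\rel_\RR$. Third, read off that $(\sem{t}, \lambda\theta.\sem{\ad{t}}(\theta,1)) \in \rel_{\eRR}$, and unfold the definition of $\rel_{\eRR}$ to obtain precisely the two facts we need: the loss $\loss(\theta) = \mathbb{E}_{x \sim \sem{t}(\theta)}[x]$ is well-defined and differentiable, and for every $\theta$ the distribution $\sem{\ad{t}}(\theta,1)$ estimates the dual number $(\loss(\theta), \loss'(\theta))$. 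Finally, form the pushforward $\snd_*(\sem{\ad{t}}(\theta,1))$ along the tangent coordinate; its expectation equals $\mathbb{E}_{(x,\delta x) \sim \sem{\ad{t}}(\theta,1)}[\delta x] = \loss'(\theta)$, so $\snd_*(\sem{\ad{t}}(\theta,1))$ is an unbiased estimator of $\loss'(\theta)$. This is exactly the condition that $\sem{\lambda\theta.\snd_*(\ad{t}(\theta,1))}$ be an unbiased derivative of $\sem{t}$ (Definition~\ref{def:unbiased-deriv}).

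I do not expect any difficulty in the theorem itself; as the text observes, this wrapper step goes through unchanged from Section~\ref{sec:combinator}. The real content, and hence the real obstacle, lives inside the fundamental lemma, in the new base cases for the probabilistic primitives and the inductive cases for the monadic constructs. For the primitives one must verify the specification of Definition~\ref{def:correct-exp-est}, namely that $\mbe_\der$, $\flipenum_\der$, and $\flipreinforce_\der$ are correct dual-number expectation estimators; the $\flipreinforce_\der$ case is the delicate one (Lemma~\ref{lem:flipreinforce-correct}), since its built-in derivative must simultaneously account for how the parameter shifts the Bernoulli sampling law (via the REINFORCE/log-derivative trick) and how the parameter flows through the captured continuation (via the propagated dual numbers). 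For the monadic constructs, the argument hinges on showing that translating $\return$ and $\haskdo$ into the continuation-monad operations on $\deRR$ preserves $\rel_{\pmonad\tau}$; this is where the law of iterated expectation—equivalently, the behavior of the Kock-integral bind used to define $\rel_{\pmonad\tau}$ in Figure~\ref{fig:04logrel}—does the essential work. Once those cases are discharged, the standard logical-relations machinery for products and functions, together with the wrapper above, completes the proof.
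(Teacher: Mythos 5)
Your proposal is correct and follows the paper's own route exactly: the paper proves this theorem by observing that, once the fundamental lemma is re-established with the new cases for $\BB$ and $\pmonad\tau$, the wrapper argument of Theorem~\ref{thm:err-correctness} (instantiating the logical relation at $f = id$, $g = \lambda\theta.(\theta,1)$, unfolding $\rel_{\eRR}$, and pushing forward the tangent component) goes through unchanged. You also correctly locate the real mathematical content in the new base cases of the fundamental lemma (the built-in derivatives of $\flipenum$, $\flipreinforce$, and $\mbe$) and the monadic inductive cases, which is precisely where the paper spends its effort (Lemma~\ref{lem:flipreinforce-correct}).
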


\revision{
\begin{corollary}\label{cor:all_smooth_types}
For all continuous numeric types $\KK \in \{\RR, \posreal, \II\}$, and all closed terms $\vdash t : \mathbb{K} \to \eRR$,
 $\llbracket \lambda \theta : \mathbb{K}. \text{snd}_*\, (\ad{\ter}(\theta, 1))\rrbracket$ is an unbiased derivative of $\llbracket t\rrbracket$.
\end{corollary}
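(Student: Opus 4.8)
The plan is to reduce the claim to the argument already given for Theorem~\ref{thm:correctness_discrete}, the only new wrinkle being that when $\sem{\KK}$ is a proper open subinterval of $\RR$ (as for $\posreal$ and $\II$) there is no identity curve $\RR \to \sem{\KK}$ to feed into the fundamental lemma. I would fix $\KK \in \{\RR,\posreal,\II\}$, a closed term $\vdash t : \KK \to \eRR$, and a point $\theta_0 \in \sem{\KK}$, and write $\loss(\theta) = \mathbb{E}_{x \sim \sem{t}(\theta)}[x]$. The idea is to witness correctness at $\theta_0$ using a global curve that is only \emph{locally} the identity: a smooth $f : \RR \to \sem{\KK}$ with $f(\theta) = \theta$ on some open neighborhood of $\theta_0$. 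For $\KK = \RR$ take $f = \mathrm{id}$; for $\posreal = (0,\infty)$ and $\II = (0,1)$ take $f$ equal to the identity near $\theta_0$ and smoothly bent so as to remain inside the open interval away from $\theta_0$ (e.g.\ by interpolating via a smooth bump function, or precomposing a logistic or exponential map with an affine reparameterization). Set $g := \lambda\theta.\,(f(\theta), f'(\theta))$.

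Next I would check $(f, g) \in \rel_\KK$. For $\KK = \RR$ this is immediate from the definition of $\rel_\RR$, since $f$ is differentiable and $g(\theta) = (f(\theta), f'(\theta))$. For $\posreal$ and $\II$ it unfolds to verifying $(\iota_\KK \circ f, \langle \iota_\KK, \mathrm{id}\rangle \circ g) \in \rel_\RR$, which holds because $\iota_\KK \circ f$ is differentiable and $(\iota_\KK \circ f)' = f'$, so $\langle \iota_\KK, \mathrm{id}\rangle \circ g = \lambda\theta.\,((\iota_\KK \circ f)(\theta), (\iota_\KK \circ f)'(\theta))$, exactly the membership condition for $\rel_\RR$. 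Then I would invoke the fundamental lemma (its final form, covering $\BB$ and $\pmonad\tau$): $\sem{\ad{t}}$ is a correct dual-number derivative of $\sem{t}$, hence $(\sem{t} \circ f, \sem{\ad{t}} \circ g) \in \rel_\eRR$. By the definition of $\rel_\eRR$, the map $\loss_f := \theta \mapsto \mathbb{E}_{x \sim (\sem{t}\circ f)(\theta)}[x]$ is well-defined and differentiable on all of $\RR$, and $(\sem{\ad{t}}\circ g)(\theta)$ estimates the dual number $(\loss_f(\theta), \loss_f'(\theta))$ for every $\theta$.

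Finally I would localize and conclude. Since $f$ is the identity on a neighborhood of $\theta_0$, we have $\loss_f = \loss \circ f = \loss$ there, and because well-definedness and differentiability are local properties, $\loss$ is well-defined and differentiable at $\theta_0$ with $\loss'(\theta_0) = \loss_f'(\theta_0)$. Evaluating the previous step at $\theta_0$ and using $g(\theta_0) = (f(\theta_0), f'(\theta_0)) = (\theta_0, 1)$, the estimator $\sem{\ad{t}}(\theta_0, 1)$ estimates $(\loss(\theta_0), \loss'(\theta_0))$; applying $\snd_*$ and Definition~\ref{def:unbiased_estimator}, $\snd_*(\sem{\ad{t}}(\theta_0, 1))$ estimates $\loss'(\theta_0)$. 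As $\theta_0 \in \sem{\KK}$ was arbitrary, $\loss$ is differentiable throughout $\sem{\KK}$ and $\sem{\lambda\theta:\KK.\,\snd_*(\ad{t}(\theta,1))}$ is an unbiased derivative of $\sem{t}$ in the sense of Definition~\ref{def:unbiased-deriv}. The one genuinely new obstacle is the curve construction in the first step: the $\RR$-case trick of plugging in a single global identity curve is unavailable for $\posreal$ and $\II$, and choosing $f$ to be \emph{locally} the identity (rather than merely matching value and first derivative at $\theta_0$) is precisely what lets the last step transfer both the existence and the differentiability of $\loss_f$ back to $\loss$ without appealing to an a priori unavailable chain rule for $\loss$; the remaining steps are routine bookkeeping against the already-established fundamental lemma.
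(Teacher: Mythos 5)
Your proof is correct, and it follows the same overall strategy as the paper's: fix $\theta_0 \in \sem{\KK}$, build a differentiable curve $h : \RR \to \sem{\KK}$ with $h(0) = \theta_0$ and $h'(0) = 1$ (impossible to take as the global identity when $\sem{\KK}$ is a proper subinterval), feed $(h, \lambda r.(h(r),h'(r))) \in \rel_\KK$ into the fundamental lemma, and read off the conclusion at the distinguished parameter value. The difference is in the curve and the final transfer step. The paper uses explicit global diffeomorphisms onto $\II$ and $\posreal$ (a logistic map and $\theta e^{r/\theta}$, respectively) and then applies the chain rule to rewrite $(\loss \circ h)'(0)$ as $h'(0)\cdot \loss'(\theta_0)$; that step tacitly presupposes that $\loss$ itself is differentiable at $\theta_0$, which in the paper's setup has to be extracted from the fact that $h$ is a local diffeomorphism (so $\loss = (\loss\circ h)\circ h^{-1}$ near $\theta_0$). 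Your locally-identity bump construction sidesteps this: since $\loss_f$ and $\loss$ literally coincide on a neighborhood of $\theta_0$, both the differentiability of $\loss$ at $\theta_0$ and the identity $\loss'(\theta_0) = \loss_f'(\theta_0)$ come for free from locality, with no chain rule and no invertibility of the curve needed. This is a slightly cleaner way to discharge the implicit ``$\loss$ is differentiable'' requirement in Definition~\ref{def:unbiased-deriv}; the rest of your argument (membership in $\rel_\KK$, invocation of the fundamental lemma, unfolding $\rel_{\eRR}$ and Definition~\ref{def:unbiased_estimator}) matches the paper's.
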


\begin{proof}
We have the result for $\KK=\RR$ from Thm~\ref{thm:correctness_discrete}, so first consider $\KK=\II$. Let $\vdash \ter:\II\to\eRR$.
By the fundamental lemma, $\sem{\ad{\ter}}$ is a correct dual-number derivative of $\sem{\ter}$, so for any $(h,h_\der)\in \rel_\II$, we have $(\sem{\ter}\circ h, \sem{\ad{\ter}}\circ h_\der)\in \rel_{\eRR}$. 
In particular, this means that for any $r\in\RR$, $\expect_{(x,\delta x)\sim \sem{\ad{\ter}}(h_\der(d))}[\delta x]=(\lambda r.\expect_{x\sim \sem{\ter}(h(r))}[x])'(r)$.
Now let $\theta\in\II$ and consider $h:= \lambda r.\frac{\theta}{\theta+(1-\theta)e^{-r/(\theta-\theta^2)}}, h_\der:= \lambda r.(h(r),h'(r))$.
Because $h_\der$ computes $h$'s derivative, $(h,h_\der)\in\rel_\II$. The important property of this function $h$ is that $h(0)=\theta$ and $h'(0)=1$.
Plugging this $h$ into the equation from above, and setting $r$ to $0$, we get that $\expect_{(x,\delta x)\sim \sem{\ad{\ter}(\theta,1)}}[\delta x]= (\lambda r.\expect_{x\sim \sem{\ter}(h(r))}[x])'(0)$.
The left-hand side is the expected value of $\sem{\lambda \theta:\II.\snd_* \ad{\ter}(\theta,1)}(\theta)$. The right-hand side can be rewritten, using the chain rule, to yield $h'(0) \cdot (\lambda z.\expect_{x\sim \sem{\ter}(z)}[x])'(h(0)) =1 \cdot (\lambda z.\expect_{x\sim \sem{\ter}(z)}[x])'(\theta)$.
The fact that the LHS and RHS are equal implies that $\sem{\lambda \theta:\II.\snd_* \ad{\ter}(\theta,1)}$ is an unbiased derivative of $\sem{\ter}$.
For the type $\posreal$, the argument is the same, expect that we define $h:=\lambda r. \theta.e^{r/\theta}$, which also has the property that $h(0)=\theta$ and $h'(0)=1$ but has codomain $\posreal$ instead of $\II$.
\end{proof}
}

\section{Extending ADEV to Continuous Probabilistic Programs}
\label{sec:continuous}


We now lift the key restriction from Section~\ref{sec:discrete}: we add to our language new primitives for sampling from continuous distributions (Fig.~\ref{fig:05syntax}).
Perhaps surprisingly, nothing about the ADEV macro or the user's workflow
changes with this extension. Adding continuous
primitives is no different from adding discrete primitives: just as
in Section~\ref{sec:discrete}, the key task is to design 
built-in derivative-of-expectation estimators of type $\dpmonad \tau$
for every $\pmonad \tau$ primitive we add. What \textit{does} change
is the correctness proof: as we will see in Section~\ref{sec:correctness-continuous}, the introduction of continuous probability adds several 
wrinkles to our semantics and logical relations.

\subsection{Syntax and Algorithm}
Our extended language (Fig.~\ref{fig:05syntax}) has four new primitives: $\sample:\pmonad \mathbb{I}$ (which samples on the unit interval), $\normalreparam,\,\normalreinforce:\RR\times \posreal\to \pmonad \RR$ (which sample a normal distribution with user-specified parameters), and $\geometricreinforce:\II\to \pmonad\NN$ (which samples a geometric).\footnote{The geometric distribution is not continuous, but violates a different restriction from Section~\ref{sec:discrete}\textemdash finite support.} 

Following our development in Section~\ref{sec:discrete}, we equip each primitive with a built-in derivative estimation procedure, based on the REINFORCE and reparameterization-trick gradient estimators, well-studied in the machine learning literature~\citep{kingma2013auto}. (See Fig.~\ref{fig:probabilistic_semantics} in Appendix~\ref{appx:figures}.) As in Section~\ref{sec:discrete}, the novelty here is not in the estimators themselves but in the modularity, with gradient estimators exposed to the user in the form of composable primitives. Beyond the translation of these new primitives, the ADEV macro requires no further extensions. 

\begin{figure}[t]
\fbox{
  \parbox{.97\textwidth}{
    \centering
      \vspace{-3mm}
    \begin{align*}
    \text{Primitives } c ::=\,& \ldots \mid  \sample :\pmonad\II \mid \normalreparam : \RR\times\posreal\to \pmonad\RR \\
    &\mid \normalreinforce : \RR\times\posreal\to \pmonad\RR 
    \mid \geometricreinforce : \II \to \pmonad\NN
    \end{align*} 
    \vspace{-5mm}
    }}
        \vspace{-3mm}
    \caption{Extended syntax for Continuous Probabilistic Programming
    }
    \label{fig:05syntax}
\vspace{-3mm}
\end{figure}

\subsection{Correctness of ADEV in the Continuous Language}
\label{sec:correctness-continuous}
\textbf{New challenges.} All we have done is add a few new primitives, but formally justifying the extension raises two significant technical difficulties:

\begin{itemize}[leftmargin=*]
\item \textbf{Measurability issues.} In Section~\ref{sec:discrete} we chose $\sem{\pmonad\tau}$ to be the monad of finitely supported mass functions. This choice was nice, because (1) the set of finitely supported mass functions on $\sem{\tau}$ is well-defined for \textit{any} set $\sem{\tau}$, and (2) the expectation of \textit{any} function $f : \sem{\tau} \to \RR$ with respect to a finitely-supported distribution is well-defined (it is just a finite sum). 
We exploited property (2) in defining our logical relations $\rel_{\pmonad\tau}$, which talk about expectations of arbitrary functions. In our newly extended language, we must revise our choice of $\sem{\pmonad\tau}$, setting it to (something like) the set of probability measures on $\sem{\tau}$. But this breaks both of the nice properties above: (1) there is no nice way to define the set of probability measures over $\sem{\tau}$ when $\tau$ is higher-order (e.g. $\tau = \RR\to\RR$)~\citep{heunen2017convenient}, and (2) in general expectations can only be taken of measurable functions.
The challenge, then, is to find a way of defining semantics for the extended language, and updating our logical relations, that doesn't break anything we've done so far.

\item \textbf{Edge cases where primitive gradient estimators are incorrect.} The standard proofs that the REINFORCE and reparameterization trick estimators are correct come with regularity conditions on the function $f$ whose expectation's derivative is being estimated. As such, our new primitives and their ADEV translations $(c, c_\der)$ do not 
technically satisfy the correctness criterion implied by our
logical relations (Definition~\ref{def:correct-exp-est}), which
quantifies over all possible expectands. An updated
correctness theorem will need to somehow account for
these regularity conditions.
\end{itemize}

\noindent\textbf{Resolving the first challenge: quasi-Borel semantics.} 
A long line of research has recently culminated in a new setting for measure theory where function spaces are well-behaved: the quasi-Borel spaces~\citep{heunen2017convenient}. Like a measurable space, a \textit{quasi-Borel space} $X$ pairs an underlying set $|X|$ with additional structure for reasoning precisely about probability; see~\citet{scibior2017denotational} for an overview. Here we just summarize our application of the theory:
\begin{itemize}[leftmargin=*]
    \item For every quasi-Borel space $X$, there is a quasi-Borel space $\pmonad X$ of probability measures on $X$, and these form a strong commutative monad. Using it, we were able to reformulate our language's semantics in terms of quasi-Borel spaces: every type $\tau$ is interpreted by a space $\sem{\tau}$, and terms $t$ are interpreted as \textit{quasi-Borel morphisms} $\sem{t}$, which are just functions satisfying a generalized measurability property ensuring they work nicely with quasi-Borel measures. Our interpretations are standard, matching those of~\citet{scibior2017denotational}. We note that, unlike in Section~\ref{sec:discrete}, we now interpret  $\eRR$ and $\pmonad\RR$ the same way: as the quasi-Borel probability measures on $\RR$.
    
    \item Now that we have changed our semantics, how should we think about the definitions, appearing throughout our paper, of logical relations $\rel_\tau$? We can read the definitions exactly as they are written, but interpreting them as relations over sets of quasi-Borel morphisms $\RR \to \sem{\tau}$ (or $\RR \to \sem{\ad{\tau}}$), rather than over sets of arbitrary functions. Any expectations and integrals appearing in our definitions should now be understood as expectations and integrals of quasi-Borel morphisms with respect to quasi-Borel measures.\footnote{Existing work on quasi-Borel spaces usually defines integration of $X \to [0, \infty]$ functions, and not of $X \to \RR$ functions. But just as in standard measure theory, we can extend the definition for non-negative functions to one for arbitrary real functions: we split the integrand into a positive part and negative part, separately integrate each, and then subtract the results. Because each result can be either finite or infinite, their difference can be either finite, infinite, \textit{or undefined} (if both the positive part and negative part are infinite). In this paper, when we say that an expectation exists or is well-defined, we mean that the result of the integral is \textit{finite}.}
    
    \item Having re-interpreted our language and our definitions of logical relations, we should do a sanity check that our proofs from Section~\ref{sec:discrete} still go through (using the new semantics, but not yet adding our new primitives for continuous sampling). It turns out they do: 
    
    \begin{lemma}[fundamental lemma for the old language, new semantics]
    For every term $\Gamma \vdash t : \tau$ in the discrete  probabilistic language of Section~\ref{sec:discrete}, $\sem{\ad{t}}_\textbf{Qbs}$ is a correct dual-number derivative of $\sem{t}_\textbf{Qbs}$, with respect to the $\rel_\tau$ obtained by interpreting our previous definitions as relations of quasi-Borel morphisms.\footnote{For the categorically-minded reader, we provide another presentation of this logical relations argument in Appendix~\ref{appx:qbs_logrel}.}
    \end{lemma}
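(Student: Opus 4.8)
The plan is to re-run, essentially verbatim, the inductive proof of the fundamental lemma of Section~\ref{sec:discrete} (the revised version of Lemma~\ref{lem:fund2} for $\mathbb{B}$ and $\pmonad\tau$), observing that neither the definition of the logical relations $\rel_\tau$ nor the macro $\ad{\cdot}$ refers to the specific ambient category: both are stated parametrically in terms of morphisms $\RR \to \sem{\tau}$, the monad $\pmonad$, products, exponentials, and integration. Thus I would first fix the new interpretation — every $\sem{\tau}$ is a quasi-Borel space, every $\sem{t}$ a quasi-Borel morphism, and every expectation in Figures~\ref{fig:02log_rel}, \ref{fig:03logrel}, and~\ref{fig:04logrel} a quasi-Borel integral — and then check that the induction on the typing derivation of $t$ goes through clause by clause with these readings substituted in.

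For the inductive (type-former) cases — abstraction, application, pairing, projection, $\llet$, $\return$, $\iif$, and $\haskdo$ — the argument is the abstract logical-relations machinery and relies only on structural facts about the semantic category. Here I would invoke that $\textbf{Qbs}$ is cartesian closed and that $\pmonad$ is a strong commutative monad on $\textbf{Qbs}$, which are exactly the properties for which quasi-Borel spaces were designed~\citep{heunen2017convenient,scibior2017denotational}. These guarantee that the objects appearing in the translation — in particular $\deRR = \pmonad(\RR\times\RR)$ and $\dpmonad\ad{\tau} = (\ad{\tau} \to \deRR)\to\deRR$ — are genuine quasi-Borel spaces, that the continuation structure used to translate $\pmonad\tau$ is well-defined, and that the Kock integral appearing in $\rel_{\pmonad\tau}$ is interpreted by the monadic bind of the quasi-Borel probability monad, matching the denotation of $\haskdo$. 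Because these cases never inspect the concrete elements of $\sem{\pmonad\tau}$, they are insensitive to the change from finitely-supported mass functions to quasi-Borel measures, and so transfer unchanged.

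The base cases are the primitives. For the deterministic primitives ($+$, $\times$, $\exps$, $\logs$, $\dots$) and the combinator-DSL primitives ($\exact$, $\minibatch$, $\eexp$, $\etimes$, $\eplus$), the quasi-Borel structure on $\RR$ and $\RR\times\RR$ restricts to the usual Borel structure, and quasi-Borel integration of these (piecewise-)smooth integrands agrees with ordinary Lebesgue integration; hence the computations verifying Definition~\ref{def:correct-dual-number} are literally the ones already carried out, and I would simply cite them. For the discrete probabilistic primitives $\flipenum$, $\flipreinforce$, and $\mbe$, I would re-examine the correctness arguments (e.g.\ Lemma~\ref{lem:flipreinforce-correct} and the specification of Definition~\ref{def:correct-exp-est}) and observe that every distribution produced in the language of Section~\ref{sec:discrete} is \emph{finitely supported}: the primitive samplers have finite support, $\return$ is a Dirac, and $\haskdo$-sequencing of finitely many such distributions preserves finite support. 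Consequently each quasi-Borel expectation in these arguments collapses to the same finite sum used in Section~\ref{sec:discrete}, and the manipulations — linearity of expectation, the iterated-expectation rearrangement, and the log-derivative trick — remain valid termwise.

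The only genuine obligation, and hence the crux of the lemma, is a \emph{well-definedness} check: one must confirm that reading each relation with quasi-Borel integrals introduces no new case where an integral fails to exist or a swap of limit and integral becomes illegitimate. This is precisely where the finite-support property of the discrete fragment does the work: because every measure that arises is a finite mixture of Diracs, all the expectations in sight are finite sums, so existence is automatic and every rearrangement is a finite-sum identity requiring no convergence hypothesis. I therefore expect no real obstacle here — the substance of the measure-theoretic difficulties (unbounded support, measurability of higher-order integrands, and the regularity side-conditions needed for dominated convergence) is genuinely deferred to the continuous extension of Section~\ref{sec:continuous}, and this lemma is the sanity check that the mere change of semantic category, absent those new primitives, costs us nothing.
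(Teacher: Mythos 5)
Your proposal is correct and matches the paper's argument: the paper likewise treats this as a sanity check whose only content is that the structural cases follow from $\mathbf{Qbs}$ being cartesian closed with a strong commutative probability monad, and that the primitive cases reduce to the finite sums already verified in Section~5 because every distribution in the discrete fragment is finitely supported. The paper's Appendix~D merely repackages the same induction fibrationally (via the glueing category over $\mathbf{Qbs}\times\mathbf{Qbs}$ and Katsumata's theorem, which discharges your structural cases wholesale and leaves exactly the per-primitive lifting obligations you identify), so this is a difference of presentation, not of substance.
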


\end{itemize}

\noindent \textbf{Resolving the second challenge: surfacing regularity conditions with a lightweight static analysis.} Now that we have a clear semantics, we can move onto the second problem: our logical relations $\rel_\tau$ are too strict. In particular, they require that a primitive distribution must be able to estimate the derivative of the expectation of \textit{any} (smooth, quasi-Borel) expectand, when in practice, nearly every gradient estimator needs additional regularity conditions to ensure unbiasedness.
We address this issue in three stages:

 \begin{figure}[tb]
\fbox{
  \parbox{.97\textwidth}{
  \begin{center}
    \vspace{-3mm}
        \end{center}
        \begin{align*}
              \rel_{\eRR} &= \Big\{(f:\RR\to\eRR, g:\RR\to(\deRR\times (S\to\RR\times \RR)))\mid h_i := \lambda \theta. \lambda s. \pi_i((\pi_2\circ g)(\theta)(s))\\
      &\qquad\wedge \forall \theta.\int_{\RR} h_1(\theta)(s)ds = \mathbb{E}_{x\sim f(\theta)}[x] = \mathbb{E}_{x\sim{\pi_1}_* (\pi_1 \circ g)(\theta)}[x]\\
      &\qquad\wedge \forall \theta.\int_{\RR} h_2(\theta)(s) ds = \mathbb{E}_{x\sim{\pi_2}_* (\pi_1 \circ g)(\theta)}[x] \\
      &\qquad\wedge (\lambda \theta. \lambda s. h_1(\theta)(s), \lambda \theta. \lambda s. (h_1, h_2)(\theta)(s)) \in \rel_{S \to \RR} \Big\} \\
        \rel_S &= \big\{(f : \RR \to S, g : \RR \to S) \mid f \text{ is constant } \wedge f = g \big\}
        \end{align*}
          }}
              \vspace{-3mm}
\caption{Revised logical relation for our type $\eRR$.}
\label{fig:05logrel}
\end{figure}

\begin{enumerate}[leftmargin=*]
    \item First, we develop a \textit{weaker definition} of \textit{unbiased derivative} (Definition~\ref{def:unbiased-deriv}):
\begin{definition}[weak unbiased derivative]
\label{def:weak-unbiased} 
Let $\widetilde{f} : \RR \to \eRR$, and suppose that the map $\mathcal{L} : \RR \to \RR$ sending $\theta$ to $\mathbb{E}_{x \sim \widetilde{f}(\theta)}[x]$ is well-defined. Then $\widetilde{g} : \RR \to \eRR$ is a \textit{weak unbiased derivative} of $\widetilde{f}$ if there exists a measurable function $h : \RR \times \RR \to \RR$, continuously differentiable in its first argument, such that (1) $\mathcal{L}(\theta) = \int_{\RR} h(\theta, s) ds$, and (2) $g(\theta)$ estimates $\int_{\RR} \frac{\partial}{\partial \theta}h(\theta, s) ds$.
\end{definition}
This definition captures ``unbiased, up to interchange of an integral with a derivative'': instead of requiring that $g$ unbiasedly estimate $\mathcal{L}'(\theta)$, we require that there is some way to write $\mathcal{L}$ as an integral such that, if you could swap the derivative and the integral, $g$ would estimate $\mathcal{L}'(\theta)$.

\item Second, we develop a lightweight static analysis that, given a term $\vdash \widetilde{t} : \RR \to \eRR$, \textit{finds} the measurable function $h : \RR \times \RR \to \RR$ (from Definition~\ref{def:weak-unbiased}) that justifies ADEV's output $\widetilde{s} : \RR \to \eRR$ as a \textit{weak} unbiased derivative estimator. To do so, we create a modified version of $\ad{\cdot}$, where $\ad{\eRR} = \deRR \times (S \to \RR \times \RR)$. Here, $S$ is a new type of \textit{random seeds}: $\sem{S} = \RR$, but $\ad{S} = S$ (no dual numbers), and derivatives of $S$ computations are not tracked. Intuitively, this new $\ad{\cdot}$ translates a term of type $\eRR$ to a \textit{pair}, where the first component is the same dual number estimator (of type $\deRR$) that we produced in Sections~\ref{sec:combinator} and~\ref{sec:discrete}, and the second
is the \textit{justification of the estimator as a weak unbiased derivative}.
A crucial part of this translation is what should happen at the primitives: because $\ad{\cdot}$'s behavior on $\eRR$ has changed, to include an extra component, all our primitives involving $\eRR$ need to be updated, to produce or handle this extra component (see Figure~\ref{fig:verif_constants} in Appendix). A more formal understanding can be gained by examining the new logical relation we define for $\rel_{\eRR}$, given in Figure~\ref{fig:05logrel}. (What it calls $h_1$ is the witness $h$ from Definition~\ref{def:weak-unbiased}, and what it calls $h_2$ is its derivative.) 
By proving the fundamental lemma using this new relation, we obtain a weak correctness result for the language:
\begin{lemma}[Weak correctness of ADEV]
\label{lem:weak-correctness}
Let $\vdash \widetilde{\ter}:\RR\to\eRR$. Letting $$h_i = \lambda (\theta,x).\pi_i(\pi_2(\sem{\ad{\widetilde{\ter}}}(\theta,1))(x)): \RR\times\RR\to \RR,$$
assume that $\forall \theta \in \RR$, $\int_\RR h_1(\theta,x)dx$ and $\int_\RR h_2(\theta,x)dx$ are well-defined.
Then:
\begin{itemize}
    \item  For all $(\theta,x)\in\RR\times \RR,h_2(\theta,x)=\frac{\partial}{\partial \theta}h_1(\theta,x)$ 
    \item For all $\theta\in\RR$, $\sem{\widetilde{\ter}}(\theta)$ is an unbiased estimator of $\int_\RR h_1(\theta,x)dx$.
    \item For all $\theta\in\RR$, $\snd_*(\pi_1(\sem{\ad{\widetilde{\ter}}}(\theta,1)))$ is an unbiased estimator of  $\int_\RR h_2(\theta,x)dx$.
\end{itemize}
Therefore, $\lambda \theta:\RR. \snd_*(\pi_1(\sem{\ad{\widetilde{\ter}}}(\theta,1)))$ is a weak unbiased derivative of $\sem{\widetilde{t}}$. 
\end{lemma}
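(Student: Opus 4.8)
The plan is to prove this as a ``wrapper'' result in the style of Theorem~\ref{thm:pure-correctness} and Theorem~\ref{thm:err-correctness}: push all the real content into the fundamental lemma (now using the revised relation $\rel_{\eRR}$ of Figure~\ref{fig:05logrel}), and then merely instantiate that lemma at the identity curve and unpack definitions. Concretely, since $\vdash \widetilde{\ter} : \RR \to \eRR$, the fundamental lemma gives that $\sem{\ad{\widetilde{\ter}}}$ is a correct dual-number derivative of $\sem{\widetilde{\ter}}$ in the sense of Definition~\ref{def:correct-dual-number}; that is, it sends every $\rel_\RR$-related pair to a $\rel_{\eRR}$-related pair. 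First I would plug in the canonical curve $(id, \lambda\theta.(\theta,1))$, which lies in $\rel_\RR$ because $\lambda\theta.(\theta,1) = \lambda\theta.(id(\theta), id'(\theta))$, to conclude $(\sem{\widetilde{\ter}},\ \lambda\theta.\sem{\ad{\widetilde{\ter}}}(\theta,1)) \in \rel_{\eRR}$.

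Next I would observe that the functions $h_1, h_2$ named inside the definition of $\rel_{\eRR}$ in Figure~\ref{fig:05logrel} (instantiated at $g = \lambda\theta.\sem{\ad{\widetilde{\ter}}}(\theta,1)$) are exactly the $h_1, h_2$ of the lemma statement, up to currying. With the membership above in hand, the three bullet points fall out of the three conjuncts defining $\rel_{\eRR}$. The first bullet, $h_2(\theta,x) = \frac{\partial}{\partial\theta}h_1(\theta,x)$, comes from the conjunct $(\lambda\theta.\lambda s.h_1(\theta)(s),\ \lambda\theta.\lambda s.(h_1,h_2)(\theta)(s)) \in \rel_{S\to\RR}$: unfolding $\rel_{S\to\RR}$ against the constant-seed pairs in $\rel_S$ shows that for every fixed seed $s$ the pair $(\theta\mapsto h_1(\theta,s),\ \theta\mapsto(h_1(\theta,s),h_2(\theta,s)))$ lies in $\rel_\RR$, which is precisely the assertion that $h_1(\cdot,s)$ is differentiable with derivative $h_2(\cdot,s)$. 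The second bullet is the first conjunct, $\int_\RR h_1(\theta,s)\,ds = \mathbb{E}_{x\sim\sem{\widetilde{\ter}}(\theta)}[x]$, read as the statement that $\sem{\widetilde{\ter}}(\theta)$ is unbiased for $\int_\RR h_1(\theta,s)\,ds$. The third bullet is the second conjunct together with the semantics of $\snd_*$: since $\snd_*$ is the pushforward ${\pi_2}_*$ along the second projection, $\snd_*(\pi_1(\sem{\ad{\widetilde{\ter}}}(\theta,1)))$ is exactly ${\pi_2}_*(\pi_1\circ g)(\theta)$, whose mean is $\int_\RR h_2(\theta,s)\,ds$ by that conjunct.

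Finally I would assemble these into Definition~\ref{def:weak-unbiased} by taking the witness $h := h_1$. The measurability of $h_1$ is immediate because it is a composite of quasi-Borel morphisms, and its continuous differentiability in the first argument follows from the first bullet together with the smoothness of the primitives from which $\ad{\widetilde{\ter}}$ is built. Condition~(1) of Definition~\ref{def:weak-unbiased} is then the second bullet (using $\loss(\theta) = \mathbb{E}_{x\sim\sem{\widetilde{\ter}}(\theta)}[x]$), and condition~(2) combines the first bullet ($\frac{\partial}{\partial\theta}h_1 = h_2$) with the third bullet ($\snd_*(\pi_1(\sem{\ad{\widetilde{\ter}}}(\theta,1)))$ estimates $\int_\RR h_2(\theta,s)\,ds$). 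I expect the only genuinely delicate point --- and the one the whole ``weak'' formulation is designed to dodge --- is that we never have to interchange the integral over seeds with the $\theta$-derivative: the statement asserts that ADEV computes $\int_\RR \partial_\theta h_1$, not $\partial_\theta\int_\RR h_1$, so no dominated-convergence argument is needed here. The main bookkeeping obstacle is therefore just the careful unfolding of the nested relation $\rel_{S\to\RR}$ (and of $\rel_S$) to extract the pointwise identity $h_2 = \partial_\theta h_1$, and matching ${\pi_2}_*$ to the source-level $\snd_*$.
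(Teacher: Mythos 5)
Your proposal matches the paper's intended argument exactly: the paper gives no separate proof of this lemma beyond remarking that it follows from the fundamental lemma for the revised relation $\rel_{\eRR}$, instantiated at the curve $(id,\lambda\theta.(\theta,1))\in\rel_\RR$ and unpacked conjunct-by-conjunct, which is precisely what you do. The only soft spot is your appeal to ``smoothness of the primitives'' for the \emph{continuous} differentiability of $h_1$ required by Definition~\ref{def:weak-unbiased}, since the conjunct $\rel_{S\to\RR}$ only delivers differentiability via $\rel_\RR$ --- but this gap is present in the paper's own formulation and is not something your argument introduces.
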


\item Finally, we state a sufficient condition for a weak unbiased 
derivative to be fully unbiased:

\begin{definition}[Locally Dominated]
We say that a function $f:\RR\times \RR\to \RR$ is locally dominated if, for every $\theta\in\RR$, there is a neighborhood $U(\theta)\subseteq \RR$ of $\theta$ and an integrable function $m_{U(\theta)}:\RR\to [0,+\infty)$ such that $\forall \theta'\in U(\theta),\forall x\in\RR, |f(\theta',x)|\leq m_{U(\theta)}(x)$.
\end{definition}

Combining it with our static analysis that finds $h$ and $h'$, we get our final correctness theorem:

\begin{theorem}[Correctness of ADEV (continuous language)]
\label{thm:cont-correct}
Let $\vdash \widetilde{\ter}:\RR\to\eRR$ be a closed term, and suppose that $\sem{\widetilde{\ter}}(\theta)$ has a well-defined expectation for every $\theta \in \RR$.
If $\sem{\lambda (\theta,x):\RR\times\RR.\snd(\snd(\ad{\widetilde{\ter}}(\theta,1))(x))}$ is locally dominated,
then $\sem{\lambda \theta:\RR.\snd_*(\fst(\ad{\widetilde{\ter}}(\theta,1)))}$ is a correct unbiased derivative of $\sem{\widetilde{\ter}}$.
\end{theorem}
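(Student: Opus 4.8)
The plan is to derive the strong unbiasedness required by Definition~\ref{def:unbiased-deriv} from the \emph{weak} guarantee already established in Lemma~\ref{lem:weak-correctness}, upgrading it by a single application of the Leibniz rule for differentiation under the integral sign. Concretely, I would first instantiate Lemma~\ref{lem:weak-correctness} for the given closed term $\widetilde{\ter}$, writing $h_i = \lambda(\theta,x).\,\pi_i(\pi_2(\sem{\ad{\widetilde{\ter}}}(\theta,1))(x))$ for $i=1,2$. The key observation is that $h_2$ is exactly the function $\sem{\lambda(\theta,x):\RR\times\RR.\,\snd(\snd(\ad{\widetilde{\ter}}(\theta,1))(x))}$ that the theorem hypothesizes to be locally dominated, so the domination assumption lands precisely where it is needed.

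Next I would discharge the integrability preconditions of Lemma~\ref{lem:weak-correctness}, namely that $\int_\RR h_1(\theta,x)\,dx$ and $\int_\RR h_2(\theta,x)\,dx$ are well-defined for every $\theta$. For $h_1$: the fundamental lemma places $(\sem{\widetilde{\ter}},\,\sem{\ad{\widetilde{\ter}}}\circ(\lambda\theta.(\theta,1)))$ in $\rel_{\eRR}$, whose defining clause (Figure~\ref{fig:05logrel}) identifies $\int_\RR h_1(\theta,x)\,dx$ with $\mathbb{E}_{x\sim\sem{\widetilde{\ter}}(\theta)}[x]=\loss(\theta)$, which is finite by hypothesis. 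For $h_2$: local domination supplies, for each $\theta$, a neighborhood $U(\theta)$ and an integrable $m_{U(\theta)}$ with $|h_2(\theta',x)|\le m_{U(\theta)}(x)$, so $h_2(\theta,\cdot)$ is integrable. With both preconditions met, Lemma~\ref{lem:weak-correctness} yields (i) $h_2=\partial_\theta h_1$ pointwise, (ii) $\sem{\widetilde{\ter}}(\theta)$ estimates $\int_\RR h_1(\theta,x)\,dx=\loss(\theta)$, and (iii) $\sem{\lambda\theta.\snd_*(\fst(\ad{\widetilde{\ter}}(\theta,1)))}(\theta)$ estimates $\int_\RR h_2(\theta,x)\,dx$.

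All that then remains is to identify $\int_\RR h_2(\theta,x)\,dx$ with $\loss'(\theta)$, i.e.\ to justify the interchange
\[
\loss'(\theta)=\frac{d}{d\theta}\int_\RR h_1(\theta,x)\,dx=\int_\RR \frac{\partial}{\partial\theta}h_1(\theta,x)\,dx=\int_\RR h_2(\theta,x)\,dx,
\]
and this analytic step is the one I expect to be the main obstacle, as it is the sole place dominated convergence enters. I would establish it by the dominated-convergence form of the Leibniz rule, applied locally: fixing $\theta_0$, the relation $\rel_{S\to\RR}$ together with $\rel_\RR$ (Figures~\ref{fig:05logrel} and the deterministic relations) guarantees that for each fixed seed $x$ the map $\theta\mapsto h_1(\theta,x)$ is differentiable with derivative $h_2(\theta,x)=\partial_\theta h_1(\theta,x)$, while local domination bounds $|h_2(\theta',\cdot)|$ by the integrable $m_{U(\theta_0)}$ uniformly over $\theta'\in U(\theta_0)$; by the mean value theorem this also dominates the difference quotients of $h_1$, so dominated convergence applies and differentiation under the integral sign is valid on $U(\theta_0)$. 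This gives differentiability of $\loss$ at $\theta_0$ together with the displayed equality, and since $\theta_0$ is arbitrary it holds everywhere. Combining this with (iii) shows that $\sem{\lambda\theta.\snd_*(\fst(\ad{\widetilde{\ter}}(\theta,1)))}(\theta)$ estimates $\loss'(\theta)$ for every $\theta$, which is exactly the statement that $\sem{\lambda\theta:\RR.\snd_*(\fst(\ad{\widetilde{\ter}}(\theta,1)))}$ is an unbiased derivative of $\sem{\widetilde{\ter}}$ in the sense of Definition~\ref{def:unbiased-deriv}.
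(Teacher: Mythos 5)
Your proposal is correct and follows essentially the same route as the paper: the paper derives Theorem~\ref{thm:cont-correct} from Lemma~\ref{lem:weak-correctness} by applying the measure-theoretic (dominated-convergence) form of the Leibniz integral rule, using the local domination hypothesis on $h_2=\partial_\theta h_1$ exactly as you do, with pointwise differentiability of $\theta\mapsto h_1(\theta,x)$ supplied by the weak correctness lemma. Your additional details (discharging the integrability preconditions via the $\rel_{\eRR}$ clause and the finiteness hypothesis, and dominating the difference quotients via the mean value theorem) correctly fill in the standard steps the paper leaves implicit.
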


Note that the final conclusion is the same \textit{full} correctness
property we proved in earlier sections; there is now just a single
local domination condition for the user to verify, before the guarantee kicks in. This local domination condition is only \textit{one} of the preconditions for swapping derivatives and integrals to be valid; crucially, the other hypotheses of the Dominated Convergence Theorem are automatically discharged by our proofs. Furthermore, even if the user's program composes different primitives, each using different gradient estimation strategies and making different assumptions, the static analysis performed by our modified $\ad{\cdot}$ macro automatically generates a \textit{single} term $\lambda (\theta,x).\snd(\snd(\ad{\widetilde{\ter}}(\theta,1))(x))$ (the $h_2$ from Lemma~\ref{lem:weak-correctness}) whose local domination should be checked. Because this is an explicit term in our language,
we are optimistic that future, more sophisticated static analyses
could be developed to automatically discharge this local domination
condition in many cases.
\end{enumerate}

\section{Stronger guarantees with smoothness-tracking types}
\label{sec:general}

\begin{figure}[t]
\fbox{
  \parbox{.97\textwidth}{
    \centering
    \vspace{-.2cm}
    \begin{align*}
    \text{Types } \type ::=\,& \ldots            \mid  \mathbb{K}^* \quad\text{(for every smooth base type $\mathbb{K}$)}\\
    \text{Primitives } c ::=\,& \ldots \mid  \sample :\pmonad\II^* \mid \normalreinforce : \RR\times\posreal\to \pmonad\RR^* \mid \forget{\cdot}_\KK : \KK^* \to \KK\\
    & \quad \; \mid\,  \leq : \KK^* \times \KK^* \to \mathbb{B} 
     \mid\,  = : \KK^* \times \KK^* \to \mathbb{B}
    \end{align*} 
        \vspace{-4mm}
    }}
        \vspace{-3mm}
    \caption{Revised syntax for smooth-tracking types}
    \label{fig:06syntax}
\vspace{-3mm}
\end{figure}

\begin{figure}[tb]
\fbox{
  \parbox{.97\textwidth}{
  \vspace{-.2cm}
      \begin{align*}
      \ad{\KK^*} = \KK^* && \rel_{\KK^*} = \{(f : \RR \to \mathbb{K}^*, g : \RR \to \mathbb{K}^*) \mid f \text{ is constant } \wedge f = g\}
        \end{align*}
        \vspace{-.4cm}
  }}
      \vspace{-3mm}
\caption{Definition of the dual-number type and the dual-number logical relation for smooth-tracking types}
\label{fig:06logrel}
\end{figure}

The correctness guarantee of Theorem~\ref{thm:cont-correct}
covers an expressive language with discrete and continuous
sampling, higher-order functions, 
monadic probabilistic programming, and conditional
branching. But ADEV sometimes produces correct derivatives
in cases our theory does \textit{not} yet cover, namely when the user's
program uses discontinuous primitives like $\leq$. Consider, for example,
\begin{equation}\lambda \theta : \RR. \haskdo \, \{x \gets \normalreinforce(\theta, 1); \iif~x \leq 3~\then~\return~1~\eelse~\return~0\},
\label{eqn:good-program}\end{equation}
which \textit{uses} $\leq$ but has expectation $\mathbb{P}_{x \sim \mathcal{N}(\theta, 1)}[x \leq 3]$, which is itself differentiable with respect to $\theta$. If we equip $\leq$ with a built-in derivative that ignores the tangent part of any dual-number inputs, we can apply ADEV to this program, and in this case, we \textit{do} get out a correct derivative. Why is this, and can we state a more general theorem about when ADEV is correct?


It turns out that primitives like $\leq$ can be safely added to our language, but only if their use is carefully restricted. This is because the proof that establishes Theorem~\ref{thm:cont-correct} from Lemma~\ref{lem:weak-correctness} relies on the measure-theoretic formulation of the Leibniz integral rule, which requires us to ensure that $h_1(\theta, x)$ is differentiable with respect to $\theta$ for almost all $x$. Importantly, we do \textit{not} need $h_1$ to be differentiable with respect to $x$. Intuitively, we can allow $\leq$ in cases where it introduces discontinuities with respect to the \textit{random seed} $x$, but not to the input parameter $\theta$.
\begin{figure}[tb]
        \begin{tabular}{|l|l|l|}
         \multicolumn{2}{c}{\bf Accepted by the Type-checker} 
         & 
         \multicolumn{1}{c}{\bf Rejected} \\
         \hline
    \hspace{-2mm}\begin{tabular}{l}
    $\loss_1 = \lambda \theta:\RR.\,  \mbe (\haskdo~\{$\\
    $\quad \var^* \gets \normalreinforce~\theta~1$\\
    $\quad y\,\,\,\gets \normalreparam~\colorbox{green!15}{$\forget{\var^*}$}~1$\\
    $\quad \iif~\colorbox{green!15}{$\var^*\leq 3$}~\then$\\
    $\quad\quad \return~0$\\
    $\quad\eelse$\\
    $\quad\quad \return~-(\theta \div {2})\})$
    \end{tabular}
    \hspace{-3mm}
    &
    \hspace{-2mm}\begin{tabular}{l}
    $\loss_2 = \lambda \theta:\RR. \, \mbe(\haskdo~\{$\\
    $\quad \var\,\, \gets \normalreparam~\theta~1$\\
    $\quad y^* \gets \normalreinforce~\var~1$\\
    $\quad \iif~\colorbox{green!15}{$y^* \leq 3$}~\then$\\
    $\quad\quad \return~0$\\
    $\quad\eelse$\\
    $\quad\quad \return-(\theta \div {2})\})$
    \end{tabular}
    \hspace{-3mm}
    &
    \hspace{-2mm}\begin{tabular}{l}
    $\loss_3 = \lambda \theta:\RR. \, \mbe(\haskdo~\{$\\
    $\quad \var^* \gets \normalreinforce~\theta~1$\\
    $\quad y\,\,\, \gets \normalreparam~\colorbox{green!15}{$\forget{\var^*}$}~1$\\
    $\quad \iif~\colorbox{red!15}{$y \leq 3$}~\then$\\
    $\quad\quad \return~0$\\
    $\quad\eelse$\\
    $\quad\quad \return-(\theta \div {2})\})$
    \end{tabular}
    \hspace{-3mm}
    \\
    \hline
    \end{tabular}
        \vspace{-2mm}
\caption{Smoothness-tracking types allow us to enforce
preconditions for ADEV's correctness.
In these programs, variables of type $\RR^*$ -- those that can be used non-smoothly -- are indicated with a star.
Type checking will reject the unsound program on the right and accept the two programs on the left.
The error comes from the fact that $y:\RR$ cannot be cast to a variable of type $\RR^*$, for use with $\leq$: $y$ has to be used smoothly for $\normalreparam$'s built-in derivative to be correct.}
\vspace{-2mm}
\label{fig:06example}
\end{figure}

\noindent\textbf{Types for smoothness tracking.} We can make these intuitions precise by carefully extending our language of study to allow restricted uses of discontinuous primitives, then re-proving our correctness theorem for the extended language. We do this by adding, for each \textit{smooth type} $\mathbb{K}$ ($\RR$, $\II$, and $\RR_{>0}$), a \textit{non-smooth type} $\mathbb{K}^*$. The semantics of a smooth type and its corresponding non-smooth type are the same, but our macro $\ad{\cdot}$ does not attach dual numbers to non-smooth values ($\ad{\KK^*} = \KK^*$), and our logical relations $\rel_{\KK^*}$ (Fig.~\ref{fig:06logrel}) treat them as if they were discrete.

From a user's perspective, values of non-smooth type are \textit{allowed} to be used non-smoothly, whereas values of smooth type \textit{must} be used smoothly. This is reflected in the type of the primitive $\leq : \KK^* \times \KK^* \to \mathbb{B}$. In addition to $\leq$, we introduce a coercion $\forget{\cdot} : \mathbb{K}^* \to \mathbb{K}$ from non-smooth to smooth types, but not in the other direction: you are always allowed to promise (unnecessarily) to use a value smoothly, but not to go back on your promise. In fact, as can be seen from the definition of $\rel_{\RR^*}$, \textit{any} $\RR \to \RR^*$ function expressible in our language must necessarily be constant\textemdash no information can `leak' from the smooth world to the non-smooth world.

Smooth and non-smooth types can be mixed to create functions whose types perform fine-grained tracking of \textit{which} arguments they are differentiable with respect to. For example, a term $\vdash t : \RR \times \RR^* \to \RR$ is guaranteed to have a denotation differentiable with respect to its first argument, but not its second. The key feature unlocked by this fine-grained tracking is that we can now assign more permissive types to some of our primitives from Section~\ref{sec:continuous}: namely, $\sample$ and $\normalreinforce$ now generate samples of non-smooth type, indicating that their correctness proofs do \textit{not} require sampled values to be used smoothly in the rest of the program.  $\normalreparam$, by contrast, still generates samples of type $\RR$.
With these new types, we can accept program~(\ref{eqn:good-program}) from above, while rejecting programs on which ADEV would fail (Figure~\ref{fig:06example}).

With these typing rules, we can import Section~\ref{sec:continuous}'s results with no major hurdles:

\begin{theorem}[Correctness of ADEV (full)]
\label{thm:full-correctness}
Let $\vdash \widetilde{\ter}:\RR\to\eRR$ be a closed term in the extended language of Section~\ref{sec:general}, 
such that $\sem{\widetilde{\ter}}(\theta)$ has a well-defined expectation for every $\theta$.
If $\sem{\lambda (\theta,x):\RR\times \RR.\snd(\snd(\ad{\widetilde{\ter}}(\theta,1))(x))}$ is locally dominated,
then $\sem{\lambda \theta:\RR.\snd_*(\fst(\ad{\widetilde{\ter}}(\theta,1)))}$ is a correct unbiased derivative of $\sem{\widetilde{\ter}}$.
\end{theorem}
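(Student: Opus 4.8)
The plan is to follow the proof of Theorem~\ref{thm:cont-correct} essentially verbatim, since the statement is identical except that the ambient language now contains the smoothness-tracking types $\KK^*$ together with the discontinuous primitives $\leq$, $=$, and $\forget{\cdot}$. As in Section~\ref{sec:continuous}, the argument factors into two stages: first establish a weak correctness result (the analogue of Lemma~\ref{lem:weak-correctness}) for the extended language via the fundamental lemma, and then upgrade weak correctness to full correctness using the local-domination hypothesis together with the Leibniz integral rule. The second stage is completely insensitive to the new language features — it only manipulates the witness functions $h_1 = \lambda(\theta,x).\,\fst(\snd(\ad{\widetilde{\ter}}(\theta,1))(x))$ and $h_2 = \lambda(\theta,x).\,\snd(\snd(\ad{\widetilde{\ter}}(\theta,1))(x))$ produced by the verification-augmented macro — so it can be reused unchanged. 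All of the new work therefore lives in re-establishing the fundamental lemma.

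First I would extend the quasi-Borel semantics, the verification-augmented macro $\ad{\cdot}$, and the logical relations to the new syntax. On the semantic side this is immediate: a non-smooth base type has the same denotation as its smooth counterpart, $\sem{\KK^*} = \sem{\KK}$, and $\leq$, $=$, $\forget{\cdot}$ receive their evident quasi-Borel interpretations. On types the macro sets $\ad{\KK^*} = \KK^*$ (no dual number is attached), and the logical relation $\rel_{\KK^*}$ (Fig.~\ref{fig:06logrel}) relates only constant curves with themselves, exactly as for the discrete types $\NN$ and $\BB$. Since every inductive case of the fundamental lemma is handled by the generic logical-relations machinery, the only genuinely new obligations are the base cases for the new primitives.

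The heart of the proof is these new base cases, checked against Definition~\ref{def:correct-dual-number}. For $\leq$ and $=$, whose arguments have type $\KK^*$, the correctness of the tangent-dropping built-in derivative rests on the invariant encoded by $\rel_{\KK^*}$: any pair $(f,g)\in\rel_{\KK^*}$ consists of a constant curve $f$ with $f = g$, so the two arguments are constant in $\theta$, the resulting Boolean is constant in $\theta$, and the pair lands in $\rel_\BB$ — which is precisely what a built-in derivative that discards the tangent must produce. For $\forget{\cdot}_\KK : \KK^* \to \KK$, a constant input curve is mapped to the dual number $(\,\cdot\,,0)$, which is correct because a constant function has zero derivative, so the result lies in $\rel_\KK$. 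Finally, the non-smooth versions of $\sample$ and $\normalreinforce$ keep their REINFORCE-style built-in derivatives from Section~\ref{sec:continuous}; their correctness proofs never used smoothness of the sampled value in the downstream expectand, so assigning the sample a non-smooth type $\II^*$ or $\RR^*$ only weakens the hypotheses and the same computation goes through (whereas $\normalreparam$, whose correctness does rely on smooth use of its sample, retains the smooth return type $\RR$). With these cases discharged, the fundamental lemma and hence the weak correctness analogue of Lemma~\ref{lem:weak-correctness} hold for the full language.

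The main obstacle — and the whole reason the $\KK^*$ discipline is introduced — is ensuring that the witness $h_1$ remains differentiable in $\theta$ for almost every seed $x$, despite the program's use of the genuinely discontinuous $\leq$ and $=$. This is exactly what makes the upgrade step legitimate: the Leibniz rule needs differentiability of $h_1(\theta,x)$ in $\theta$ (not in $x$) almost everywhere, and the type system guarantees that every discontinuity introduced by $\leq$ or $=$ is a discontinuity in a $\KK^*$- or $S$-typed (seed) argument, never in $\theta$. I would make this precise by observing that $\forget{\cdot}$ is the only — and one-directional — bridge from the non-smooth to the smooth world, so no value that has passed through a discontinuous comparator can re-enter a position required to vary smoothly with $\theta$; this is enforced type-theoretically and witnessed semantically by the fact that $\rel_{\KK^*}$ admits only constant curves. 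Granting this, the final upgrade is identical to the derivation of Theorem~\ref{thm:cont-correct} from Lemma~\ref{lem:weak-correctness}: local domination of $h_2$ discharges the one remaining side condition of the Dominated Convergence / Leibniz integral rule (the others being automatic from the fundamental lemma), so $\frac{d}{d\theta}\int_\RR h_1(\theta,x)\,dx = \int_\RR h_2(\theta,x)\,dx$, and since $\sem{\lambda\theta:\RR.\snd_*(\fst(\ad{\widetilde{\ter}}(\theta,1)))}(\theta)$ unbiasedly estimates the right-hand side, it is a correct unbiased derivative of $\sem{\widetilde{\ter}}$.
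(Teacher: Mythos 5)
Your proposal is correct and follows essentially the same route as the paper, which itself only remarks that Section~\ref{sec:continuous}'s results ``import with no major hurdles'' once the smoothness-tracking types are in place; your elaboration of the new base cases ($\leq$, $=$, $\forget{\cdot}_\KK$, and the retyped $\sample$ and $\normalreinforce$ against $\rel_{\KK^*}$) and your observation that the Leibniz-rule upgrade step is untouched are exactly the content the paper leaves implicit.
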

\section{Summary: Full Language and ADEV Macro}
\revision{
Our full language is summarized in Figure~\ref{fig:syntax_recap}, and our full AD macro in Figure~\ref{fig:ad_recap}. By combining Thm.~\ref{thm:full-correctness} with Cor.~\ref{cor:all_smooth_types}, we arrive at the following general correctness result:

\begin{corollary}
\label{cor:full-correctness-extended}
Let $\vdash \widetilde{\ter}:\KK \to\eRR$ be a closed term in the full language, where $\KK \in \{\RR, \RR_{\geq 0}, \mathbb{I}\}$. If $\sem{\widetilde{\ter}}(\theta)$ has a well-defined expectation for every $\theta\in \KK$, 
and $\sem{\lambda (\theta,x):\KK\times \RR.\snd(\snd(\ad{\widetilde{\ter}}(\theta,1))(x))}$ is locally dominated,
then $\sem{\lambda \theta:\KK.\snd_*(\fst(\ad{\widetilde{\ter}}(\theta,1)))}$ is a correct unbiased derivative of $\sem{\widetilde{\ter}}$.
When $\widetilde{\ter}$ samples only from finite discrete distributions, the domination condition is always satisfied. 
\end{corollary}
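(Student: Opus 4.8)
The plan is to split on the base type $\KK$. For $\KK = \RR$ there is nothing to do: the statement is verbatim Theorem~\ref{thm:full-correctness}. For the two remaining smooth base types $\KK \in \{\posreal, \II\}$ I would reduce to the real-parameter case by the reparameterization trick already used in the proof of Corollary~\ref{cor:all_smooth_types}, but refined so that it also transports the local-domination hypothesis. Throughout I work semantically, using the fundamental lemma underlying Theorem~\ref{thm:full-correctness}, which gives $(\sem{\widetilde{\ter}} \circ g, \sem{\ad{\widetilde{\ter}}} \circ g') \in \rel_{\eRR}$ for every $(g, g') \in \rel_\KK$, with $\rel_{\eRR}$ the revised relation of Figure~\ref{fig:05logrel}.

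Fix $\theta_0 \in \KK$, and let $U(\theta_0) \subseteq \KK$ and $m$ be the neighborhood and integrable dominator supplied by the local-domination hypothesis; pick $\epsilon > 0$ with $(\theta_0 - \epsilon, \theta_0 + \epsilon) \subseteq U(\theta_0)$. I would then choose a differentiable curve $h : \RR \to \KK$ that equals the shift $r \mapsto \theta_0 + r$ on $(-\epsilon, \epsilon)$ and is squashed in its tails so as to land in the open set $\KK$ everywhere. Since $h$ is differentiable, $(h, h_\der) \in \rel_\KK$ for $h_\der = \lambda r.\, (h(r), h'(r))$, so the fundamental lemma yields $(\sem{\widetilde{\ter}} \circ h, \sem{\ad{\widetilde{\ter}}} \circ h_\der) \in \rel_{\eRR}$. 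Writing $g = \sem{\ad{\widetilde{\ter}}} \circ h_\der$ and $\hat{h}_i(r, s) = \pi_i((\pi_2 \circ g)(r)(s))$ and unfolding Figure~\ref{fig:05logrel}, this gives: (i) $\int_\RR \hat{h}_1(r, s)\, ds = \mathbb{E}_{x \sim \sem{\widetilde{\ter}}(h(r))}[x]$; (ii) $\int_\RR \hat{h}_2(r, s)\, ds = \mathbb{E}_{x \sim \snd_*(\fst(g(r)))}[x]$; and (iii) for each fixed seed $s$, the map $r \mapsto \hat{h}_1(r, s)$ is differentiable with derivative $\hat{h}_2(r, s)$. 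Crucially, on $(-\epsilon, \epsilon)$ we have $h_\der(r) = (\theta_0 + r, 1)$, so there $\hat{h}_i(r, s)$ agrees with $\pi_i(\snd(\ad{\widetilde{\ter}}(\theta_0 + r, 1))(s))$; in particular $\hat{h}_2(0, s) = \snd(\snd(\ad{\widetilde{\ter}}(\theta_0, 1))(s))$, i.e. the $h_2$ of Lemma~\ref{lem:weak-correctness} evaluated at $\theta_0$.

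The final step differentiates (i) at $r = 0$ and swaps the derivative with the integral. Because $h' \equiv 1$ on $(-\epsilon, \epsilon)$, we have $\hat{h}_2(r, s) = \snd(\snd(\ad{\widetilde{\ter}}(\theta_0 + r, 1))(s))$ there, so the assumed local domination of $\snd(\snd(\ad{\widetilde{\ter}}(\cdot, 1))(s))$ on $U(\theta_0)$ transports directly to a local integrable bound on $\hat{h}_2(\cdot, s)$ near $r = 0$; together with the per-seed differentiability (iii) this is exactly the hypothesis of the measure-theoretic Leibniz rule. Swapping gives $\frac{d}{dr}\big|_{r = 0} \int_\RR \hat{h}_1(r, s)\, ds = \int_\RR \hat{h}_2(0, s)\, ds$, where by (i) the left-hand side is $\frac{d}{dr}\big|_{r=0} \loss(h(r))$, which — since $h(r) = \theta_0 + r$ near $0$ — equals $\loss'(\theta_0)$ and simultaneously witnesses that $\loss$ is differentiable at $\theta_0$. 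By (ii) at $r = 0$ the right-hand side is $\mathbb{E}_{y \sim \sem{\lambda \theta:\KK.\, \snd_*(\fst(\ad{\widetilde{\ter}}(\theta, 1)))}(\theta_0)}[y]$. Hence this estimator is unbiased for $\loss'(\theta_0)$; as $\theta_0 \in \KK$ was arbitrary, it is a correct unbiased derivative of $\sem{\widetilde{\ter}}$.

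For the last sentence, when $\widetilde{\ter}$ samples only from finitely supported discrete distributions, for each $\theta_0$ the witness $s \mapsto \hat{h}_1(\theta_0, s)$ is supported on finitely many seeds and locally bounded in $\theta$, so a local integrable dominator exists automatically; the domination hypothesis is then vacuous and the conclusion collapses to Corollary~\ref{cor:all_smooth_types}. The step I expect to be the main obstacle is precisely the transport of the local-domination condition across the reparameterization: keeping $h$ a genuine local shift (so that $h' \equiv 1$ near $0$) is what makes $\hat{h}_2(r, s)$ coincide with $\snd(\snd(\ad{\widetilde{\ter}}(\theta_0 + r, 1))(s))$ and lets the hypothesis apply verbatim, whereas using the globally defined sigmoidal reparameterization of Corollary~\ref{cor:all_smooth_types} would force a factor $h'(r)$ through the integrand and hence require linearity of the dual-number tangent in its input — an extra structural fact I would rather avoid. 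The only remaining care is to ensure such an $h$ genuinely maps all of $\RR$ into the open domain $\KK$.
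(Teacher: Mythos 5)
Your proof is correct, and it follows the same overall route the paper intends for this corollary (which is stated with only the one-line justification ``combine Theorem~\ref{thm:full-correctness} with Corollary~\ref{cor:all_smooth_types}''): handle $\KK=\RR$ by Theorem~\ref{thm:full-correctness}, and reduce $\KK\in\{\posreal,\II\}$ to it by pushing a differentiable curve $h$ with $h(0)=\theta_0$, $h'(0)=1$ through the fundamental lemma. Where you genuinely depart from the paper is in the choice of curve, and this is a worthwhile refinement rather than a detour: Corollary~\ref{cor:all_smooth_types} uses globally defined sigmoidal/exponential reparameterizations with $h'\not\equiv 1$ away from $0$, which is harmless in the discrete setting but, as you correctly observe, would change the domination witness to $\snd(\snd(\ad{\widetilde{\ter}}(h(r),h'(r)))(s))$; relating that to the hypothesized bound on $\snd(\snd(\ad{\widetilde{\ter}}(\theta,1))(s))$ would require the (true for every primitive, but nowhere stated as a general lemma) linearity of the tangent component in its input. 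Your locally-affine, tail-squashed $h$ makes the witness coincide literally with the paper's $h_2$ from Lemma~\ref{lem:weak-correctness} on a neighborhood of $r=0$, so the local-domination hypothesis applies verbatim and the Leibniz step goes through unchanged; it also makes $\loss(h(r))=\loss(\theta_0+r)$ near $0$, so differentiability of $\loss$ at $\theta_0$ falls out without invoking the chain rule. In short, you have filled in a transport-of-hypotheses step that the paper leaves implicit. The one place to be slightly more explicit is the construction of $h$ itself (a $C^1$ odd squashing function that is the identity on $[-1,1]$ and has range in a bounded interval, rescaled so its image stays inside $\KK\cap U(\theta_0)$), and, as in the paper, the final sentence about finite discrete programs is asserted at the same level of detail as the original.
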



\begin{figure}[t]
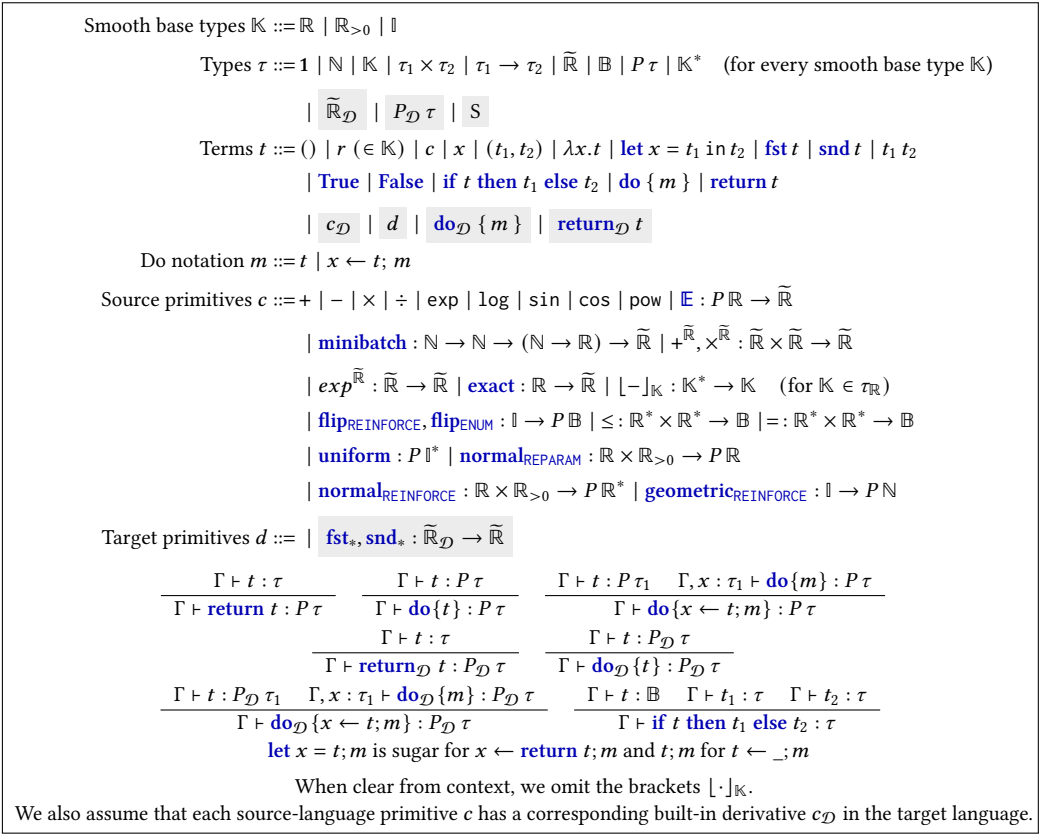

\footnotesize{
\fbox{
  \parbox{.97\textwidth}{
   \vspace{-2mm}
  \revision{
    \centering
    \begin{align*}
    \text{Smooth base types }\KK ::=\,& 
     \RR \mid \posreal \mid \II \\
        \text{Types }\type ::=\,& 
          \mathbf{1} \mid \NN \mid
         \KK \mid
          \type_1 \times \type_2 \mid
          \type_1 \to \type_2  \mid \eRR 
             \mid \BB
        \mid \pmonad\type \mid  \KK^* \quad\text{(for every smooth base type $\KK$)} \\
        & \mid \colorbox{gray!15}{$\deRR$} \mid \colorbox{gray!15}{$\dpmonad\type $} \mid  \colorbox{gray!15}{S} \\
        \text{Terms } \ter ::=\,&
          () \mid r ~(\in \KK) \mid c  \mid
          \var \mid 
          (\ter_1, \ter_2) \mid
          \lambda \var. \ter \mid
          \llet~\var = \ter_1 \,\texttt{in}\, \ter_2\mid
          \fst\,\ter \mid
          \snd\,\ter \mid
          \ter_1\,\ter_2 \\
          & \mid \True \mid \False
        \mid \iifthenelse{\ter}{\ter_1}{\ter_2}
        \mid 
          \haskdo\, \{ \, m \, \}  \mid \return\, \ter
        \\
          & \mid
          \colorbox{gray!15}{$c_\mathcal{D}$} \mid
          \colorbox{gray!15}{$d$} \mid  \colorbox{gray!15}{$\dhaskdo\, \{ \, m \, \} $}\mid
        \colorbox{gray!15}{$\dreturn\, \ter $} \\
         \text{Do notation } m ::=\,&
         \ter \mid
          \var \gets \ter; \, m\\
        \text{Source primitives } c ::=\,& 
          + \mid - \mid \times \mid \div \mid
          \exps \mid \logs \mid \sins \mid \coss \mid \pows  \mid \mbe : \pmonad\RR\to \eRR \\
          & \mid \minibatch : \NN\to\NN\to(\NN\to\RR)\to\eRR \mid \eplus, \etimes : \eRR\times\eRR\to\eRR \\
        & \mid \eexp : \eRR\to\eRR
        \mid \exact:\RR \to \eRR \mid \forget{-}_{\KK} : \KK^*\to\KK\quad(\text{for }\KK\in \type_\RR) \\
        & 
        \mid  \flipreinforce, \flipenum: \II\to \pmonad\BB \mid\, \leq\, : \RR^*\times\RR^*\to\BB   
    \mid\, =\, : \RR^*\times\RR^*\to\BB  \\
        & \mid  \sample :\pmonad\II^* \mid \normalreparam : \RR\times\posreal\to \pmonad\RR \\
    &\mid \normalreinforce : \RR\times\posreal\to \pmonad\RR^* 
    \mid \geometricreinforce : \II \to \pmonad\NN \\
    \text{Target primitives } d ::=\,&  \mid \colorbox{gray!15}{$\fst_*, \snd_*:\eRR_\der\to\eRR$} 
    \end{align*}

    \begin{tabular}{c}
   $\Gamma\vdash \ter:\type$ \\ \hline
   $\Gamma \vdash \return~\ter:\pmonad\type$
\end{tabular}
\quad
\begin{tabular}{c}
    $\Gamma\vdash \ter :\pmonad\type$  \\  \hline
     $\Gamma\vdash \haskdo \{\ter\}:\pmonad\type$
\end{tabular}
\quad
\begin{tabular}{c}
     $\Gamma \vdash\ter:\pmonad\type_1$ \quad 
     $\Gamma,\var:\type_1\vdash \haskdo\{m\} :\pmonad\type$  \\  \hline
     $\Gamma\vdash \haskdo \{\var\gets \ter;m\}:\pmonad\type$
\end{tabular}

 \begin{tabular}{c}
   $\Gamma\vdash \ter:\type$ \\ \hline
   $\Gamma \vdash \dreturn~\ter:\dpmonad\type$
\end{tabular}
\quad
\begin{tabular}{c}
    $\Gamma\vdash \ter :\dpmonad\type$  \\  \hline
     $\Gamma\vdash \dhaskdo \{\ter\}:\dpmonad\type$
\end{tabular}

\begin{tabular}{c}
     $\Gamma \vdash\ter:\dpmonad\type_1$ \quad 
     $\Gamma,\var:\type_1\vdash \dhaskdo\{m\} :\dpmonad\type$  \\  \hline
     $\Gamma\vdash \dhaskdo \{\var\gets \ter;m\}:\dpmonad\type$
\end{tabular}
\quad
\begin{tabular}{c}
     $\Gamma \vdash \ter:\BB$ 
     \quad 
     $\Gamma \vdash \ter_1:\type$  
     \quad 
     $\Gamma \vdash \ter_2:\type$ \\ 
     \hline
     $\Gamma\vdash \iifthenelse{\ter}{\ter_1}{\ter_2}:\type$ 
\end{tabular}
\vspace{-1mm}
    \[\llet~\var =\ter;m \text{ is sugar for }\var\gets \return~\ter;m\text{ and }
    \ter;m\text{ for }\ter\gets \_;m\]

When clear from context, we omit the brackets $\forget{\cdot}_\KK$.

We also assume that each source-language primitive $c$ 
has a corresponding built-in derivative $c_\der$ in the target language. 
    }}}}
      \vspace{-3mm}
    \caption{Full grammar and selected typing rules of the language we study. Gray highlights indicate syntax only present in the target language of the AD macro.}
    \label{fig:syntax_recap}
\vspace{-3mm}
\end{figure}


\begin{figure}[tb]
\footnotesize{
\fbox{
  \parbox{.97\textwidth}{
\revision{ 
$\ad{-}$ on contexts
  \begin{center}
     
     \begin{tabular}{ll}
          $\ad{\bullet}$ &= $\bullet$
     \end{tabular}
     \quad
    \begin{tabular}{ll}
        $\ad{\Gamma,\var:\type}$ &= $\ad{\Gamma},\var:\ad{\type}$
    \end{tabular}
  \end{center}

   \vspace{-1mm}
$\ad{-}$ on types
  \begin{center}
\begin{tabular}{ll}
\ad{\eRR} &= $\deRR \times (S\to \RR\times \RR)$ \\
\ad{\pmonad\type} &= $\vpmonad \ad{\type}$ \\
 \ad{\KK} &= $\mathbb{K}\times \RR$ \\
   \ad{\NN} &= $\NN$
\end{tabular}
\begin{tabular}{ll}
    \ad{\type_1\times\type_2} &= $\ad{\type_1}\times \ad{\type_2}$ \\
        \ad{\type_1\to\type_2} &= $\ad{\type_1}\to \ad{\type_2}$ \\
        \ad{\BB} &= $\BB$ \\
        \ad{\mathbb{K}^*} &= $\mathbb{K}^*$
\end{tabular}
\end{center}

 \vspace{-1mm}
$\ad{-}$ on expressions 
\begin{center}
\begin{tabular}{ll}
    \ad{\var} &= $\var$ \\
    \ad{\lambda\var.\ter} &= $\lambda \var.\ad{\ter}$ \\
    \ad{\ter_1\ter_2} &= \ad{\ter_1}\ad{\ter_2} \\
    \ad{\llet ~\var=\ter_1~\iin~\ter_2} &= $\llet~ \var=\ad{\ter_1}~\iin~\ad{\ter_2}$ \\
    \ad{(\ter_1,\ter_2)} &= $(\ad{\ter_1},\ad{\ter_2})$ \\
    \ad{\fst~\ter} &= $\fst~\ad{\ter}$ \\
    \ad{\snd~\ter} &= $\snd~\ad{\ter}$
 \end{tabular} 
 \begin{tabular}{ll}
   \ad{r:\RR} &= $(r,0)$ \\
    \ad{r:\posreal} &= $(r,0)$ \\
    \ad{r:\NN} &= $r$ \\
    \ad{()} &= () \\
    \ad{\return~\ter} &= $\vreturn~\ad{\ter}$ \\
    \ad{\haskdo\{m\}} &= $\vhaskdo~\{\ad{m}\}$ \\
    \ad{\var\gets \ter;m} &= $\var\gets \ad{\ter};\ad{m}$ \\
 \end{tabular} 
 
 We assume built-in primitives $c_\der$ for the derivatives of source primitives $c$, including $\flipenum$, $\flipreinforce$, $\normalreinforce$, $\normalreparam$, $\geometricreinforce$, $\sample$, $\minibatch$, $\exact$, and $\mbe$. For those we have
 \vspace{-1mm}
 \[ \ad{c} = c_\der \]
 \vspace{-2mm}
$\vpmonad \type, \return_\ver,\haskdo_\ver$ are syntactic sugar for the continuation monad given by $\vpmonad \type := (\type\to \ad{\eRR})\to \ad{\eRR}$. 

 \end{center}
}}}}
  \vspace{-3mm}
\caption{Full AD translation $\ad{-}$. We have the following invariant: if $\Gamma\vdash \ter:\type$, then $\ad{\Gamma}\vdash \ad{\ter}:\ad{\type}$. On  terms $\Gamma \vdash \ter : \eRR$, the first projection of $\ad{\ter}$ is the dual-number derivative, and the second is the witness program for the function whose weak domination has to be checked (see Section~\ref{sec:continuous}).
}
\vspace{-3mm}
\label{fig:ad_recap}
\end{figure}
\vspace{-3mm}



}
\section{Related work}
\label{sec:related_work}

\noindent\textbf{Gradient estimation in machine learning.}
ADEV's primitives compositionally package many gradient estimation strategies
developed in the machine learning community~\citep{mohamed2020monte,kingma2013auto,ranganath2014black,lee2018reparameterization}.
It also extends a growing literature on \textit{stochastic 
computation graphs} (SCGs) 
\citep{schulman2015gradient,weber2019credit,schulman2016optimizing,foerster2018dice},
the goal of which is to help practitioners derive
unbiased gradient estimators for expectations of probabilistic processes
represented as graphs. Recently,~\citet{krieken2021storchastic}
presented Storchastic, a practical system for AD of stochastic computation graphs. 
Storchastic provides reverse-mode AD (often more efficient than the forward-mode AD in our paper); 
and is implemented for PyTorch~\citep{paszke2019pytorch}, a widely used, practical deep learning framework. 
Our work on ADEV is complementary.
We precisely formalize the general problem of automatic differentiation of expected values of probabilistic processes, in a way that applies to broad classes of probabilistic programs (including higher-order)
that cannot easily be represented as computation graphs. Furthermore, our logical relations
allow us to precisely formulate general conditions that new primitives' gradient
estimators must satisfy to be compositionally added to the language. See Appendix~\ref{sec:extending} for further discussion on the consequences of these differences, including: (1) how ADEV can exploit dependency structure that is more explicit in SCGs, (2) how Storchastic gradient estimation methods can be exposed compositionally in ADEV, (3) how ADEV's continuations let it work robustly with multi-sample gradient estimators, whereas Storchastic's broadcasting approach can cause it to fail e.g. in programs with Python \texttt{if} statements, and (4) how higher-order ADEV primitives can encapsulate sophisticated gradient estimation strategies that \textit{don't} decompose into sample-by-sample estimators (as Storchastic's design would require).

Concurrently with our work, \citet{arya2022automatic} developed an intriguing new approach to AD of probabilistic programs, which like ADEV, arises by extending forward-mode AD, but which unlike ADEV, is not based on composing existing, well-understood estimation strategies. It is unclear what source-language features are covered by their algorithm (the authors caution, e.g., that general \texttt{if} statements are unsupported), but the low variance
their estimators appear to achieve may open the door to stable optimization of objectives that have been out of reach using existing estimators. It would be interesting to understand whether their estimators could be exposed compositionally to ADEV users, or even if not, whether the techniques we employ here could be used to prove their 
algorithm sound and extend it to richer source languages.

To our knowledge, among frameworks for deriving unbiased gradient estimators (based on SCGs or \citet{arya2022automatic}'s stochastic triples), ADEV is the only one that handles objectives defined as \textit{functions of} one or more expected values (e.g., $\exp_{\eRR}~(\mbe~p) +_{\eRR} \exp_{\eRR}~(\mbe ~q)$).
\\

\noindent\textbf{Correctness and semantics for probabilistic and differentiable programming.} Partly enabled by
new semantic foundations for probabilistic~\citep{1628343,heunen2017convenient,ehrhard2017measurable} and differentiable~\citep{huot2020correctness,vakar2020denotational,sherman2021} programming, researchers
have recently established a variety of correctness results for both automatic differentiation~\citep{krawiec2022provably,mazza2021automatic,lee2020correctness,abadi-plotkin2020}
and probabilistic program transformations~\citep{scibior2017denotational,lew2019trace,lee2019towards} 
for increasingly expressive languages.
We build most closely on \textit{logical relations} approaches~\citep{katsumata2013relating,ahmed2006step,appel2007very,pientka2019type} for proving properties
of AD algorithms~\citep{huot2020correctness,barthe2020versatility,brunel2019backpropagation,mazza2021automatic}, and on works that use quasi-Borel spaces as a model of
synthetic measure theory~\citep{kock2011commutative,scibior2017denotational,vakar2019domain}. Recent work has begun to formally investigate interactions of differentiability
and probabilistic programming~\citep{mak2021densities,lee2019towards,lew2021towards,sherman2021}, 
but not yet the properties of \textit{AD} in the general probabilistic programming setting.
\\

\noindent\textbf{AD of languages with integration.}
Researchers have recently proposed languages with support
both for integration and AD, including
Teg~\citep{bangaru2021systematically},
a differentiable first-order expression 
language with compact-domain integrals and 
arithmetic, and $\lambda_S$~\citep{sherman2021}, a higher-order language with 
computable integration on $[0, 1]$
as a primitive.  Using compact-domain integration,
it is possible to express some probabilistic program
expectations, but not all (e.g., 
$\lambda_S$ cannot
express probabilistic programs that use
Gaussian distributions). Furthermore, unlike
in Teg and $\lambda_S$, the output of ADEV
is a new probabilistic program, that
can be directly run to produce gradient estimates
for optimization. A unique aspect of Teg is its
support for \textit{parametric discontinuities},
which can sometimes be mimicked in ADEV programs
using discrete random choices like $\texttt{flip}$,
but are in general prohibited by Section~\ref{sec:general}'s 
type system.
\\
\noindent\textbf{AD in PPLs.}
Many practical probabilistic programming languages~\citep{cusumano2019gen,bingham2019pyro,
siddharth2017learning}
support the automated estimation of gradients of a \textit{particular} expectation
with respect to probabilistic programs $q$: the gradient of the ELBO, 
$\nabla_\theta \mathbb{E}_{x \sim q_\theta}[\log p_\theta(x) - \log q_\theta(x)]$.
ADEV formalizes and proves correct a more general algorithm for
arbitrary expected values, giving theory that could help to
understand when these algorithms are correct (as studied in 
a first-order language for independent Gaussians by~\citet{lee2019towards}), and how they can be modularly
extended to support new gradient estimation strategies,
or the estimation of other expectations. 
Many PPLs also rely on AD for reasons \textit{other than} 
differentiating expectations. Typically, these languages 
differentiate \textit{deterministic} programs that are derived from or related to probabilistic ones. 
For example, \citet{NIPS2011_0d7de1ac} differentiate log densities of probabilistic programs, as does the widely-used and highly-optimized Stan~\citep{carpenter2017stan} probabilistic programming system, for use within Hamiltonian Monte Carlo. Venture~\citep{mansinghka2014venture,mansinghka2018probabilistic} and  Gen~\citep{cusumano2019gen} also 
differentiate log densities, for HMC, gradient-based MAP optimization, and Metropolis-Adjusted Langevin Ascent. Gen also computes derivatives of user-defined \textit{involutions} to automatically compute Jacobian 
corrections in reversible-jump MCMC~\citep{cusumano2020automating}. It would be interesting to investigate whether
our semantic setting\textemdash where we can reason about smoothness
via logical relations, and measurability via quasi-Borel semantics\textemdash
could be used to establish the soundness of these PPL applications.

\section{Discussion}
\revision{

\noindent\textbf{Multivariate functions.} To simplify the presentation, we have presented everything in terms of $\RR \to \eRR$ functions with scalar, not vector, inputs and outputs. But the same general strategies used to extend deterministic forward-mode to multivariate functions apply in our case:

Given a term $\vdash t : \RR^n \to P\,\RR^m$, and an input vector $x \in \RR^n$, we can consider the terms $t_{ij} := \lambda \theta : \RR. \mathbb{E} (\haskdo \{ y \gets t(x_1, \dots, x_{i-1}, \theta, x_{i+1}, \dots, x_n); \return~(\pi_j~y)\}$. The translation $\ad{t_{ij}}$ of such a term yields an unbiased estimator of the partial derivative $\frac{\partial y_j}{\partial x_i}$. One (costly) option for estimating the entire Jacobian matrix would be to separately estimate each partial derivative. To reduce the variance of this estimate, the same random seed can be used when generating each term’s estimate, without compromising unbiasedness of the overall estimate.
For a fixed $i$, the computation of $t_{ij}$’s derivative estimate proceeds identically for all $j$; it is only at the end that we extract the $j^{th}$ component of a result vector. There is therefore no need to run the computation $m$ times: we must only run the calculation once for each $i \in \{1, \dots, n\}$, to generate an entire vector of $m$ different $\frac{\partial y_j}{\partial x_i}$ values. This is a well-understood feature of forward-mode AD: it is especially efficient when there are many outputs but few inputs.
As in ordinary forward-mode AD, then, we can compute a Jacobian via $n$
runs of the translated program. Also as in standard forward-mode AD, it is possible to trade memory for time: if instead of dual numbers $\RR \times \RR$ we use dual vectors $\RR \times \RR^n$, we can run the $n$
copies of the computation ‘in parallel.’ 
However, for memory- and time-efficient gradients of functions with high-dimensional inputs, reverse-mode is usually preferred.
\\

\noindent\textbf{Limitations of differentiability analysis.}
Our type system enforces that the user's main program is smooth with respect to the input parameter $\theta$. This limitation has several consequences:
\begin{enumerate}[leftmargin=*]
\item Some ill-typed programs do not have differentiable expectations, so estimating their derivatives is an ill-defined task. We consider rejecting such programs ‘a feature, not a bug.’
\item ADEV’s type system also prevents users from expressing some programs that do have differentiable expectations, but for which efficient gradient estimators are not known or cannot be derived using standard strategies. We would love to differentiate such programs, but we suspect that for expert users hoping to apply ADEV, this limitation would seem natural. (Several recent works~\citep{lee2018reparameterization,bangaru2021systematically} present gradient estimation strategies for restricted classes of discontinuities; these estimators are not yet widely used by practitioners, but we are interested in exploring how they might be incorporated into future versions of ADEV.)
\item Finally, ADEV rejects some programs too eagerly. For example, if a parameter $\theta$ is used non-smoothly but only in a probability-zero set of random executions (i.e., for almost all executions, the function is differentiable for all $\theta$), our type system will reject it, even though our existing gradient estimators would have been correct for the program. More subtly, certain Lipschitz-continuous but non-differentiable uses of a parameter $\theta$ may be permissible, if for any 
$\theta$ the non-differentiability itself is encountered with probability 0 (e.g., $ReLU(x - \theta)$ for $x$ sampled from a Gaussian). A less conservative static analysis could help make ADEV applicable to such programs, which do arise in practice. But we expect this to be a tricky problem. For example, concurrently with our work, \citet{lee2022smoothness}
present a static analysis based on abstract interpretation for careful reasoning about various smoothness properties, including local Lipschitz continuity. Their analysis accepts programs like 
$ReLU(x - \theta)$, 
but it also seems to accept, for example, $ReLU(ReLU(x) - \theta)$,
a term we would want to reject in ADEV (at $\theta = 0$, 
the program is not differentiable for a positive-measure set of x values). We believe that finding more sophisticated static analyses that admit a larger set of programs while still ensuring soundness is an interesting direction for future work, which could broaden the range of applications that AD of probabilistic programs might have.
\end{enumerate}

\noindent\textbf{Haskell prototype.} 
Our Haskell prototype (Appendix~\ref{appx:impl}) is intended as a proof-of-concept illustration of how ADEV integrates with existing libraries for probabilistic and differentiable programming. But with the extensions from Appendix~\ref{sec:extending}, we believe it could be quite usable for practical applications.%
\footnote{Like our theoretical presentation, our implementation extends forward-mode AD, whose cost scales linearly with the number of input parameters. For models with low- to medium-dimensional parameter spaces, forward-mode can be more efficient than reverse-mode, but models containing large neural networks with many parameters, for example, cannot be efficiently differentiated with our current prototype. In the existing literature on AD, improvements to and analyses of reverse-mode algorithms have often built directly on earlier work studying the simpler forward-mode case; our hope is that by showing how standard forward-mode AD algorithms and their proofs can be extended cleanly to handle probabilistic programs, ADEV may lay the groundwork for future research investigating more efficient reverse-mode AD algorithms for probabilistic programs.}
Interestingly, although our analysis does not cover general recursion, our Haskell prototype successfully differentiates many recursive programs. It is also possible, however, to write recursive programs that halt almost surely but whose AD translations do not. For example, consider $\text{geom} = \lambda \theta : \II. \haskdo \{b \gets \flipenum\,\theta; \textbf{if }b\textbf{ then }0\textbf{ else }\haskdo\{ n \gets \text{geom}\, \theta;\, \return\,(n+1)\}\}$. The use of $\flipenum$ causes ADEV’s gradient estimator to attempt an enumeration of program paths, of which there are infinitely many. In this example, the problem could be avoided by using $\flipreinforce$, but it is an open question how to design an AD algorithm and correctness proof that apply to a probabilistic language with general recursion.

}
\begin{acks}                            
We have benefited from discussing this work with many friends and colleagues, including Martin Rinard, Tan Zhi-Xuan, Wonyeol Lee, Faustyna Krawiec, Ohad Kammar, Feras Saad, Cathy Wong, McCoy Becker, Cameron Freer, Michele Pagani, Jesse Michel, Ben Sherman, Kevin Mu, Jesse Sigal, Paolo Perrone, Sean Moss, Younesse Kaddar and the Oxford group. We are also grateful to anonymous referees for very helpful feedback.
This material is based on work supported by the NSF Graduate
Research Fellowship under Grant No. 1745302.
Our work is also supported by a Royal Society University Research Fellowship, the ERC BLAST grant, the Air Force Office of Scientific Research (Award No. FA9550–21–1–0038), and the DARPA Machine Common Sense and SAIL-ON projects.
\end{acks}


\pagebreak
\bibliography{refs}

\pagebreak
\appendix
\section*{Appendix}

This appendix is organised as follows. 
We first provide a small Haskell implementation of ADEV. 
Next, we show how to extend ADEV with several primitives and higher-order constructs from the literature. 
Then, we provide some full figures, given as a reference.  
We finish with a more theory-based view and categorical account on our proof strategy, in particular on the shift between Section~\ref{sec:discrete} and Section~\ref{sec:continuous}.


\section{Haskell Prototype Implementation}
\label{appx:impl}

To back up our claim that ADEV is a modular extension of forward-mode AD,
we developed a prototype Haskell implementation on top of the $\texttt{ad}$~\citep{kmett2021ad} 
and $\texttt{monad-bayes}$~\citep{scibior2017denotational} libraries.
The listing below implements a version of the algorithm that
does not enforce smoothness (Section~\ref{sec:general}) or
output the verification condition (Section~\ref{sec:continuous}).

\small{
\begin{Verbatim}[commandchars=\\\{\}]
\PYG{c+cm}{\PYGZob{}\PYGZhy{}\PYGZsh{} LANGUAGE InstanceSigs, RankNTypes, TypeSynonymInstances, FlexibleInstances,}
\PYG{c+cm}{MultiParamTypeClasses, FunctionalDependencies, ScopedTypeVariables, FlexibleContexts \PYGZsh{}\PYGZhy{}\PYGZcb{}}

\PYG{k+kr}{module}\PYG{+w}{ }\PYG{n+nn}{ADEV}\PYG{+w}{ }\PYG{k+kr}{where}

\PYG{k+kr}{import}\PYG{+w}{ }\PYG{n+nn}{Numeric.Log}\PYG{+w}{  }\PYG{k}{as}\PYG{+w}{ }\PYG{n}{Log}
\PYG{k+kr}{import}\PYG{+w}{ }\PYG{n+nn}{Control.Monad.Bayes.Class}\PYG{+w}{ }\PYG{k}{as}\PYG{+w}{ }\PYG{n}{Bayes}
\PYG{k+kr}{import}\PYG{+w}{ }\PYG{n+nn}{Control.Monad.Cont}
\PYG{k+kr}{import}\PYG{+w}{ }\PYG{n+nn}{Control.Monad}
\PYG{k+kr}{import}\PYG{+w}{ }\PYG{n+nn}{Numeric.AD.Internal.Forward.Double}
\PYG{k+kr}{import}\PYG{+w}{ }\PYG{n+nn}{Control.Monad.Bayes.Sampler.Strict}\PYG{+w}{ }\PYG{p}{(sampleIO)}

\PYG{c+c1}{\PYGZhy{}\PYGZhy{} Typeclass, listing ADEV primitives}
\PYG{k+kr}{class}\PYG{+w}{ }\PYG{p}{(}\PYG{k+kt}{RealFrac}\PYG{+w}{ }\PYG{n}{r}\PYG{p}{,}\PYG{+w}{ }\PYG{k+kt}{Monad}\PYG{+w}{ }\PYG{p}{(}\PYG{n}{p}\PYG{+w}{ }\PYG{n}{m}\PYG{p}{),}\PYG{+w}{ }\PYG{k+kt}{Monad}\PYG{+w}{ }\PYG{n}{m}\PYG{p}{)}\PYG{+w}{ }\PYG{o+ow}{=\PYGZgt{}}\PYG{+w}{ }\PYG{k+kt}{ADEV}\PYG{+w}{ }\PYG{n}{p}\PYG{+w}{ }\PYG{n}{m}\PYG{+w}{ }\PYG{n}{r}\PYG{+w}{ }\PYG{o}{|}\PYG{+w}{ }\PYG{n}{p}\PYG{+w}{ }\PYG{o+ow}{\PYGZhy{}\PYGZgt{}}\PYG{+w}{ }\PYG{n}{r}\PYG{p}{,}\PYG{+w}{ }\PYG{n}{r}\PYG{+w}{ }\PYG{o+ow}{\PYGZhy{}\PYGZgt{}}\PYG{+w}{ }\PYG{n}{p}\PYG{+w}{ }\PYG{k+kr}{where}
\PYG{+w}{  }\PYG{n}{sample}\PYG{+w}{           }\PYG{o+ow}{::}\PYG{+w}{ }\PYG{n}{p}\PYG{+w}{ }\PYG{n}{m}\PYG{+w}{ }\PYG{n}{r}
\PYG{+w}{  }\PYG{n}{flip\PYGZus{}enum}\PYG{+w}{        }\PYG{o+ow}{::}\PYG{+w}{ }\PYG{n}{r}\PYG{+w}{ }\PYG{o+ow}{\PYGZhy{}\PYGZgt{}}\PYG{+w}{ }\PYG{n}{p}\PYG{+w}{ }\PYG{n}{m}\PYG{+w}{ }\PYG{k+kt}{Bool}
\PYG{+w}{  }\PYG{n}{flip\PYGZus{}reinforce}\PYG{+w}{   }\PYG{o+ow}{::}\PYG{+w}{ }\PYG{n}{r}\PYG{+w}{ }\PYG{o+ow}{\PYGZhy{}\PYGZgt{}}\PYG{+w}{ }\PYG{n}{p}\PYG{+w}{ }\PYG{n}{m}\PYG{+w}{ }\PYG{k+kt}{Bool}
\PYG{+w}{  }\PYG{n}{normal\PYGZus{}reparam}\PYG{+w}{   }\PYG{o+ow}{::}\PYG{+w}{ }\PYG{n}{r}\PYG{+w}{ }\PYG{o+ow}{\PYGZhy{}\PYGZgt{}}\PYG{+w}{ }\PYG{n}{r}\PYG{+w}{ }\PYG{o+ow}{\PYGZhy{}\PYGZgt{}}\PYG{+w}{ }\PYG{n}{p}\PYG{+w}{ }\PYG{n}{m}\PYG{+w}{ }\PYG{n}{r}
\PYG{+w}{  }\PYG{n}{normal\PYGZus{}reinforce}\PYG{+w}{ }\PYG{o+ow}{::}\PYG{+w}{ }\PYG{n}{r}\PYG{+w}{ }\PYG{o+ow}{\PYGZhy{}\PYGZgt{}}\PYG{+w}{ }\PYG{n}{r}\PYG{+w}{ }\PYG{o+ow}{\PYGZhy{}\PYGZgt{}}\PYG{+w}{ }\PYG{n}{p}\PYG{+w}{ }\PYG{n}{m}\PYG{+w}{ }\PYG{n}{r}
\PYG{+w}{  }\PYG{n}{expect}\PYG{+w}{           }\PYG{o+ow}{::}\PYG{+w}{ }\PYG{n}{p}\PYG{+w}{ }\PYG{n}{m}\PYG{+w}{ }\PYG{n}{r}\PYG{+w}{ }\PYG{o+ow}{\PYGZhy{}\PYGZgt{}}\PYG{+w}{ }\PYG{n}{m}\PYG{+w}{ }\PYG{n}{r}
\PYG{+w}{  }\PYG{n}{plus\PYGZus{}}\PYG{+w}{            }\PYG{o+ow}{::}\PYG{+w}{ }\PYG{n}{m}\PYG{+w}{ }\PYG{n}{r}\PYG{+w}{ }\PYG{o+ow}{\PYGZhy{}\PYGZgt{}}\PYG{+w}{ }\PYG{n}{m}\PYG{+w}{ }\PYG{n}{r}\PYG{+w}{ }\PYG{o+ow}{\PYGZhy{}\PYGZgt{}}\PYG{+w}{ }\PYG{n}{m}\PYG{+w}{ }\PYG{n}{r}
\PYG{+w}{  }\PYG{n}{times\PYGZus{}}\PYG{+w}{           }\PYG{o+ow}{::}\PYG{+w}{ }\PYG{n}{m}\PYG{+w}{ }\PYG{n}{r}\PYG{+w}{ }\PYG{o+ow}{\PYGZhy{}\PYGZgt{}}\PYG{+w}{ }\PYG{n}{m}\PYG{+w}{ }\PYG{n}{r}\PYG{+w}{ }\PYG{o+ow}{\PYGZhy{}\PYGZgt{}}\PYG{+w}{ }\PYG{n}{m}\PYG{+w}{ }\PYG{n}{r}
\PYG{+w}{  }\PYG{n}{exp\PYGZus{}}\PYG{+w}{             }\PYG{o+ow}{::}\PYG{+w}{ }\PYG{n}{m}\PYG{+w}{ }\PYG{n}{r}\PYG{+w}{ }\PYG{o+ow}{\PYGZhy{}\PYGZgt{}}\PYG{+w}{ }\PYG{n}{m}\PYG{+w}{ }\PYG{n}{r}
\PYG{+w}{  }\PYG{n}{minibatch\PYGZus{}}\PYG{+w}{       }\PYG{o+ow}{::}\PYG{+w}{ }\PYG{k+kt}{Int}\PYG{+w}{ }\PYG{o+ow}{\PYGZhy{}\PYGZgt{}}\PYG{+w}{ }\PYG{k+kt}{Int}\PYG{+w}{ }\PYG{o+ow}{\PYGZhy{}\PYGZgt{}}\PYG{+w}{ }\PYG{p}{(}\PYG{k+kt}{Int}\PYG{+w}{ }\PYG{o+ow}{\PYGZhy{}\PYGZgt{}}\PYG{+w}{ }\PYG{n}{m}\PYG{+w}{ }\PYG{n}{r}\PYG{p}{)}\PYG{+w}{ }\PYG{o+ow}{\PYGZhy{}\PYGZgt{}}\PYG{+w}{ }\PYG{n}{m}\PYG{+w}{ }\PYG{n}{r}
\PYG{+w}{  }\PYG{n}{exact\PYGZus{}}\PYG{+w}{           }\PYG{o+ow}{::}\PYG{+w}{ }\PYG{n}{r}\PYG{+w}{ }\PYG{o+ow}{\PYGZhy{}\PYGZgt{}}\PYG{+w}{ }\PYG{n}{m}\PYG{+w}{ }\PYG{n}{r}

\PYG{c+c1}{\PYGZhy{}\PYGZhy{} \PYGZsq{}forward\PYGZsq{} non\PYGZhy{}AD interpretation of primitives}
\PYG{k+kr}{instance}\PYG{+w}{ }\PYG{k+kt}{MonadDistribution}\PYG{+w}{ }\PYG{n}{m}\PYG{+w}{ }\PYG{o+ow}{=\PYGZgt{}}\PYG{+w}{ }\PYG{k+kt}{ADEV}\PYG{+w}{ }\PYG{k+kt}{IdentityT}\PYG{+w}{ }\PYG{n}{m}\PYG{+w}{ }\PYG{k+kt}{Double}\PYG{+w}{ }\PYG{k+kr}{where}
\PYG{+w}{  }\PYG{n}{sample}\PYG{+w}{           }\PYG{o+ow}{=}\PYG{+w}{ }\PYG{n}{uniform}\PYG{+w}{ }\PYG{l+m+mi}{0}\PYG{+w}{ }\PYG{l+m+mi}{1}
\PYG{+w}{  }\PYG{n}{flip\PYGZus{}enum}\PYG{+w}{        }\PYG{o+ow}{=}\PYG{+w}{ }\PYG{n}{bernoulli}
\PYG{+w}{  }\PYG{n}{flip\PYGZus{}reinforce}\PYG{+w}{   }\PYG{o+ow}{=}\PYG{+w}{ }\PYG{n}{bernoulli}
\PYG{+w}{  }\PYG{n}{normal\PYGZus{}reparam}\PYG{+w}{   }\PYG{o+ow}{=}\PYG{+w}{ }\PYG{n}{normal}
\PYG{+w}{  }\PYG{n}{normal\PYGZus{}reinforce}\PYG{+w}{ }\PYG{o+ow}{=}\PYG{+w}{ }\PYG{n}{normal}
\PYG{+w}{  }\PYG{n}{expect}\PYG{+w}{           }\PYG{o+ow}{=}\PYG{+w}{ }\PYG{n}{runIdentityT}
\PYG{+w}{  }\PYG{n}{exact\PYGZus{}}\PYG{+w}{           }\PYG{o+ow}{=}\PYG{+w}{ }\PYG{n}{return}
\PYG{+w}{  }\PYG{n}{plus\PYGZus{}}\PYG{+w}{ }\PYG{n}{esta}\PYG{+w}{ }\PYG{n}{estb}\PYG{+w}{  }\PYG{o+ow}{=}\PYG{+w}{ }\PYG{n}{pure}\PYG{+w}{ }\PYG{p}{(}\PYG{o}{+}\PYG{p}{)}\PYG{+w}{ }\PYG{o}{\PYGZlt{}*\PYGZgt{}}\PYG{+w}{ }\PYG{n}{esta}\PYG{+w}{ }\PYG{o}{\PYGZlt{}*\PYGZgt{}}\PYG{+w}{ }\PYG{n}{estb}
\PYG{+w}{  }\PYG{n}{times\PYGZus{}}\PYG{+w}{ }\PYG{n}{esta}\PYG{+w}{ }\PYG{n}{estb}\PYG{+w}{ }\PYG{o+ow}{=}\PYG{+w}{ }\PYG{n}{pure}\PYG{+w}{ }\PYG{p}{(}\PYG{o}{*}\PYG{p}{)}\PYG{+w}{ }\PYG{o}{\PYGZlt{}*\PYGZgt{}}\PYG{+w}{ }\PYG{n}{esta}\PYG{+w}{ }\PYG{o}{\PYGZlt{}*\PYGZgt{}}\PYG{+w}{ }\PYG{n}{estb}
\PYG{+w}{  }\PYG{n}{exp\PYGZus{}}\PYG{+w}{ }\PYG{n}{estx}\PYG{+w}{        }\PYG{o+ow}{=}\PYG{+w}{ }\PYG{k+kr}{do}
\PYG{+w}{    }\PYG{n}{n}\PYG{+w}{  }\PYG{o+ow}{\PYGZlt{}\PYGZhy{}}\PYG{+w}{ }\PYG{n}{poisson}\PYG{+w}{ }\PYG{n}{rate}
\PYG{+w}{    }\PYG{n}{xs}\PYG{+w}{ }\PYG{o+ow}{\PYGZlt{}\PYGZhy{}}\PYG{+w}{ }\PYG{n}{replicateM}\PYG{+w}{ }\PYG{n}{n}\PYG{+w}{ }\PYG{n}{estx}
\PYG{+w}{    }\PYG{n}{return}\PYG{+w}{ }\PYG{o}{\PYGZdl{}}\PYG{+w}{ }\PYG{n}{exp}\PYG{+w}{ }\PYG{n}{rate}\PYG{+w}{ }\PYG{o}{*}\PYG{+w}{ }\PYG{n}{product}\PYG{+w}{ }\PYG{p}{(}\PYG{n}{map}\PYG{+w}{ }\PYG{p}{(}\PYG{n+nf}{\PYGZbs{}}\PYG{n}{x}\PYG{+w}{ }\PYG{o+ow}{\PYGZhy{}\PYGZgt{}}\PYG{+w}{ }\PYG{n}{x}\PYG{+w}{ }\PYG{o}{/}\PYG{+w}{ }\PYG{n}{rate}\PYG{p}{)}\PYG{+w}{ }\PYG{n}{xs}\PYG{p}{)}
\PYG{+w}{    }\PYG{k+kr}{where}\PYG{+w}{ }\PYG{n}{rate}\PYG{+w}{ }\PYG{o+ow}{=}\PYG{+w}{ }\PYG{l+m+mi}{2}
\PYG{+w}{  }\PYG{n}{minibatch\PYGZus{}}\PYG{+w}{ }\PYG{n}{n}\PYG{+w}{ }\PYG{n}{m}\PYG{+w}{ }\PYG{n}{f}\PYG{+w}{ }\PYG{o+ow}{=}\PYG{+w}{ }\PYG{k+kr}{do}
\PYG{+w}{    }\PYG{n}{indices}\PYG{+w}{ }\PYG{o+ow}{\PYGZlt{}\PYGZhy{}}\PYG{+w}{ }\PYG{n}{replicateM}\PYG{+w}{ }\PYG{n}{m}\PYG{+w}{ }\PYG{p}{(}\PYG{n}{uniformD}\PYG{+w}{ }\PYG{p}{[}\PYG{l+m+mi}{1}\PYG{o}{..}\PYG{n}{n}\PYG{p}{])}
\PYG{+w}{    }\PYG{n}{vals}\PYG{+w}{    }\PYG{o+ow}{\PYGZlt{}\PYGZhy{}}\PYG{+w}{ }\PYG{n}{mapM}\PYG{+w}{ }\PYG{n}{f}\PYG{+w}{ }\PYG{n}{indices}
\PYG{+w}{    }\PYG{n}{return}\PYG{+w}{ }\PYG{o}{\PYGZdl{}}\PYG{+w}{ }\PYG{p}{(}\PYG{n}{fromIntegral}\PYG{+w}{ }\PYG{n}{n}\PYG{+w}{ }\PYG{o}{/}\PYG{+w}{ }\PYG{n}{fromIntegral}\PYG{+w}{ }\PYG{n}{m}\PYG{p}{)}\PYG{+w}{ }\PYG{o}{*}\PYG{+w}{ }\PYG{p}{(}\PYG{n}{sum}\PYG{+w}{ }\PYG{n}{vals}\PYG{p}{)}

\PYG{c+c1}{\PYGZhy{}\PYGZhy{} AD interpretation of primitives}
\PYG{k+kr}{instance}\PYG{+w}{ }\PYG{k+kt}{MonadDistribution}\PYG{+w}{ }\PYG{n}{m}\PYG{+w}{ }\PYG{o+ow}{=\PYGZgt{}}\PYG{+w}{ }\PYG{k+kt}{ADEV}\PYG{+w}{ }\PYG{p}{(}\PYG{k+kt}{ContT}\PYG{+w}{ }\PYG{k+kt}{ForwardDouble}\PYG{p}{)}\PYG{+w}{ }\PYG{n}{m}\PYG{+w}{ }\PYG{k+kt}{ForwardDouble}\PYG{+w}{ }\PYG{k+kr}{where}
\PYG{+w}{  }\PYG{n}{sample}\PYG{+w}{ }\PYG{o+ow}{=}\PYG{+w}{ }\PYG{k+kt}{ContT}\PYG{+w}{ }\PYG{o}{\PYGZdl{}}\PYG{+w}{ }\PYG{n+nf}{\PYGZbs{}}\PYG{n}{dloss}\PYG{+w}{ }\PYG{o+ow}{\PYGZhy{}\PYGZgt{}}\PYG{+w}{ }\PYG{k+kr}{do}
\PYG{+w}{    }\PYG{n}{u}\PYG{+w}{ }\PYG{o+ow}{\PYGZlt{}\PYGZhy{}}\PYG{+w}{ }\PYG{n}{uniform}\PYG{+w}{ }\PYG{l+m+mi}{0}\PYG{+w}{ }\PYG{l+m+mi}{1}
\PYG{+w}{    }\PYG{n}{dloss}\PYG{+w}{ }\PYG{p}{(}\PYG{n}{bundle}\PYG{+w}{ }\PYG{n}{u}\PYG{+w}{ }\PYG{l+m+mi}{0}\PYG{p}{)}
\PYG{+w}{  }\PYG{n}{flip\PYGZus{}enum}\PYG{+w}{ }\PYG{n}{dp}\PYG{+w}{ }\PYG{o+ow}{=}\PYG{+w}{ }\PYG{k+kt}{ContT}\PYG{+w}{ }\PYG{o}{\PYGZdl{}}\PYG{+w}{ }\PYG{n+nf}{\PYGZbs{}}\PYG{n}{dloss}\PYG{+w}{ }\PYG{o+ow}{\PYGZhy{}\PYGZgt{}}\PYG{+w}{ }\PYG{k+kr}{do}
\PYG{+w}{    }\PYG{n}{dl1}\PYG{+w}{ }\PYG{o+ow}{\PYGZlt{}\PYGZhy{}}\PYG{+w}{ }\PYG{n}{dloss}\PYG{+w}{ }\PYG{k+kt}{True}
\PYG{+w}{    }\PYG{n}{dl2}\PYG{+w}{ }\PYG{o+ow}{\PYGZlt{}\PYGZhy{}}\PYG{+w}{ }\PYG{n}{dloss}\PYG{+w}{ }\PYG{k+kt}{False}
\PYG{+w}{    }\PYG{n}{return}\PYG{+w}{ }\PYG{p}{(}\PYG{n}{dp}\PYG{+w}{ }\PYG{o}{*}\PYG{+w}{ }\PYG{n}{dl1}\PYG{+w}{ }\PYG{o}{+}\PYG{+w}{ }\PYG{p}{(}\PYG{l+m+mi}{1}\PYG{+w}{ }\PYG{o}{\PYGZhy{}}\PYG{+w}{ }\PYG{n}{dp}\PYG{p}{)}\PYG{+w}{ }\PYG{o}{*}\PYG{+w}{ }\PYG{n}{dl2}\PYG{p}{)}
\PYG{+w}{  }\PYG{n}{flip\PYGZus{}reinforce}\PYG{+w}{ }\PYG{n}{dp}\PYG{+w}{ }\PYG{o+ow}{=}\PYG{+w}{ }\PYG{k+kt}{ContT}\PYG{+w}{ }\PYG{o}{\PYGZdl{}}\PYG{+w}{ }\PYG{n+nf}{\PYGZbs{}}\PYG{n}{dloss}\PYG{+w}{ }\PYG{o+ow}{\PYGZhy{}\PYGZgt{}}\PYG{+w}{ }\PYG{k+kr}{do}
\PYG{+w}{    }\PYG{n}{b}\PYG{+w}{           }\PYG{o+ow}{\PYGZlt{}\PYGZhy{}}\PYG{+w}{ }\PYG{n}{bernoulli}\PYG{+w}{ }\PYG{p}{(}\PYG{n}{primal}\PYG{+w}{ }\PYG{n}{dp}\PYG{p}{)}
\PYG{+w}{    }\PYG{p}{(}\PYG{n}{l}\PYG{p}{,}\PYG{+w}{ }\PYG{n}{l\PYGZsq{}}\PYG{p}{)}\PYG{+w}{     }\PYG{o+ow}{\PYGZlt{}\PYGZhy{}}\PYG{+w}{ }\PYG{n}{fmap}\PYG{+w}{ }\PYG{n}{split}\PYG{+w}{ }\PYG{p}{(}\PYG{n}{dloss}\PYG{+w}{ }\PYG{n}{b}\PYG{p}{)}
\PYG{+w}{    }\PYG{k+kr}{let}\PYG{+w}{ }\PYG{n}{logpdf\PYGZsq{}}\PYG{+w}{ }\PYG{o+ow}{=}\PYG{+w}{ }\PYG{n}{tangent}\PYG{+w}{ }\PYG{p}{(}\PYG{n}{log}\PYG{+w}{ }\PYG{o}{\PYGZdl{}}\PYG{+w}{ }\PYG{k+kr}{if}\PYG{+w}{ }\PYG{n}{b}\PYG{+w}{ }\PYG{k+kr}{then}\PYG{+w}{ }\PYG{n}{dp}\PYG{+w}{ }\PYG{k+kr}{else}\PYG{+w}{ }\PYG{l+m+mi}{1}\PYG{+w}{ }\PYG{o}{\PYGZhy{}}\PYG{+w}{ }\PYG{n}{dp}\PYG{p}{)}
\PYG{+w}{    }\PYG{n}{return}\PYG{+w}{ }\PYG{p}{(}\PYG{n}{bundle}\PYG{+w}{ }\PYG{n}{l}\PYG{+w}{ }\PYG{p}{(}\PYG{n}{l\PYGZsq{}}\PYG{+w}{ }\PYG{o}{+}\PYG{+w}{ }\PYG{n}{l}\PYG{+w}{ }\PYG{o}{*}\PYG{+w}{ }\PYG{n}{logpdf\PYGZsq{}}\PYG{p}{))}
\PYG{+w}{  }\PYG{n}{normal\PYGZus{}reparam}\PYG{+w}{ }\PYG{n}{dmu}\PYG{+w}{ }\PYG{n}{dsig}\PYG{+w}{ }\PYG{o+ow}{=}\PYG{+w}{ }\PYG{k+kr}{do}
\PYG{+w}{    }\PYG{n}{deps}\PYG{+w}{ }\PYG{o+ow}{\PYGZlt{}\PYGZhy{}}\PYG{+w}{ }\PYG{n}{stdnorm}
\PYG{+w}{    }\PYG{n}{return}\PYG{+w}{ }\PYG{o}{\PYGZdl{}}\PYG{+w}{ }\PYG{p}{(}\PYG{n}{deps}\PYG{+w}{ }\PYG{o}{*}\PYG{+w}{ }\PYG{n}{dsig}\PYG{p}{)}\PYG{+w}{ }\PYG{o}{+}\PYG{+w}{ }\PYG{n}{dmu}
\PYG{+w}{    }\PYG{k+kr}{where}
\PYG{+w}{      }\PYG{n}{stdnorm}\PYG{+w}{ }\PYG{o+ow}{=}\PYG{+w}{ }\PYG{k+kt}{ContT}\PYG{+w}{ }\PYG{o}{\PYGZdl{}}\PYG{+w}{ }\PYG{n+nf}{\PYGZbs{}}\PYG{n}{dloss}\PYG{+w}{ }\PYG{o+ow}{\PYGZhy{}\PYGZgt{}}\PYG{+w}{ }\PYG{k+kr}{do}
\PYG{+w}{        }\PYG{n}{eps}\PYG{+w}{ }\PYG{o+ow}{\PYGZlt{}\PYGZhy{}}\PYG{+w}{ }\PYG{n}{normal}\PYG{+w}{ }\PYG{l+m+mi}{0}\PYG{+w}{ }\PYG{l+m+mi}{1}
\PYG{+w}{        }\PYG{n}{dloss}\PYG{+w}{ }\PYG{p}{(}\PYG{n}{bundle}\PYG{+w}{ }\PYG{n}{eps}\PYG{+w}{ }\PYG{l+m+mi}{0}\PYG{p}{)}
\PYG{+w}{  }\PYG{n}{normal\PYGZus{}reinforce}\PYG{+w}{ }\PYG{n}{dmu}\PYG{+w}{ }\PYG{n}{dsig}\PYG{+w}{ }\PYG{o+ow}{=}\PYG{+w}{ }\PYG{k+kt}{ContT}\PYG{+w}{ }\PYG{o}{\PYGZdl{}}\PYG{+w}{ }\PYG{n+nf}{\PYGZbs{}}\PYG{n}{dloss}\PYG{+w}{ }\PYG{o+ow}{\PYGZhy{}\PYGZgt{}}\PYG{+w}{ }\PYG{k+kr}{do}
\PYG{+w}{    }\PYG{n}{x}\PYG{+w}{           }\PYG{o+ow}{\PYGZlt{}\PYGZhy{}}\PYG{+w}{ }\PYG{n}{normal}\PYG{+w}{ }\PYG{p}{(}\PYG{n}{primal}\PYG{+w}{ }\PYG{n}{dmu}\PYG{p}{)}\PYG{+w}{ }\PYG{p}{(}\PYG{n}{primal}\PYG{+w}{ }\PYG{n}{dsig}\PYG{p}{)}
\PYG{+w}{    }\PYG{k+kr}{let}\PYG{+w}{ }\PYG{n}{dx}\PYG{+w}{      }\PYG{o+ow}{=}\PYG{+w}{  }\PYG{n}{bundle}\PYG{+w}{ }\PYG{n}{x}\PYG{+w}{ }\PYG{l+m+mi}{0}
\PYG{+w}{    }\PYG{p}{(}\PYG{n}{l}\PYG{p}{,}\PYG{+w}{ }\PYG{n}{l\PYGZsq{}}\PYG{p}{)}\PYG{+w}{     }\PYG{o+ow}{\PYGZlt{}\PYGZhy{}}\PYG{+w}{ }\PYG{n}{fmap}\PYG{+w}{ }\PYG{n}{split}\PYG{+w}{ }\PYG{p}{(}\PYG{n}{dloss}\PYG{+w}{ }\PYG{n}{dx}\PYG{p}{)}
\PYG{+w}{    }\PYG{k+kr}{let}\PYG{+w}{ }\PYG{n}{logpdf\PYGZsq{}}\PYG{+w}{ }\PYG{o+ow}{=}\PYG{+w}{  }\PYG{n}{tangent}\PYG{+w}{ }\PYG{o}{\PYGZdl{}}\PYG{+w}{ }\PYG{p}{(}\PYG{o}{\PYGZhy{}}\PYG{l+m+mi}{1}\PYG{+w}{ }\PYG{o}{*}\PYG{+w}{ }\PYG{n}{log}\PYG{+w}{ }\PYG{n}{dsig}\PYG{p}{)}\PYG{+w}{ }\PYG{o}{\PYGZhy{}}\PYG{+w}{ }\PYG{l+m+mf}{0.5}\PYG{+w}{ }\PYG{o}{*}\PYG{+w}{ }\PYG{p}{((}\PYG{n}{dx}\PYG{+w}{ }\PYG{o}{\PYGZhy{}}\PYG{+w}{ }\PYG{n}{dmu}\PYG{p}{)}\PYG{+w}{ }\PYG{o}{/}\PYG{+w}{ }\PYG{n}{dsig}\PYG{p}{)}\PYG{o}{\PYGZca{}}\PYG{l+m+mi}{2}
\PYG{+w}{    }\PYG{n}{return}\PYG{+w}{ }\PYG{p}{(}\PYG{n}{bundle}\PYG{+w}{ }\PYG{n}{l}\PYG{+w}{ }\PYG{p}{(}\PYG{n}{l\PYGZsq{}}\PYG{+w}{ }\PYG{o}{+}\PYG{+w}{ }\PYG{n}{l}\PYG{+w}{ }\PYG{o}{*}\PYG{+w}{ }\PYG{n}{logpdf\PYGZsq{}}\PYG{p}{))}
\PYG{+w}{  }\PYG{n}{expect}\PYG{+w}{ }\PYG{n}{prog}\PYG{+w}{ }\PYG{o+ow}{=}\PYG{+w}{ }\PYG{n}{runContT}\PYG{+w}{ }\PYG{n}{prog}\PYG{+w}{ }\PYG{n}{return}
\PYG{+w}{  }\PYG{n}{plus\PYGZus{}}\PYG{+w}{ }\PYG{n}{est\PYGZus{}da}\PYG{+w}{ }\PYG{n}{est\PYGZus{}db}\PYG{+w}{ }\PYG{o+ow}{=}\PYG{+w}{ }\PYG{n}{pure}\PYG{+w}{ }\PYG{p}{(}\PYG{o}{+}\PYG{p}{)}\PYG{+w}{ }\PYG{o}{\PYGZlt{}*\PYGZgt{}}\PYG{+w}{ }\PYG{n}{est\PYGZus{}da}\PYG{+w}{ }\PYG{o}{\PYGZlt{}*\PYGZgt{}}\PYG{+w}{ }\PYG{n}{est\PYGZus{}db}
\PYG{+w}{  }\PYG{n}{times\PYGZus{}}\PYG{+w}{ }\PYG{n}{est\PYGZus{}da}\PYG{+w}{ }\PYG{n}{est\PYGZus{}db}\PYG{+w}{ }\PYG{o+ow}{=}\PYG{+w}{ }\PYG{n}{pure}\PYG{+w}{ }\PYG{p}{(}\PYG{o}{*}\PYG{p}{)}\PYG{+w}{ }\PYG{o}{\PYGZlt{}*\PYGZgt{}}\PYG{+w}{ }\PYG{n}{est\PYGZus{}da}\PYG{+w}{ }\PYG{o}{\PYGZlt{}*\PYGZgt{}}\PYG{+w}{ }\PYG{n}{est\PYGZus{}db}
\PYG{+w}{  }\PYG{n}{exp\PYGZus{}}\PYG{+w}{ }\PYG{n}{estimate\PYGZus{}dx}\PYG{+w}{ }\PYG{o+ow}{=}\PYG{+w}{ }\PYG{k+kr}{do}
\PYG{+w}{    }\PYG{p}{(}\PYG{n}{x}\PYG{p}{,}\PYG{+w}{ }\PYG{n}{x\PYGZsq{}}\PYG{p}{)}\PYG{+w}{ }\PYG{o+ow}{\PYGZlt{}\PYGZhy{}}\PYG{+w}{ }\PYG{p}{(}\PYG{n}{fmap}\PYG{+w}{ }\PYG{n}{split}\PYG{+w}{ }\PYG{n}{estimate\PYGZus{}dx}\PYG{p}{)}
\PYG{+w}{    }\PYG{n}{s}\PYG{+w}{ }\PYG{o+ow}{\PYGZlt{}\PYGZhy{}}\PYG{+w}{ }\PYG{n}{exp\PYGZus{}}\PYG{+w}{ }\PYG{p}{(}\PYG{n}{fmap}\PYG{+w}{ }\PYG{n}{primal}\PYG{+w}{ }\PYG{n}{estimate\PYGZus{}dx}\PYG{p}{)}
\PYG{+w}{    }\PYG{n}{return}\PYG{+w}{ }\PYG{p}{(}\PYG{n}{bundle}\PYG{+w}{ }\PYG{n}{x}\PYG{+w}{ }\PYG{p}{(}\PYG{n}{s}\PYG{+w}{ }\PYG{o}{*}\PYG{+w}{ }\PYG{n}{x\PYGZsq{}}\PYG{p}{))}
\PYG{+w}{  }\PYG{n}{minibatch\PYGZus{}}\PYG{+w}{ }\PYG{n}{n}\PYG{+w}{ }\PYG{n}{m}\PYG{+w}{ }\PYG{n}{estimate\PYGZus{}df}\PYG{+w}{ }\PYG{o+ow}{=}\PYG{+w}{ }\PYG{k+kr}{do}
\PYG{+w}{    }\PYG{n}{indices}\PYG{+w}{ }\PYG{o+ow}{\PYGZlt{}\PYGZhy{}}\PYG{+w}{ }\PYG{n}{replicateM}\PYG{+w}{ }\PYG{n}{m}\PYG{+w}{ }\PYG{p}{(}\PYG{n}{uniformD}\PYG{+w}{ }\PYG{p}{[}\PYG{l+m+mi}{1}\PYG{o}{..}\PYG{n}{n}\PYG{p}{])}
\PYG{+w}{    }\PYG{n}{dfs}\PYG{+w}{ }\PYG{o+ow}{\PYGZlt{}\PYGZhy{}}\PYG{+w}{ }\PYG{n}{mapM}\PYG{+w}{ }\PYG{p}{(}\PYG{n+nf}{\PYGZbs{}}\PYG{n}{i}\PYG{+w}{ }\PYG{o+ow}{\PYGZhy{}\PYGZgt{}}\PYG{+w}{ }\PYG{n}{estimate\PYGZus{}df}\PYG{+w}{ }\PYG{n}{i}\PYG{p}{)}\PYG{+w}{ }\PYG{n}{indices}
\PYG{+w}{    }\PYG{n}{return}\PYG{+w}{ }\PYG{o}{\PYGZdl{}}\PYG{+w}{ }\PYG{p}{(}\PYG{n}{sum}\PYG{+w}{ }\PYG{n}{dfs}\PYG{p}{)}\PYG{+w}{ }\PYG{o}{*}\PYG{+w}{ }\PYG{p}{(}\PYG{n}{fromIntegral}\PYG{+w}{ }\PYG{n}{n}\PYG{+w}{ }\PYG{o}{/}\PYG{+w}{ }\PYG{n}{fromIntegral}\PYG{+w}{ }\PYG{n}{m}\PYG{p}{)}
\PYG{+w}{  }\PYG{n}{exact\PYGZus{}}\PYG{+w}{ }\PYG{o+ow}{=}\PYG{+w}{ }\PYG{n}{return}

\PYG{c+c1}{\PYGZhy{}\PYGZhy{} Derivative operator}
\PYG{n+nf}{diff}\PYG{+w}{ }\PYG{o+ow}{::}\PYG{+w}{ }\PYG{k+kt}{MonadDistribution}\PYG{+w}{ }\PYG{n}{m}\PYG{+w}{ }\PYG{o+ow}{=\PYGZgt{}}\PYG{+w}{ }\PYG{p}{(}\PYG{k+kt}{ForwardDouble}\PYG{+w}{ }\PYG{o+ow}{\PYGZhy{}\PYGZgt{}}\PYG{+w}{ }\PYG{n}{m}\PYG{+w}{ }\PYG{k+kt}{ForwardDouble}\PYG{p}{)}\PYG{+w}{ }\PYG{o+ow}{\PYGZhy{}\PYGZgt{}}\PYG{+w}{ }\PYG{k+kt}{Double}\PYG{+w}{ }\PYG{o+ow}{\PYGZhy{}\PYGZgt{}}\PYG{+w}{ }\PYG{n}{m}\PYG{+w}{ }\PYG{k+kt}{Double}
\PYG{n+nf}{diff}\PYG{+w}{ }\PYG{n}{f}\PYG{+w}{ }\PYG{n}{x}\PYG{+w}{ }\PYG{o+ow}{=}\PYG{+w}{ }\PYG{k+kr}{do}
\PYG{+w}{  }\PYG{n}{df}\PYG{+w}{ }\PYG{o+ow}{\PYGZlt{}\PYGZhy{}}\PYG{+w}{ }\PYG{n}{f}\PYG{+w}{ }\PYG{p}{(}\PYG{n}{bundle}\PYG{+w}{ }\PYG{n}{x}\PYG{+w}{ }\PYG{l+m+mi}{1}\PYG{p}{)}
\PYG{+w}{  }\PYG{n}{return}\PYG{+w}{ }\PYG{p}{(}\PYG{n}{tangent}\PYG{+w}{ }\PYG{n}{df}\PYG{p}{)}

\PYG{c+c1}{\PYGZhy{}\PYGZhy{} Example program l : R \PYGZhy{}\PYGZgt{} MR}
\PYG{n+nf}{l}\PYG{+w}{ }\PYG{o+ow}{::}\PYG{+w}{ }\PYG{k+kt}{ADEV}\PYG{+w}{ }\PYG{n}{p}\PYG{+w}{ }\PYG{n}{m}\PYG{+w}{ }\PYG{n}{r}\PYG{+w}{ }\PYG{o+ow}{=\PYGZgt{}}\PYG{+w}{ }\PYG{n}{r}\PYG{+w}{ }\PYG{o+ow}{\PYGZhy{}\PYGZgt{}}\PYG{+w}{ }\PYG{n}{m}\PYG{+w}{ }\PYG{n}{r}
\PYG{n+nf}{l}\PYG{+w}{ }\PYG{n}{theta}\PYG{+w}{ }\PYG{o+ow}{=}\PYG{+w}{ }\PYG{n}{expect}\PYG{+w}{ }\PYG{o}{\PYGZdl{}}\PYG{+w}{ }\PYG{k+kr}{do}
\PYG{+w}{  }\PYG{n}{b}\PYG{+w}{ }\PYG{o+ow}{\PYGZlt{}\PYGZhy{}}\PYG{+w}{ }\PYG{n}{flip\PYGZus{}reinforce}\PYG{+w}{ }\PYG{n}{theta}
\PYG{+w}{  }\PYG{k+kr}{if}\PYG{+w}{ }\PYG{n}{b}\PYG{+w}{ }\PYG{k+kr}{then}
\PYG{+w}{    }\PYG{n}{return}\PYG{+w}{ }\PYG{l+m+mi}{0}
\PYG{+w}{  }\PYG{k+kr}{else}
\PYG{+w}{    }\PYG{n}{return}\PYG{+w}{ }\PYG{p}{(}\PYG{o}{\PYGZhy{}}\PYG{n}{theta}\PYG{+w}{ }\PYG{o}{/}\PYG{+w}{ }\PYG{l+m+mi}{2}\PYG{p}{)}

\PYG{c+c1}{\PYGZhy{}\PYGZhy{} Run Stochastic Gradient Descent}
\PYG{n+nf}{sgd}\PYG{+w}{ }\PYG{o+ow}{::}\PYG{+w}{ }\PYG{k+kt}{MonadDistribution}\PYG{+w}{ }\PYG{n}{m}\PYG{+w}{ }\PYG{o+ow}{=\PYGZgt{}}\PYG{+w}{ }\PYG{p}{(}\PYG{k+kt}{ForwardDouble}\PYG{+w}{ }\PYG{o+ow}{\PYGZhy{}\PYGZgt{}}\PYG{+w}{ }\PYG{n}{m}\PYG{+w}{ }\PYG{k+kt}{ForwardDouble}\PYG{p}{)}
\PYG{+w}{       }\PYG{o+ow}{\PYGZhy{}\PYGZgt{}}\PYG{+w}{ }\PYG{k+kt}{Double}\PYG{+w}{ }\PYG{o+ow}{\PYGZhy{}\PYGZgt{}}\PYG{+w}{ }\PYG{k+kt}{Double}\PYG{+w}{ }\PYG{o+ow}{\PYGZhy{}\PYGZgt{}}\PYG{+w}{ }\PYG{k+kt}{Int}\PYG{+w}{ }\PYG{o+ow}{\PYGZhy{}\PYGZgt{}}\PYG{+w}{ }\PYG{n}{m}\PYG{+w}{ }\PYG{p}{[}\PYG{k+kt}{Double}\PYG{p}{]}
\PYG{n+nf}{sgd}\PYG{+w}{ }\PYG{n}{loss}\PYG{+w}{ }\PYG{n}{eta}\PYG{+w}{ }\PYG{n}{x0}\PYG{+w}{ }\PYG{n}{steps}\PYG{+w}{ }\PYG{o+ow}{=}
\PYG{+w}{  }\PYG{k+kr}{if}\PYG{+w}{ }\PYG{n}{steps}\PYG{+w}{ }\PYG{o}{==}\PYG{+w}{ }\PYG{l+m+mi}{0}\PYG{+w}{ }\PYG{k+kr}{then}
\PYG{+w}{    }\PYG{n}{return}\PYG{+w}{ }\PYG{p}{[}\PYG{n}{x0}\PYG{p}{]}
\PYG{+w}{  }\PYG{k+kr}{else}\PYG{+w}{ }\PYG{k+kr}{do}
\PYG{+w}{    }\PYG{n}{v}\PYG{+w}{ }\PYG{o+ow}{\PYGZlt{}\PYGZhy{}}\PYG{+w}{ }\PYG{n}{diff}\PYG{+w}{ }\PYG{n}{loss}\PYG{+w}{ }\PYG{n}{x0}
\PYG{+w}{    }\PYG{k+kr}{let}\PYG{+w}{ }\PYG{n}{x1}\PYG{+w}{ }\PYG{o+ow}{=}\PYG{+w}{ }\PYG{n}{x0}\PYG{+w}{ }\PYG{o}{\PYGZhy{}}\PYG{+w}{ }\PYG{n}{eta}\PYG{+w}{ }\PYG{o}{*}\PYG{+w}{ }\PYG{n}{v}
\PYG{+w}{    }\PYG{n}{xs}\PYG{+w}{ }\PYG{o+ow}{\PYGZlt{}\PYGZhy{}}\PYG{+w}{ }\PYG{n}{sgd}\PYG{+w}{ }\PYG{n}{loss}\PYG{+w}{ }\PYG{n}{eta}\PYG{+w}{ }\PYG{n}{x1}\PYG{+w}{ }\PYG{p}{(}\PYG{n}{steps}\PYG{+w}{ }\PYG{o}{\PYGZhy{}}\PYG{+w}{ }\PYG{l+m+mi}{1}\PYG{p}{)}
\PYG{+w}{    }\PYG{n}{return}\PYG{+w}{ }\PYG{p}{(}\PYG{n}{x0}\PYG{k+kt}{:}\PYG{n}{xs}\PYG{p}{)}

\PYG{n+nf}{main}\PYG{+w}{ }\PYG{o+ow}{::}\PYG{+w}{ }\PYG{k+kt}{IO}\PYG{+w}{ }\PYG{n+nb}{()}
\PYG{n+nf}{main}\PYG{+w}{ }\PYG{o+ow}{=}\PYG{+w}{ }\PYG{k+kr}{do}
\PYG{+w}{  }\PYG{n}{vs}\PYG{+w}{ }\PYG{o+ow}{\PYGZlt{}\PYGZhy{}}\PYG{+w}{ }\PYG{n}{sampleIO}\PYG{+w}{ }\PYG{o}{\PYGZdl{}}\PYG{+w}{ }\PYG{n}{sgd}\PYG{+w}{ }\PYG{n}{l}\PYG{+w}{ }\PYG{l+m+mf}{0.2}\PYG{+w}{ }\PYG{l+m+mf}{0.2}\PYG{+w}{ }\PYG{l+m+mi}{100}
\PYG{+w}{  }\PYG{n}{print}\PYG{+w}{ }\PYG{n}{vs}
\end{Verbatim}    
}
\section{Extending ADEV}
\label{sec:extending}

Our paper builds the ADEV algorithm one piece at a time, repeatedly adding new types, constructs, and primitives to increase the expressiveness of the language, modularly extending the correctness proof at each step. In this section,
we demonstrate by example that ADEV can be modularly extended in many more useful directions (all of which are implemented in our Haskell prototype at \url{https://github.com/probcomp/adev}):

\begin{itemize}
    \item We add a new primitive for variance reduction based on control variates ($\baseline$, \ref{sub:baseline}).
    \item We add new constructs that let users expose structure in a loss function, enabling ADEV to exploit the ``credit assignment'' variance reduction technique from~\citet{schulman2015gradient} ($W\!P~\tau$ and $\addcost$, \ref{sub:scg}).
    \item We introduce new constructs for representing distributions with known density functions, and operations that use those densities to automatically construct gradient estimators ($D~\sigma, \reinforce$, \ref{sec:densities}). 
    \item We show how to add multi-sample gradient estimators from the Storchastic framework~\citep{krieken2021storchastic} and discuss pros and cons of Storchastic's vs. ADEV's interfaces ($\leaveoneout$, \ref{sub:storchastic}).
    \item We add a higher-order primitive for sequential Monte Carlo, with custom derivative logic that exploits \citet{scibior2021differentiable}'s differentiable particle filter estimator ($\smc$, \ref{sub:particle-filter}).
    \item We demonstrate how \texttt{stop-grad}-like operations can be justified if they are encapsulated within the implementations of certain primitives ($\importance$, \ref{sub:stop-grad}).
    \item We show how implicit reparameterization can be used to create gradient estimators for some distributions ($C~\RR$ and $\implicitdiff$, \ref{sub:implicit-diff}). 
    \item We show how weak or measure-valued derivatives can be incorporated as estimators ($\poissonweak$, \ref{sub:weak-deriv}). 
    \item We add a higher-order primitive for a reparametized rejection sampler from \citet{naesseth2017reparameterization} ($\reparamreject$, \ref{sub:reject})
\end{itemize}

\subsection{Controlling Variance with Baselines}
\label{sub:baseline}

Suppose $p : \RR \to P~X$ and we wish to estimate $\mathbb{E}_{x \sim p(\theta)}[f(x)]$ and its derivative with respect to $\theta$, 
for some function $f : X \to \RR$.

For some estimators (e.g., the REINFORCE estimator), the variance of the gradient estimate may grow with the magnitude of $f$. 
In these cases it can be useful to ``center'' the loss function $f$: 
instead of passing $\lambda x. f(x)$ to $p$'s gradient estimator, we pass $\lambda x. f(x) - c(\theta)$ for some \textit{baseline} $c$, yielding a (hopefully lower-variance) estimate of $\frac{d}{d\theta}\left(\mathbb{E}_{x \sim p_\theta}[f(x)] - c(\theta)\right)$, to which we must re-add $c'(\theta)$ to obtain an estimate of $\frac{d}{d\theta}\mathbb{E}_{x \sim p_\theta}[f(x)]$. 

In ADEV, we can expose this technique to users via a primitive $\baseline : P~\RR \to \RR \to \eRR$. Semantically, $\sem{\baseline~p~b} = \sem{\mbe~p}$, but its built-in dual-number derivative is distinct:

\begin{center}
\begin{algorithm}[H]
\DontPrintSemicolon
\SetKwProg{Fn}{}{:}{end}
\Fn{$\baseline_\der(\widetilde{dp}: P_\mathcal{D}~\ad{\RR}, db : \ad{\RR})$}{
$dl\sim \widetilde{dp}(\lambda dx. \exact_\mathcal{D}~ (dx -_\mathcal{D} db))$ \;
\Return $dl +_\mathcal{D}~db$
}
\end{algorithm}  
\end{center}

\noindent Given $\widetilde{dp} : (\ad{\RR} \to \deRR) \to \deRR$ and $db : \ad{\RR}$, the derivative first calls $dp$ on $\lambda dr. \exact_\mathcal{D} (dr -_{\mathcal{D}}~db)$ to obtain a dual-number loss estimate $dl$, then returns $dl +_\mathcal{D}~db$.

\subsection{Accounting for the Dependency Graph: Stochastic Computation Graphs}
\label{sub:scg}

ADEV generalizes the stochastic computation graphs (SCG) framework~\citep{schulman2015gradient} from computation 
graphs to higher-order probabilistic programs. On programs that use a combination of REINFORCE- and REPARAM-based
primitives, the resulting ADEV estimators often resemble the SCG estimator for a particular graph. 
One aspect of SCGs that vanilla ADEV fails to capture, however, is their tracking of \textit{dependence relationships} between primitive random choices and additive terms in the loss. 

For example, consider the program $$\lambda \theta : \II. \mbe (\haskdo~\{x \gets \flipreinforce~\theta; y \gets \flipreinforce~\theta; \return~(c_1(x)+c_2(x,y))\})$$ for two loss functions $c_1 : \BB \to \RR$ and $c_2 : \BB \times \BB \to \RR$. In the SCG framework, 
we might represent this program as a graph with \textit{four} nodes: two stochastic nodes, for $x$ and $y$, and two \textit{cost nodes}, for $c_1$ and $c_2$, with a directed edge from $x$ to $c_1$ and $c_2$, and from $y$ to $c_2$. This graph structure captures the fact that the term $c_1(x)$ does not depend on $y$. The SCG estimator exploits this fact to reduce the variance of the resulting estimator,
which samples $x$ and $y$ from their Bernoulli distributions, and computes $(\frac{d}{d\theta}\log \text{Bern}(x; \theta))(c_1(x) + c_2(x, y)) + (\frac{d}{d\theta} \log \text{Bern}(y; \theta))(c_2(x, y))$. Note that the derivative of $y$'s log density is multiplied only by $c_2$, i.e., only by the portion of the loss function for which $y$ should ``get credit.'' By contrast, ADEV sees the term $c_1(x) + c_2(x, y)$ as a monolithic value, and uses the sound but generally higher-variance estimator $(\frac{d}{d\theta}\log \text{Bern}(x; \theta))(c_1(x) + c_2(x, y)) + (\frac{d}{d\theta} \log \text{Bern}(y; \theta))(c_1(x) + c_2(x, y))$. 

We can fix this by making the additive structure of the loss function explicit. We replace the monad $P~\tau$ with the monad $W\!P~\tau$, which uses the writer monad transformer to explicitly track an accumulated loss as a program executes. More precisely, $\sem{W\!P~\tau} = \sem{P\,(\tau \times \RR)}$. The unit of the monad, $\return_{W\!P} : \tau \to W\!P~\tau$, deterministically returns its argument and the accumulated loss $0$. To sequence computations, we write
$\haskdo_{W\!P} \{x \gets t; m\}$, which first runs $t$ to generate $(x, l)$, then runs $\haskdo_{W\!P} \{m(x)\}$ to generate $(y, l')$, and finally returns $(y, l + l')$. We extend the AD macro to cover $W\!P~\tau$, setting $\ad{W\!P~\tau} := \ad{P~\tau}$, with the exact same translations for $\return_{W\!P}$ and $\haskdo_{W\!P}$ as we had for $\return$ and $\haskdo$. The difference is in our \textit{correctness requirement} for the new translations: a function $g : \RR \to P_\mathcal{D}~\tau$ is a derivative of a function $f : \RR \to W\!P~\tau$ if
whenever $(h, j) \in R_{\tau \to \eRR}$, we have that for all $\theta \in \RR$, $g(\theta)(j(\theta)) : \deRR$ is an unbiased dual-number estimator of the value and derivative of 
$\lambda \theta. \mathbb{E}_{(x, w) \sim f(\theta)}[\mathbb{E}_{y \sim h(\theta)(x)}[y] + w]$. 

In other words: a program of type $W\!P~\tau$ represents a distribution over \textit{pairs} $(x, w) \in \sem{\tau} \times \RR$, and
when we think about the expectation of a function $f : \sem{\tau} \to \RR$ under this distribution, we always treat $w$ additively, computing $\mathbb{E}[f(x) + w]$.  This development allows us to define the primitive $\addcost : \RR \to W\!P~1$, which, given a number $w$, deterministically returns $((), w)$. The \textit{built-in derivative} for $\addcost$ accepts as input a dual-number version $dw : \ad{\RR}$ of $w$, and a loss-to-go $\widetilde{dl} : 1 \to \deRR$. It samples $dl \sim \widetilde{dl}$ and then returns $dl +_\mathcal{D} dw$. 

Why does this development help solve the problem discussed above? It allows us to rewrite our example program as $$\lambda \theta : \II. \mbe (\haskdo_{W\!P}~\{x \gets \flipreinforce~\theta; \addcost~(c_1(x)); y \gets \flipreinforce~\theta; \addcost~(c_2(x, y)); \return~0\})$$
(assuming that we lift the primitive $\flipreinforce$ to be of type $\II \to W\!P~\BB$, which we can do by having it return both the Boolean value it flips and the accumulated loss $0$). The structure of the program now makes explicit that $c_1$ is added to the loss \textit{before} $y$ is sampled, and so $y$ cannot affect the value of that term. And indeed, if we apply ADEV to our modified program\textemdash using the built-in derivative of $\addcost$ described above\textemdash we recover the same lower-variance estimator that
the SCG framework yields. At the same time, we have not \textit{broken} anything else about ADEV: it still supports higher-order functions, as well as the other extensions described in this section, which together enable a broad class of estimators, some of which the SCG framework cannot compositionally express.

The dependency tracking enabled by this simple writer monad captures only sequential dependencies: everything earlier in a program can affect everything later in a program.\footnote{Storchastic~\citep{krieken2021storchastic}, a PyTorch framework for Stochastic Computation Graphs, makes a similar design decision, exposing an $\addcost$ primitive. However, because PyTorch explicitly builds a graph of Tensors, it is possible to obtain a conservative overapproximation of which nodes in particular affect the new cost being added. Our source-to-source transformation does not have the same property.} More sophisticated monads could be used to track more interesting dependency relationships, but would require users to make independence in their programs explicit, e.g. using a special combinator $p \otimes q$ to compute $\haskdo \{ x \gets p; y \gets q; \return~(x, y)\}$ in parallel, instead of writing the program sequentially with $\haskdo$. Such combinators, explicitly 
representing conditional independence relationships, do show up in existing probabilistic programming systems (e.g., \texttt{plate} and
\texttt{markov} in Pyro~\citep{bingham2019pyro}, or \texttt{Map}, \texttt{Unfold}, and \texttt{Recurse} in Gen~\citep{cusumano2019gen}),
so perhaps this is a reasonable way forward, to combine the benefits of graph-based dependency tracking with the expressive power of 
higher-order probabilistic programming. Future work could also investigate ways to automatically detect conditional independence 
relationships using ideas from information-flow analysis, or by adapting existing type-directed program slicing techniques~\citep{gorinova2021conditional}.

 
\subsection{Density-Carrying Distributions}\label{sec:densities}
So far, we have defined \textit{different} primitives whose built-in derivatives
\textit{each} implement the REINFORCE gradient estimator, but for different distributions.
Can this duplication be avoided? Can we express the REINFORCE estimation strategy as its
own primitive? Intuitively, REINFORCE applies to any distribution that we can sample from 
and that we can differentiate the log density of (modulo the dominated convergence conditions discussed 
in Section~\ref{sec:continuous}). 

Following existing work on probabilistic programming~\citep{lew2019trace}, we can add
types $D\,\sigma$ of \textit{distributions with densities} for ground types $\sigma$.
For each ground type $\sigma$, we define a reference measure $\mu_\sigma$, and our semantics 
interprets the type $\sem{D\,\sigma} = P\,\sem{\sigma} \times (\sem{\sigma} \to \RR_{\geq 0})$
as the space of \textit{pairs} of measures with density functions. Under AD, 
we have $\ad{D\,\sigma} = D\,\sigma \times (\ad{\sigma} \to \ad{\RR})$: the distribution 
is left alone, but the density function is differentiated. Our logical relation $\mathcal{R}_{D\,\tau}$
relates a parameterized distribution $p : \RR \to P\,\sem{\sigma} \times (\sem{\sigma} \to \RR_{\geq 0})$ to its derivative $dp : \RR \to (P\,\sem{\sigma} \times (\sem{\sigma} \to \RR_{\geq 0})) \times (\ad{\sigma} \to \RR \times \RR)$ if: (1) for all $\theta$, $\pi_1(p(\theta))$ has density $\pi_2(p(\theta))$ with respect to $\mu_\sigma$, (2) the density $\pi_2(p(\theta))$ is $\mu_\sigma$-almost-everywhere non-zero, (3) $\pi_1 \circ dp = p$, and (4) $(\pi_2 \circ p, \pi_2 \circ dp) \in \mathcal{R}_{\sigma \to \RR_{\geq 0}}$. In words, the density needs to match the distribution, the distribution needs to have full support (non-zero density), and the derivative of the density needs to be correct.

Using this, we can implement 
a primitive $\reinforce  : D\,\sigma \to P\,\sigma$. The semantics of $\reinforce$ is to produce a $P\,\sigma$ representing the same distribution as the $D\,\sigma$ does (formally, $\sem{\reinforce} = \pi_1$). However, the built-in derivative for the resulting $P\,\sigma$ 
uses the sampler, density, and density derivative to implement the REINFORCE estimator for
the distribution in question:

\begin{center}
\begin{algorithm}[H]
\DontPrintSemicolon
\SetKwProg{Fn}{}{:}{end}
\Fn{$\reinforce_\der(dp : D\,\sigma \times (\ad{\sigma} \to \ad{\RR}), \widetilde{dl} : \ad{\sigma} \to \deRR)$}{
$x \sim \fst(\fst(dp))$\;
$(l, \delta l) \sim \widetilde{dl}(x)$\;
$(\_, \delta d) \gets \log_\mathcal{D}~ ((\snd~dp)(x))$\;
\Return $(l, \delta d \cdot l + \delta l)$
}
\end{algorithm}  
\end{center}

Furthermore, we can follow the design of probabilistic programming languages like Gen~\citep{cusumano2019gen}, Pyro~\citep{bingham2019pyro}, and ProbTorch~\citep{stites2021learning} to provide programming constructs 
for building new values of type $D\,\sigma$ from primitives. For example, given $p : D\,\sigma_1$ and $k : \sigma_1 \to D\,\sigma_2$,
it is straightforward to create the ``dependent product measure'' $p~\otimes\!\!=~k : D\,(\sigma_1 \times \sigma_2)$ representing the
distribution that arises if $x \sim p$, $y \sim k(x)$, and $(x, y)$ is returned. (The usual monadic bind is 
more difficult to implement, since the density of the resulting program is a (generally intractable) integral 
over all possible values of $x \in \sem{\sigma_1}$. This version side-steps the problem by remembering $x$, 
so the density is just the product of $p$'s density and $k$'s density. \citet{lew2019trace} use this concatenative 
bind operation to define a \textit{graded} monad for probabilistic programs, which enables the compositional
program-like construction of probability distributions and corresponding density functions over records, lists, and 
sum types that record the choices made by programs with sequencing, looping, and branching.)

\subsection{Adapting Estimators from Storchastic}
\label{sub:storchastic}

Storchastic~\citep{krieken2021storchastic} is a PyTorch framework for 
gradient estimation on stochastic computation graphs~\citep{schulman2015gradient}.
As in ADEV, Storchastic users can choose different gradient estimators at each 
primitive sampling statement, and can add new gradient estimation strategies modularly;
in Storchastic this is done by specifying a four-tuple of a \textit{proposal}, \textit{weighting function},
\textit{gradient function}, and \textit{control variate}, satisfying certain properties.\footnote{
Note that ADEV gives a specification for a correct custom derivative for new primitives of \textit{any} type,
including higher-order primitives. Storchastic's interface, by contrast, only allows adding new primitives 
of type $P~\tau$, where $\tau$ is a ground type.
}
Gradient estimation strategies suitable for use with Storchastic can generally also 
be incorporated into ADEV via the introduction of new primitives.

For example, consider the Leave-One-Out score function estimator that~\citet{krieken2021storchastic}
give as their example method. Let $N \in \NN_{\geq 2}$, and let $p(x; \theta)$ be a probability mass function on a space $\sem{\tau}$ 
(for simplicity, we consider a discrete space, with $\ad{\tau} = \tau$, but reals would work too) 
parameterized by $\theta \in \RR$.
Then the following is a valid built-in derivative for a primitive that samples from $p$, using the leave-one-out gradient estimator:

\begin{center}
\begin{algorithm}[H]
\DontPrintSemicolon
\SetKwProg{Fn}{}{:}{end}
\Fn{${p_\texttt{LEAVE\_ONE\_OUT}}_\der(d\theta : \ad{\RR}, \widetilde{dl} : \ad{\tau} \to \deRR)$}{
\For{$i \in \{1, \dots, N\}$}{
$x_i \sim p(\cdot; \fst(d\theta))$ \;
$(l_i, \delta l_i) \sim \widetilde{dl}(x)$
}
\For{$i \in \{1, \dots, N\}$}{
$b_i \gets \frac{1}{N-1} \sum_{j \neq i} l_i$ \;
$\nabla_i \gets (\snd (\log_\mathcal{D} (p_\mathcal{D}(x_i, d\theta)))) \cdot (l_i - b_i) + \delta l_i$
}
\Return $(\frac{1}{N} \sum_{i=1}^N l_i, \frac{1}{N} \sum_{i=1}^N \nabla_i)$ \;
}
\end{algorithm}  
\end{center}

Like the $\flipenum$ example we gave in the main paper, this primitive's gradient estimator involves
evaluating the ``rest of the program'' $\widetilde{dl}$ on multiple values $x_{1:N}$. This is accomplished
in ADEV using continuations. By contrast, in Storchastic the samples are packed into a vector, and 
the rest of the program is executed on that vector, yielding a vector of losses. Storchastic's approach
may have the benefit of computing the loss on the various samples \textit{in parallel}, depending on the vector
operations supported by the user's hardware. However, it is also less robust than ADEV's continuation-based
approach. For example, Storchastic's version of this estimator will fail if the user's program samples 
from this primitive and uses the result to compute the condition of a Python $\texttt{if}$ statement.

Although the above is carried out for a specific $p$ and $N$, a higher-order primitive $\leaveoneout : \NN \to D\,\sigma \to  P\,\sigma$ can be formulated using the density-carrying types from the previous section. It behaves like $\reinforce$ (and, as its type suggests, is a drop-in replacement) but uses the multi-sample leave-one-out estimator described above.


\subsection{Differentiable Particle Filters}
\label{sub:particle-filter}

Suppose we wish to estimate the derivative (w.r.t. $\theta \in \RR$) of a high-dimensional integral $\int f_\theta(\mathbf{x}) \mu_\theta(d\mathbf{x})$ for some parameterized $\sigma$-finite measure
$\mu_\theta$ over vectors $\mathbf{x}$. 
In ADEV, we could write a probabilistic program for estimating the integral, e.g. via the use of a 
randomized algorithm like Sequential Monte Carlo, and differentiate the expected value of the program. 
However, naively applying ADEV to that estimator may yield 
a \textit{derivative} estimator that (although unbiased) has very high variance. 
This is analogous to a situation that arises 
in standard, deterministic AD, where an iterative computation for estimating 
a fixed point (e.g., Newton's method for finding the root of a function) may behave 
poorly under automatic differentiation. One benefit of recent theoretical developments
in \textit{higher-order} deterministic AD is that operations like root-finding can be
treated as higher-order primitives, and the theory can be used to guide the development
of custom built-in derivatives that are more accurate~\citep{sherman2021}.

ADEV's theoretical framework similarly provides a specification for built-in derivatives of higher-order 
primitives. As such, we can expose algorithms like sequential Monte Carlo as primitives whose built-in 
derivatives employ specialized unbiased gradient estimation strategies. As an example, consider 
the algorithm for estimating SMC gradients unbiasedly recently proposed by~\citet{scibior2021differentiable}.

To encode this algorithm in ADEV, for each ground type $\sigma$, we define a new primitive $\smc_\sigma : (\text{List}~\sigma \to \RR_{\geq 0}) \to (\sigma \to D\,\sigma) \to (\text{List}~\sigma \to \eRR) \to \NN \to \NN \to \eRR$.\footnote{There is no conceptual difficulty in extending our core language with types $\text{List}~\tau$ for lists. The AD macro operates functorially on the nil and cons constructors, just as it does on products.
Although we do not have general recursion, a \textbf{fold} operation can easily be exposed as a primitive.} In order, the arguments are:
\begin{itemize}
    \item $p : \text{List}~\sigma \to \RR_{\geq 0}$, a density function for the target measure $\mu$. We assume that restricted to lists of length $i$, $p$ is a density with respect to the product reference measure $\mu_\sigma^i$ (see previous subsection for a discussion of reference measures). The sequence of measures defined for each length $i$ constitute the sequence of target measures against which sequential Monte Carlo will be run.
    
    \item $q : \sigma \to D\,\sigma$, a transition proposal for the sequential Monte Carlo algorithm. The type $D\,\sigma$ is the density-carrying distribution type defined in the previous subsection.
    
    \item $\widetilde{f} : \text{List}~\sigma \to \eRR$, an unbiased estimator of the integrand $f$.
    
    \item $N : \NN$, the number of SMC steps to run (i.e., the dimension of the space over which to integrate).
    
    \item $K : \NN$, the number of SMC particles to use.
\end{itemize}

When run forward, $\smc$ estimates $\int_{\sem{\sigma}^N} f(\mathbf{x}) \mu(d\mathbf{x})$ by running a particle filter,
using the user-specified proposal, and weighting particles according to the user-specified target density and the proposal density that is provided as part of $q$. This yields a weighted collection of $K$ particles, each of which has a weight $w^{(j)}$ and an associated vector $\mathbf{x}^{(j)}$ (for $j \in \{1, \dots, K\}$). For each particle, we run $\widetilde{f}$ to get an unbiased estimate $\hat{f}^{(j)}$ of $f(\mathbf{x}^{(j)})$, then compute $\frac{1}{K}\sum_{j=1}^K w^{(j)} \cdot \hat{f}^{(j)}$ to get an unbiased estimate of the integral in question. 

The built-in derivative begins by running the same particle filter as in the primal computation, to arrive at a collection of weighted particles. For each particle, it runs the \textit{derivative} of $\widetilde{f}$ to obtain estimates $(\hat{f}^{(j)}, \delta\hat{f}^{(j)})$ of both $f$ and its derivative with respect to $\theta$ at $\mathbf{x}^{(j)}$. For each particle it also computes the derivative $\delta l^{(j)}$ of the log of the target density at $\mathbf{x}^{(j)}$, using $p$'s derivative. It then computes $\frac{1}{K}\sum_{j=1}^K w^{(j)} \cdot (\delta l^{(j)} \cdot \hat{f}^{(j)} + \delta\hat{f}^{(j)})$, which Theorem 1 of~\citet{scibior2021differentiable} shows is an unbiased estimate of the derivative of the integral in question.

Note although this SMC derivative estimator is wrapped in a black-box primitive, rather than being derived compositionally 
from a program implementing a particle filter, we can \textit{use} it compositionally to derive new hybrid estimators. For example, the integrand $f$ can be defined compositionally by an ADEV program as the expectation of some probabilistic process (possibly one that \textit{also} uses $\smc$!).

\subsection{Stop-Gradient}
\label{sub:stop-grad}

Many existing works on the compositional derivation of gradient estimators (e.g., DICE~\citep{foerster2018dice}) make heavy use
of the \textit{stop-gradient} operator, which in our context can be understood as a forced cast from the type $\RR$ to the type $\RR^*$ (whose derivatives are not tracked). Of course, such a cast cannot be soundly added to our language; there is no way to attach a built-in derivative to it that would satisfy our logical relation at the type $\RR \to \RR^*$. However, we can add estimators to ADEV that internally erase gradient information, so long as we validate that they are sound. As a simple example, consider the primitive $\importance : D\,\sigma \times D\,\sigma \to P\,\sigma$, which takes as input a distribution $p$ and a distribution $q$ (both with densities attached\textemdash see Section~\ref{sec:densities}), and outputs a $P\,\sigma$ representing $p$, whose built-in derivative performs
importance sampling using $q$ to estimate the derivative of an expectation with respect to $p$:

\begin{center}
\begin{algorithm}[H]
\DontPrintSemicolon
\SetKwProg{Fn}{}{:}{end}
\Fn{$\importance_\der(dp : \ad{D\,\sigma}, dq : \ad{D\,\sigma}, \widetilde{dl} : \ad{\sigma} \to \deRR)$}{
$q \gets \fst(dq)$\;
$x \sim \fst(q)$\;
$dw \gets \snd(dp)(x) \div_\mathcal{D}~((\snd~q)(x),0)$\; 
$dl \sim \widetilde{dl}(x)$\;
\Return $dw \times_\mathcal{D} dl$
}
\end{algorithm}  
\end{center}

The goal is to compute $\frac{d}{d\theta}\mathbb{E}_{x \sim q_\theta}[\frac{p_\theta(x)}{q_\theta(x)} \cdot f_\theta(x)]$. 
But note that for any parameter $\eta$, $\mathbb{E}_{x \sim q_\eta}[\frac{p_\theta(x)}{q_\eta(x)} \cdot f_\theta(x)]$ is the same
value, because the proposal distribution in importance sampling does not affect the expected value. 
Therefore, the derivative is equal to $\frac{d}{d\theta}\mathbb{E}_{x \sim q_\eta}[\frac{p_\theta(x)}{q_\eta(x)} \cdot f_\theta(x)]$ for any $\eta$, and in this expression, we can push the derivative inside the expectation (under the usual regularity conditions): $\mathbb{E}_{x \sim q_\eta}[\frac{d}{d\theta}(\frac{p_\theta(x)}{q_\eta(x)} \cdot f_\theta(x))]$. This is precisely what our built-in derivative estimates, sampling $x$, and then computing the product of the importance weight with the loss, but using dual numbers. We strip $dq$ of its gradient information, to get $q$, because we do not care about the proposal's dependence on the parameter.

When $p = q$, we recover $\reinforce$ from Section~\ref{sec:densities}.

\subsection{Gradients via Implicit Differentiation}
\label{sub:implicit-diff}

For any measure $\nu_\theta$ on $\RR$ which has a density $p(x;\theta)$, when we have access to an analytic version of the inverse Cumulative Distributive Function (CDF) $F_\theta$ of $\nu_\theta$, and if it is continuously differentiable, we can use it as a reparametrization for the REPARAM method. 
For univariate distributions, one can also obtain an alternative expression for the gradient using the CDF directly.
Let $g(\epsilon; \theta):=F_\theta^{-1}(\epsilon)$ where $\epsilon\sim \sample$. Then, using the fact $\grad_x F_\theta(x)= p(x,\theta)$, one can show that $ \grad_\theta g(\epsilon; \theta)=-\frac{\grad_\theta F(x;\theta)}{p(x;\theta)}$.
This is called implicit differentiation  \cite{figurnov2018implicit, jankowiak2018pathwise} in the literature. 
Using this fact, we can extend ADEV with new primitives:

\begin{center}
\begin{algorithm}[H]
\DontPrintSemicolon
\SetKwProg{Fn}{}{:}{end}
\Fn{${p_\texttt{IMPLICIT}}_\der(d\theta : \ad{\RR}, \widetilde{dl} : \ad{\RR} \to \deRR)$}{
$x\sim p(\fst~d\theta)$\;
$\grad \gets\big(x,-\snd ~F_\der((x,0),d\theta)\div \snd~F_\der((x,1),(\fst ~d\theta,0))\big) $\;
$dx\sim \widetilde{dl}(\grad)$ \;
\Return $dx$ \;
}
\end{algorithm}  
\end{center}

In words, the derivative will sample an $x$ from the distribution $p(-;\theta)$, compute the gradient part $\delta x= -\frac{\grad_\theta F(x;\theta)}{p(x;\theta)}$, and give the dual number $(x,\delta x)$ to the loss. We can see that this follows a similar reparametrization strategy to our $\normalreparam$ primitive, but the sampling from the reparametrized $\sample$ distribution is "implicit".
Following \cite{figurnov2018implicit, jankowiak2018pathwise}, one such instance is the Gamma distribution $\gammaimplicit:\posreal\times\posreal\to\pmonad \RR$.
From the Gamma distribution, it is relatively easy to add the Beta and Dirichlet distributions.
Other examples include the von Mises and Student’s  distributions, as well as univariate mixtures.

More generally, similarly to the type $D~\sigma$, we can a new type $C~\RR$ of distributions with densities and CDFs. Our semantics would interpret the new type as $\sem{C~\RR}= \pmonad\RR\times(\RR\to\RR_{\geq 0})\times (\RR\to \RR_{\geq 0})$. Under AD, we have $\ad{C~\RR}= C~\RR\times (\ad{\RR}\to \ad{R})\times (\ad{\RR}\to \ad{R})$. Similarly to $D~\sigma$, our logical relation $\mathcal{R}_{C\RR}$ relates a parametrized distribution $p : \RR \to P\,\RR \times (\RR \to \RR_{\geq 0})\times (\RR \to \RR_{\geq 0})$ to its derivative $dp : \RR \to (P\,\RR \times (\RR \to \RR_{\geq 0})\times (\RR \to \RR_{\geq 0})) \times (\ad{\RR} \to \RR \times \RR) \times (\ad{\RR} \to \RR \times \RR)$ if: (1) for all $\theta$, $\pi_1(p(\theta))$ has density $\pi_2(p(\theta))$ with respect to $\lambda$, (2) $\pi_3(p(\theta))$ is the CDF of $\pi_1(p(\theta))$, (3) the density $\pi_2(p(\theta))$ is $\mu_\RR$-almost-everywhere non-zero, (4) $\pi_1 \circ dp = p$, (5) $(\pi_2 \circ p, \pi_2 \circ dp) \in \mathcal{R}_{\RR \to \RR_{\geq 0}}$, (6)  $(\pi_3 \circ p, \pi_3 \circ dp) \in \mathcal{R}_{\RR \to \RR_{\geq 0}}$. That is, the density needs to match the distribution, the CDF should match the density, the distribution needs to have full support (non-zero density), and the derivative of the density and the CDF need to be correct. 
Using this, we can wrap-up the construction above as a higher-order primitive $\implicitdiff: C~\RR\to \pmonad\RR$.

\subsection{Gradients via Weak Derivatives}
\label{sub:weak-deriv}

Another distinct estimation strategy from the literature uses the weak derivative method \cite{pflug1996optimization,heidergott2000measure}. The idea is that the gradient of a probability density $\grad_\theta p(x;\theta)$ will not be a probability density in general, but can be decomposed as a weighted difference of two probability densities $\grad_\theta p(x;\theta)= c^+_\theta p^+(x;\theta)-c^-_\theta p^-(x;\theta)$. In fact, it is always possible to choose $c^+_\theta = c^-_\theta$, and such a triple $(c_\theta,p^+,p^-)$ is called a \textit{weak derivative} of $p$.

We then derive an estimator for the gradient of the expectation of a loss $l$ under $p$ by estimating the expectation of $l$ under $p^+$, under $p^-$, subtracting the first one to the second, and multiplying by the normalizing factor $c_\theta$.

Based on this estimation strategy, we can extend ADEV with new primitives. As an example, we look at a Poisson distribution $\mathcal{P}$, for which the weak derivative can be written as $(1,\mathcal{P}+1,\mathcal{P})$, where $\mathcal{P}+1$ is a notation for pushforward measure of $\mathcal{P}$ by the function $x\mapsto x+1$.

\begin{center}
\begin{algorithm}[H]
\DontPrintSemicolon
\SetKwProg{Fn}{}{:}{end}
\Fn{$\poissonweak_\der(d\theta : \ad{\posreal}, \widetilde{dl} : \NN \to \deRR)$}{
$(\theta, \delta \theta) \gets d\theta$\;
$x^-\sim \poissonweak(\theta)$\;
$x^+ \gets x^-+1$\; 
$(y^+,\_)\sim \widetilde{dl}(x^+)$\;
$(y^-,\delta y^-)\sim \widetilde{dl}(x^-)$\;
$\grad\text{est} \gets y^+-y^-$\;
\Return $(y^-,\delta y^-+(\grad\text{est} \times \delta\theta))$ \;
}
\end{algorithm}  
\end{center}

We have used correlated samples $(x_1, x_2)$ as it usually lowers variance, and we used the fact $p^-=p$ in this specific situation.
We could also write a more general version that estimates the expectation under the measures $p^+, p^-$ using $N$ samples instead of just one. 
This estimation strategy is quite general as, similarly to the score estimator, it does not require the loss to be differentiable w.r.t. its argument. Therefore, the local-domination property allowing the exchange of integral and derivative is still sufficient to ensure the correctness of primitives using the weak derivative estimator, as long as we have a valid weak-derivative triple.

\subsection{Hybrid Estimator: Gradient through Rejection Sampling}\label{sub:reject}

For several distributions, we don't have access to analytic CDFs or inverse CDFs, or these can be computationally expensive. In such a case, a simple rejection-sampling algorithm can be a good strategy for sampling from the desired distribution. The problem is to be able to somehow differentiate through the rejection sampler. \citet{naesseth2017reparameterization}'s solution, which we now recall, can be added to ADEV as a new higher-order primitive $\reparamreject: D~\sigma\times (\sigma\to \sigma) \times D~\sigma\times D~\sigma \times\posreal \to \pmonad\sigma$.
The arguments are to be interpreted as follows. $\reparamreject((S,s),h,(P,p),(Q,q),M)$ assumes that 
(1) $H_*S=Q$ and (2) $p\leq M\times q$. The distribution of interest is $P$. The interpretation of $\reparamreject$ is given as follows.

\begin{center}
\begin{algorithm}[H]
\DontPrintSemicolon
\SetKwProg{Fn}{}{:}{end}
\Fn{$\reparamreject(s:D~\sigma, h:\sigma\to\sigma, p:D~\sigma,q:D~\sigma, M:\posreal^*)$}{
$i\gets 0$\;
\Repeat{
$u_i < \frac{(\snd ~p)(h(\epsilon_i))}{M\times(\snd~q)(h(\epsilon_i))}$
}{
$i \gets i+1$\;
$\epsilon_i\sim \fst~s$\;
$u_i\sim\sample$\;
}
\Return $\epsilon_i$\;
}
\end{algorithm}  
\end{center}

The derivative of this new primitive is given as follows.

\begin{center}
\begin{algorithm}[H]
\DontPrintSemicolon
\SetKwProg{Fn}{}{:}{end}
\Fn{$\reparamreject_\der(ds:\ad{D~\sigma}, dh:\ad{\sigma}\to\ad{\sigma}, dp:\ad{D~\sigma}, dq:\ad{D~\sigma}, M:\posreal^*,  \widetilde{dl} : \ad{\sigma} \to \deRR)$}{
$h\gets \lambda x.\fst~dh(x,0)$\;
$\pi \gets \reparamreject(\fst~ds, h, \fst~dp,\fst~dq, M)$\;
$(s_{density},p_{density},q_{density})\gets \big(\snd(\fst(ds)),\snd(\fst(dp)),\snd(\fst(dq))\big)$\;
$\pi_{density}\gets \lambda \epsilon. s_{density}(\epsilon) \times \frac{p_{density}(h(\epsilon))}{q_{density}(h(\epsilon))}$\;
$d \pi_{density}\gets \lambda d\epsilon. \big((\snd~ds)(d\epsilon)\big) \times_\der \Big(\big((\snd~ dp)(dh(d\epsilon))\big) \div_\der \big((\snd ~dq)(dh(d\epsilon))\big)\Big) $\;
$d\pi \gets \big((\pi,\pi_{density}), d \pi_{density}
\big)$ \;
$\text{new\_loss} \gets \lambda dx. \widetilde{dl}(dh(dx))$\;
$dx\sim \reinforce_\der(d\pi, \text{new\_loss})$\;
\Return $dx$\;
}
\end{algorithm}
\end{center}

The idea is that one can show that $\mathbb{E}_{p(\theta)}[l] =\mathbb{E}_{\pi(\theta)}[l\circ h_\theta]$ where $\pi(\theta)$ is the posterior distribution of the rejection sampling algorithm. We then obtain an unbiased estimate of the gradient of the LHS by using the REINFORCE estimator on the RHS. Crucially, $\pi$ is normalized and \citet{naesseth2017reparameterization} were able to compute a simple analytic form for the density of $\pi(\theta)$, which we need for the REINFORCE estimator. 
As an example, \citet{naesseth2017reparameterization} showed how one can recover an efficient rejection sampler for the Gamma distribution from \citep{marsaglia2000simple}.

Finally, we cannot directly show the correctness of $\reparamreject$ in ADEV as such, because the semantics and logical relations are not checking when conditions (1) and (2) are valid. This can be fixed by a similar technique as used before with $D~\sigma$ and $C~\RR$ by defining a new type $F~\sigma$ such that  $\sem{F~\sigma} =\sem{D~\sigma \times (\sigma\to\sigma)\times D~\sigma \times D~\sigma \times \posreal^*}$. $F~\sigma$ is then given a new logical relation $\mathcal{R}_{F~\sigma}$ that would additionally encompass conditions (1) and (2) above. When constructing a term of type $F~\sigma$, one would simply have to ensure that these extra conditions are satisfied.  
\section{Additional figures}
\label{appx:figures}

Figure~\ref{fig:03primitives} presents the interpretation of the primitives from Section~\ref{sec:combinator} and of their AD-translation.
Similarly, Figure~\ref{fig:probabilistic_semantics} presents the interpretation of the primitives from Section~\ref{sec:continuous} and of their AD-translation.
Figure~\ref{fig:verif_constants} presents the interpretation of the new component of the AD translation of our primitives appearing in Sections~\ref{sec:continuous} and \ref{sec:general}.
Figure~\ref{fig:log_rel_recap} presents our logical relations at all types.


\begin{figure}[t]
\fbox{
  \parbox{.97\textwidth}{
  \begin{center}
      
  Semantics of types:
  
  \begin{tabular}{ll}
      $\sem{\eRR}$ &= $\{\mu~|~\mu\text{ a probability measure on }(\RR,\mathcal{B}(\RR))\}$ \\
      $\sem{\eRR_\der}$ &= $\{\mu~|~\mu\text{ a probability measure on }(\RR\times\RR,\mathcal{B}(\RR\times \RR))\}$
  \end{tabular}
  \vspace{.2cm}
  
  Semantics of primitives:
   \end{center} 
     \footnotesize{
\begin{tabular}{llll}
    \begin{minipage}{.22\linewidth}
    \hspace{-3mm}
     \begin{algorithm}[H]
\DontPrintSemicolon
\SetKwProg{Fn}{}{:}{end}
\Fn{$\fst_*(d\mu:\deRR)$}{
 $(x,\delta x)\sim d\mu$\; 
 \Return $x$
}
\end{algorithm}  
\end{minipage} 
& 
\begin{minipage}{.22\linewidth}
     \begin{algorithm}[H]
\DontPrintSemicolon
\SetKwProg{Fn}{}{:}{end}
\Fn{$\snd_*(d\mu:\deRR)$}{
 $(x,\delta x)\sim d\mu$\; 
 \Return $\delta x$
}
\end{algorithm}  
\end{minipage} 
     & 
\begin{minipage}{.24\linewidth}
     \begin{algorithm}[H]
\DontPrintSemicolon
\SetKwProg{Fn}{}{:}{end}
\Fn{$\exact(x:\RR)$}{
 \Return $x$
}
\end{algorithm}  
\begin{algorithm}[H]
\DontPrintSemicolon
\SetKwProg{Fn}{}{:}{end}
\Fn{$\fst~\exact_\der(x:\RR\times\RR)$}{
 \Return $x$
}
\end{algorithm}  
\end{minipage}
&
\begin{minipage}{.25\linewidth}
\begin{algorithm}[H]
\DontPrintSemicolon
\SetKwProg{Fn}{}{:}{end}
\Fn{$\etimes(\widetilde{x}:\eRR,\widetilde{y}:\eRR)$}{
$r\sim\widetilde{x}$\;
$s\sim\widetilde{y},$\; 
\Return $r\times s$
}
\end{algorithm}  
\end{minipage}
\end{tabular}
\begin{tabular}{lll}
\hline\\
\hspace{-.4cm}
\begin{minipage}{.45\linewidth}
     \begin{algorithm}[H]
\DontPrintSemicolon
\SetKwProg{Fn}{}{:}{end}
\Fn{$\minibatch(M : \NN, m : \NN, f : \NN \to \RR)$}{
\uIf{$M=0$}{
\Return 0\;
}
\uElseIf{$m=0$}{
\Return $\sum_{i=1}^Mf(i)$\;
}
\Else{
\For{$j=1$ to $m$}{
$i_j\sim Unif(\{1,\ldots,M\})$\;
}
\Return $\frac{M}{m}\sum_{i=1}^mf(i_j)$\;
}
}
\end{algorithm}  
\end{minipage}
     & 
\hspace{-2.0cm}
\begin{minipage}{.4\linewidth}
\begin{algorithm}[H]
\DontPrintSemicolon
\SetKwProg{Fn}{}{:}{end}
\Fn{$\fst~ \minibatch_\der(M : \NN, m : \NN,$\;
$df : \NN \to \RR\times\RR)$}{
\uIf{$M=0$}{
\Return (0,0)\;
}
\uElseIf{$m=0$}{
\Return $\sum_{i=1}^M df(i)$\;
}
\Else{
\For{$j=1$ to $m$}{
$i_j\sim Unif(\{1,\ldots,M\})$\;
}
\Return $\frac{M}{m}\sum_{i=1}^mdf(i_j)$\;
}
}
\end{algorithm}     
\end{minipage}
&
\hspace{-1.5cm}

\begin{minipage}{.33\linewidth}
     \begin{algorithm}[H]
\DontPrintSemicolon
\SetKwProg{Fn}{}{:}{end}
\Fn{$\eexp_\der(\widetilde{dx}:= \fst~dx:\deRR)$}{
  $(r,\delta r)\sim \widetilde{dx}$\;
  $s\sim \eexp(\fst_*\widetilde{dx})$\;
  \Return $(s,\delta r\times s)$ 
}
\end{algorithm}  
\begin{algorithm}[H]
\DontPrintSemicolon
\SetKwProg{Fn}{}{:}{end}
\Fn{$\etimes_\der(\widetilde{dx}:=\fst~dx:\deRR,\widetilde{dy}:= \fst~dy:\deRR)$}{
$dr\sim \widetilde{dx}$\;
$ds\sim \widetilde{dy},$\; 
\Return $dr\times_\der ds$
}
\end{algorithm}  
\end{minipage}
\end{tabular}
\begin{tabular}{lll}
\hline \\
\hspace{-.3cm}
\begin{minipage}{.25\linewidth}
     \begin{algorithm}[H]
\DontPrintSemicolon
\SetKwProg{Fn}{}{:}{end}
\Fn{$\eplus(\widetilde{x}:\eRR,\widetilde{y}:\eRR)$}{
$b\sim flip(0.5)$\;
\eIf{$b$}{
$r\sim \widetilde{x}$\;
\Return $2\times r$}{
$r\sim \widetilde{y}$\;
\Return $2\times r$
}
}
\end{algorithm}  
\end{minipage}
&
\hspace{-.8cm}
\begin{minipage}{.44\linewidth}
     \begin{algorithm}[H]
\DontPrintSemicolon
\SetKwProg{Fn}{}{:}{end}
\Fn{$\fst~\eplus_\der(\widetilde{dx}:=\fst~dx:\deRR,\widetilde{dy}:= \fst~dy:\deRR)$}{
$b\sim flip(0.5)$\;
\eIf{$b$}{
$dr\sim \widetilde{dx}$\;
\Return $(2,0)\times_\der dr$}{
$dr\sim \widetilde{dy}$\;
\Return $(2,0)\times_\der dr$
}
}
\end{algorithm}  
\end{minipage}
&
\hspace{-.8cm}
\begin{minipage}{.34\linewidth}
     \begin{algorithm}[H]
\DontPrintSemicolon
\SetKwProg{Fn}{}{:}{end}
\Fn{$\eexp(\widetilde{x}:\eRR)$}{
 $\lambda = 2$\;
$n\sim Poisson(\lambda)$\;
\For{$i=1$ to $n$}{
$x_i\sim \widetilde{x}$\;
}
\Return $exp(\lambda)\times\prod_{i=1}^n\frac{x_i}{\lambda}$
}
\end{algorithm}  
\end{minipage}
\end{tabular}
 }}}
\caption{Semantics of the Combinator DSL, including built-in derivatives for each primitive. We only give the "dual-number" part of the derivative, hence the $\fst$ appearing in some definitions and on some of their arguments. }
\label{fig:03primitives}
\end{figure}
 \begin{figure}[t]
\fbox{
  \parbox{.97\textwidth}{
  \small{
 \begin{tabular}{cc}
 \hspace{-.2cm}
\begin{minipage}{.5\linewidth}
     \begin{algorithm}[H]
\DontPrintSemicolon
\SetKwProg{Fn}{}{:}{end}
\Fn{$\fst~\geometricreinforce_\der(dp : \II\times \RR, \widetilde{dl}: \NN\to\deRR \times (S\to \RR\times \RR))$}{
  $n\sim
    \text{Geometric}(\fst~dp)$ \;
      $(l_1,l_2)\sim \fst~\widetilde{dl}~n  $  \;
      $dr \gets \pows_\der ((1,0)-_\der dp)~ n$  \;
      $dlp \gets\logs_\der~(dr \times_\der dp)$  \;
      $\delta logpdf \gets \snd~dlp$ \;
      \Return $(l_1,l_2+l_1\times \delta logpdf)$
}
\end{algorithm}  
\end{minipage}
     &
     \hspace{-.3cm}
\begin{minipage}{.5\linewidth}
\begin{algorithm}[H]
\DontPrintSemicolon
\SetKwProg{Fn}{}{:}{end}
\Fn{$\fst~\normalreinforce_\der(d\mu : \RR\times \RR, d\sigma : \RR\times \RR,$\;
$\widetilde{dl}: \RR^*\to\deRR \times (S\to \RR\times \RR))$}{
$x\sim \mathcal{N}(\fst~d\mu,\fst~d\sigma)$ \;
        $   dx \gets (x,0)  $ \;
        $(l_1,l_2)\sim~\fst~(\widetilde{dl}~x)$ \;
        $  dlp_1 \gets (-1,0) \times_\der \logs_\der d\sigma  $ \;
        $  d\epsilon \gets \pows~((dx -_\der d\mu)\div_\der d\sigma)~2$ \;
        $  dlp_2 \gets (0.5, 0) \times_\der d\epsilon  $ \;
        $  \delta logpdf \gets  \snd~(dlp_1-_\der dlp_2)   $ \;
        \Return $(l_1,l_2+l_1\times \delta logpdf)$
}
\end{algorithm}     
\end{minipage}
\end{tabular}

\begin{center}
 \begin{tabular}{cc}
 \hline \\
  \hspace{-2.8cm}
 \begin{minipage}{.5\linewidth}
     \begin{algorithm}[H]
\DontPrintSemicolon
\SetKwProg{Fn}{}{:}{end}
\Fn{$\fst~\normalreparam_\der(d\mu : \RR\times \RR, d\sigma : \RR\times \RR, \widetilde{dl}: \RR\times\RR\to\deRR \times (S\to \RR\times \RR))$}{
$\epsilon\sim \mathcal{N}(0,1)$ \;
$dx\sim \fst~\widetilde{dl}\left((\epsilon,0) ~\times_\der ~d\sigma~ +_\der ~d\mu\right)$ \;
\Return $dx$
}
\end{algorithm}  
\end{minipage}
&
 \begin{minipage}{.5\linewidth}
     \begin{algorithm}[H]
\DontPrintSemicolon
\SetKwProg{Fn}{}{:}{end}
\Fn{$\fst~\sample_\der(\widetilde{dl}: \II^* \to\deRR \times (S\to \RR\times \RR))$}{
$x\sim \sample$ \;
$dy\sim \fst~(\widetilde{dl}(x))$ \;
\Return $dy$
}
\end{algorithm}  
\end{minipage}
\hspace{-3cm}
\end{tabular}
\end{center}
 }}}
\caption{Built-in derivatives for our new probabilistic primitives. Here, we only give the dual-number translation of those primitives, not the extra witness produced by the final \ad{-} translation.}
\label{fig:probabilistic_semantics}
\end{figure}
\begin{figure}[t]
\fbox{
  \parbox{.97\textwidth}{
  \small{

   \begin{tabular}{cc}
\begin{minipage}{.5\linewidth}
     \begin{algorithm}[H]
\DontPrintSemicolon
\SetKwProg{Fn}{}{:}{end}
\Fn{$\snd~\flipenum_\der(dp : \II\times \RR, dl: \BB\to \deRR \times (S\to \RR\times\RR))$}{
$\lambda s.$\;
 \Return $((\snd~(dl~ \textbf{True}))~s) \times_\der dp + ((\snd~(dl~ \textbf{False}))~s) \times_\der ((1,0) -_\der dp)$
}
\end{algorithm}  
\end{minipage}
     & 
     \hspace{-1cm}
\begin{minipage}{.55\linewidth}
\begin{algorithm}[H]
\DontPrintSemicolon
\SetKwProg{Fn}{}{:}{end}
\Fn{$\snd~\flipreinforce_\der(dp : \II\times \RR, dl: \BB\to \deRR \times (S\to \RR\times\RR))$}{
$\lambda s.$\;
 \Return $((\snd~(dl~ \textbf{True}))~s) \times_\der dp +_\der ((\snd~(dl~ \textbf{False}))~s) \times_\der ((1,0) -_\der dp)$
}
\end{algorithm}     
\end{minipage}
\end{tabular}

 \begin{tabular}{cc}
 \hline \\
\begin{minipage}{.45\linewidth}
     \begin{algorithm}[H]
\DontPrintSemicolon
\SetKwProg{Fn}{}{:}{end}
\Fn{$\snd~\geometricreinforce_\der(dp : \RR\times \RR,$\; 
$dl: \NN\to \deRR \times (S\to \RR\times\RR))$}{
 $\lambda s.$\;
  $(s_1, s_2) \gets \splits ~s$\;
  \eIf{$s_1\geq 0$}{
  $n \gets \textbf{floor} ~s_1$\;
  \Return $geom\_pdf(n; ~dp) \times_\der ((\snd ~(dl~ n)) ~s_2)$\;}{
  \Return $(0,0)$}
}
\end{algorithm}  
\end{minipage}
     & 
\begin{minipage}{.5\linewidth}
\begin{algorithm}[H]
\DontPrintSemicolon
\SetKwProg{Fn}{}{:}{end}
\Fn{$\snd~\normalreinforce_\der(d\mu : \RR\times \RR, d\sigma : \posreal\times \RR$,~
$dl: \RR^* \to \deRR \times (S\to \RR\times\RR))$}{
$\lambda s.$\;
  $(s_1, s_2) \gets \splits ~s$\;
  \Return $\mathcal{N}_\der(\forget{s_1}; ~d\mu, d\sigma) \times_\der ((\snd~(dl~ \forget{s_1}))~ s_2)$
}
\end{algorithm}     
\end{minipage}
\end{tabular}

\begin{center}
 \begin{tabular}{cc}
 \hline \\
 \hspace{-.4cm}
 \begin{minipage}{.5\linewidth}
     \begin{algorithm}[H]
\DontPrintSemicolon
\SetKwProg{Fn}{}{:}{end}
\Fn{$\snd~\normalreparam_\der(d\mu : \RR\times \RR, d\sigma : \posreal\times \RR$, 
$dl: \RR\times\RR\to \deRR\times (S\to \RR\times\RR))$}{
$\lambda s.$\;
  $(s_1, s_2) \gets \splits ~s$\;
  \Return $\mathcal{N}_\der(\forget{s_1}; ~0, 1) \times_\der ((\snd~(dl ~(\forget{s_1} \times_\der d\sigma +_\der d\mu)))~ s_2)$
}
\end{algorithm}  
\end{minipage}
& 
 \hspace{-.7cm}
 \begin{minipage}{.5\linewidth}
     \begin{algorithm}[H]
\DontPrintSemicolon
\SetKwProg{Fn}{}{:}{end}
\Fn{$\snd~\sample_\der(dl: \II^* \to \deRR\times (S\to \RR\times\RR))$}{
$\lambda s.$\; 
$(s_1, s_2) \gets \splits~ s$\;
\eIf{$s_1 \not\in (0, 1)$}{
\Return $(0, 0)$\;}{
\Return $((\snd~(dl ~\forget{s_1})) ~s_2)$\;}
}
\end{algorithm}  
\end{minipage}
\hspace{-1cm}
\end{tabular}
\end{center}
 }}}
\caption{The revised translation of our probabilistic primitives returns a pair. The first component is given in the 2 figures above. Here we only give the second component, the witness for checking the weak domination property. $\splits:S\to S\times S$ produces 2 random seeds out of 1 and $\forget{-}:S\to\RR$ is just an inclusion. Semantically, $S$ will be the measurable space $(\RR,\mathcal{B}(\RR))$ and $\sem{\splits}$ will be any measurable isomorphism between $\RR$ and $\RR\times \RR$.}
\label{fig:verif_constants}
\end{figure}





\begin{figure}[tb]
\small{
\fbox{
  \parbox{.97\textwidth}{
  \begin{align*}
      \rel_\RR &= \big\{(f : \RR \to \RR, g : \RR \to \RR \times \RR) \mid f \text{ differentiable} \wedge \forall \theta \in \RR. g(\theta) = (f(\theta), f'(\theta))\big\} \\
      \rel_{\RR_{>0}} &= \big\{(f : \RR \to \RR_{>0}, g : \RR \to \RR_{>0} \times \RR) \mid (\iota_\RR \circ f, \langle \iota_\RR, id\rangle \circ g) \in \rel_\RR\big\} \\
      \rel_{\II} &= \big\{(f : \RR \to \II, g : \RR \to \II \times \RR) \mid (\iota_\RR \circ f, \langle \iota_\RR, id\rangle \circ g) \in \rel_\RR\big\} \\
      \rel_\NN &= \big\{(f : \RR \to \NN, g : \RR \to \NN) \mid f \text{ is constant } \wedge f = g\big\} \\
      \rel_{\tau_1 \times \tau_2} &= \big\{(f : \RR \to \tau_1 \times \tau_2, g : \RR \to \ad{\tau_1} \times \ad{\tau_2}) \mid\\ 
      &\qquad(\pi_1 \circ f, \pi_1 \circ g) \in \rel_{\tau_1} \wedge (\pi_2 \circ f, \pi_2 \circ g) \in \rel_{\tau_2}\big\} \\
      \rel_{\tau_1 \to \tau_2} &= \big\{(f : \RR \to \tau_1 \to \tau_2, g : \RR \to \ad{\tau_1 \to \tau_2}) \mid \\
      &\qquad\forall (j, k) \in \rel_{\tau_1}. (\lambda r. f(r)(j(r)), \lambda r. g(r)(k(r))) \in \rel_{\tau_2}\big\} \\
      \rel_\BB &= \big\{(f : \RR \to \BB, g : \RR \to \BB) \mid f \text{ is constant } \wedge f = g\big\} \\
        \rel_{\pmonad ~\type} &= \big\{(f:\RR\to \sem{\pmonad~\type},g:\RR\to (\sem{\ad{\type}}\to \deRR)\to \deRR) \mid \\
        &\qquad (\lambda \theta.\lambda \widetilde{l}:\sem{\tau}\to\eRR.\sqint \widetilde{l}(x)f(\theta)(dx),g)\in \rel_{(\tau\to\eRR)\to\eRR}\big\}\\
        \rel_{\eRR} &= \big\{(f:\RR\to\eRR, g:\RR\to(\deRR\times (S\to\RR\times \RR)))\mid h_i := \lambda \theta. \lambda s. \pi_i((\pi_2\circ g)(\theta)(s))\\
      &\qquad\wedge \forall \theta.\int_{\RR} h_1(\theta)(s)ds = \mathbb{E}_{x\sim f(\theta)}[x] = \mathbb{E}_{x\sim{\pi_1}_* (\pi_1 \circ g)(\theta)}[x]\\
      &\qquad\wedge \forall \theta.\int_{\RR} h_2(\theta)(s) ds = \mathbb{E}_{x\sim{\pi_2}_* (\pi_1 \circ g)(\theta)}[x] \\
      &\qquad\wedge (\lambda \theta. \lambda s. h_1(\theta)(s), \lambda \theta. \lambda s. (h_1, h_2)(\theta)(s)) \in \rel_{S \to \RR} \big\} \\
    \rel_{\mathbb{K}^*} &= \big\{(f : \RR \to \sem{\mathbb{K}^*}, g : \RR \to \sem{\mathbb{K}^*}) \mid f \text{ is constant } \wedge f = g\big\}
  \end{align*}
  }}
    \vspace{-2mm}
    }
\caption{Definition of the dual-number logical relation at each type.}
\label{fig:log_rel_recap}
\end{figure}

\section{Quasi Borel spaces and logical relations}
\label{appx:qbs_logrel}

We give the semantics of our language in the category of Quasi-Borel spaces \citep{heunen2017convenient}, and then present a categorical view on our logical relations argument. 

\subsection{Quasi-Borel Spaces}

\textbf{Definition and basic properties.}
A QBS $(X,\plots_X)$ is a pair of a set $X$ and a set $\plots_X \subseteq \RR\to X$ of so-called random elements, satisfying 3 conditions:
\begin{enumerate}
    \item Constant functions are random elements
    \item If $f:\RR\to X$ is a random element and $g:\RR\to\RR$ is measurable, then $g;f$ is a random element
    \item A countable family of random elements $f_i:A_i\to X$ on a partition $\{A_i\}$ of $\RR$ gives a random element $f:\RR\to X, x\in A_i\mapsto f_i(x)$.
\end{enumerate}

A morphism $f:(X,\plots_X)\to (Y,\plots_Y)$ between two QBS is a function $f:X\to Y$ such that for all $g\in\plots_X$, $g;f\in\plots_Y$. We write $QBS(X,Y)$ for the set of QBS-morphisms between $(X,\plots_X)$ and $(Y,\plots_Y)$.

QBS essentially inherit all of the nice properties of sets: closure under products, coproducts, function spaces, by being the usual construction on the underlying sets and an appropriate one on the sets of random elements. 

For instance, we have $\plots_{X\times Y}\{(f,g)\mid f\in\plots_X,g\in\plots_Y\}$ and $\plots_{X\to Y}=\{f:\RR\to X\times Y\mid \lambda (r,x). f(r)(x) \in QBS(\RR\times X,Y)\}$.

Abstractly, this is a consequence of QBS forming a category of concrete sheaves. 
As such, it is a Grothendieck quasitopos, and therefore Cartesian-closed, complete and cocomplete.

QBS also enjoys the very desirable property of supporting a commutative probability monad $P$ \citep{heunen2017convenient} which allows us to interpret our monadic constructs and primitives.

\textbf{Semantics of our language.} The new interpretation follows an evident analogue in QBS of our previous Set-interpretation.
In more detail, we have
\begin{align*}
    \sem{\RR}&=(\RR,\{f:\RR\to\RR\mid f\text{ measurable}\}) \\
    \sem{\NN} &= (\NN, \{f:\RR\to\NN \text{ measurable}\}) \\
    \sem{\BB} &= (1+1, \{f:\RR\to(1+1) \text{ measurable}\}) \\
    \sem{\type_1\times \type_2} &=\sem{\type_1}\times \sem{\type_2} \\
    \sem{\type_1\to \type_2} &=\sem{\type_1}\to \sem{\type_2} \\
    \sem{\pmonad \type}&=P~\sem{\type} \\
\end{align*}
The deterministic primitives are interpreted as the standard measurable functions, and more generally the inductive interpretation follows the exact same structure as the Set-interpretation.

\subsection{Logical relations, categorically}

We now give a more categorical approach to our logical relations argument, in the same vein as recent work on categorical glueing for logical relations \citep{huot2020correctness,katsumata2013relating,vakar2019domain, mitchell1992notes}.

Let $\SubSet$ be the following category. An object is a pair of sets $(A,B)$ such that $A\subseteq B$, and a morphism $f:(A,B)\to (C,D)$ is a function $f:B\to D$ such that $f(A)\subseteq C$. The projection to the second component induces a functor $\pi:\SubSet\to\Set$ that is a fibration for logical relations in the sense of  \citep{katsumata2013relating}.
We define the functor $F:\QBS\times\QBS\to\Set$ as $F((X,Y)):= \QBS(\RR,X\times Y)$ where $\QBS(\RR,X\times Y)$ is the underlying set of functions.

As $\QBS$ is a bi-Cartesian closed category and $F$ is product preserving, the pullback of $\pi$ along $F$ induces another fibration for logical relations $p:\Gl\to \QBS\times\QBS$. An object in $\Gl$ is a triple $(X,Y,R)$ where $R$ is a subset of the set $\QBS(\RR,X\times Y)$. 
A morphism $(X_1,Y_1,R)\to (X_2,Y_2,S)$ in $\Gl$ is a pair of QBS-morphisms $f:X_1\to X_2,g:Y_1\to Y_2$ such that for all $h\in R$, we have $h;(f\times g)\in S$. 

We now interpret our language in $\QBS\times\QBS$ with $\sem{\type}_{new}=\sem{\type}\times \sem{\ad{\type}}$ and $\sem{\ter}_{new}=\sem{\ter}\times \sem{\ad{\ter}}$.
We consider the monad $P\times (((-)\Rightarrow \deRR)\Rightarrow \deRR)$ on $\QBS\times\QBS$. 
We are now set up to use the main theorem from \citep{katsumata2013relating}. 
To do so, we chose the interpretation of our base types $G$ in $\Gl$ to be $(\sem{G},\sem{\ad{G}},\rel_G)$, and the semantics of $\pmonad~G$ to be $(\sem{\pmonad G},\sem{\ad{\pmonad G}},\rel_{\pmonad G})$. We easily check that the return of the monad on $\QBS\times\QBS$ has a lift in $\Gl$ at all base types.
Finally, it only remains to show that every primitive $c$ has a lift from $\QBS\times\QBS$ to $\Gl$. 
This part is as described in the main body of the paper. 
In particular, it fails for primitives sampling from continuous distributions if we don't revise $\rel$ as in Section~\ref{sec:continuous}.
At this point, we also need to consider the revised translation for $\ad{-}$ that tracks the witness for which weak-domination has to be checked.
Once we have shown that every primitive preserves the logical relation, we recover the fundamental lemma of logical relations as a direct corollary of the main theorem from \citep{katsumata2013relating},

\end{document}